\title{\vspace{-2em}Growth Functions, Rates and Classes of String-Based Multiway Systems}
\author[1]{Yorick Zeschke}
\affil[1]{Junior Research Affiliate, Wolfram Physics Project}
\date{\today}
\def\R{\mathbb{R}}
\def\N{\mathbb{N}}
\def\d{\, \mathrm{d}}
\DeclareMathSymbol{\mlq}{\mathord}{operators}{``}
\DeclareMathSymbol{\mrq}{\mathord}{operators}{`'}
\renewcommand{\phi}{\varphi}
\renewcommand{\epsilon}{\varepsilon}
\newcommand{\ceil}[1]{\left \lceil #1 \right \rceil}
\newcommand{\floor}[1]{\left \lfloor #1 \right \rfloor}
\newcommand{\diffrac}[2]{\frac{\mathrm{d}#1}{\mathrm{d}#2}}
\newtheorem{theorem}{Theorem}[section]
\newtheorem{corollary}{Corollary}[theorem]
\theoremstyle{definition}
\newtheorem{lemma}{Lemma}[section]
\newtheorem{definition}{Definition}[section]
\begin{document}

\maketitle

\begin{abstract}
    In context of the Wolfram Physics Project \cite{wppWebsite}, a certain class of abstract rewrite systems known as \enquote{multiway systems} have played an important role in discrete models of spacetime and quantum mechanics. However, as abstract mathematical entities, these rewrite systems are interesting in their own right. This paper undertakes the effort to establish computational properties of multiway systems. Specifically, we investigate growth rates and growth classes of string-based multiway systems. After introducing the concepts of \enquote{growth functions}, \enquote{growth rates} and \enquote{growth classes} to quantify a system’s state-space growth over \enquote{time} (successive steps of evolution) on different levels of precision, we use them to show that multiway systems can, in a specific sense, grow slower than all computable functions while never exceeding the growth rate of exponential functions. In addition, we start developing a classification scheme for multiway systems based on their growth class. Furthermore, we find that multiway growth functions are not trivially regular but instead \enquote{computationally diverse}, meaning that they are capable of computing or approximating various commonly encountered mathematical functions. We discuss several implications of these properties as well as their physical relevance. Apart from that, we present and exemplify methods for explicitly constructing multiway systems to yield desired growth functions.
\end{abstract}

\tableofcontents

\section{Introduction and Overview}

In 2019, Stephen Wolfram et. al. launched the Wolfram Physics Project \cite{wppWebsite} as a new attempt to find a fundamental theory of physics (see \cite{wol4} and \cite{wol3} for a general overview and \cite{wol1} for a technical introduction, as well as the glossary of \cite{zxPaper} as a reference for terminology used in this paper). The Wolfram Models, discrete spacetime formalisms generalising models first introduced by Wolfram in \cite{wol2}, have been found to be significantly meaningful in a theoretical physics context, showing various connections to known theories of relativity, gravity and quantum mechanics \cite{gor2}\cite{gor1}\cite{zxPaper}.\par 
In the project, a type of abstract rewriting systems (see definition 1 in \cite{gor1}, as well as \cite{rewrite1} and \cite{rewrite2} for more complete references) equipped with causal relations between their elements has been called \enquote{multiway systems} (definition 10 in \cite{gor1}) and shown to be connected to many physical properties of our universe. Additionally, various links between multiway systems and group theory \cite{groupMultiwayWinterSchool}, homotopic type theory \cite{xerxesSummerSchool}, category theory \cite{zxPaper}, numerics of partial differential equations \cite{pdeSummerSchool} and the theory of theorem proving (\cite{wol2} p.\,775\,ff.), as well as the study of complex systems, computational complexity and emergence (\cite{wol1} p.\,204, 939) have been established, showing that these systems are relevant and interesting from a mathematics or computer science perspective as well. While in the physical framework of the actual Wolfram Model (see section 2 in \cite{gor2} for a formal definition), hypergraph-based\footnote{This means that the underlying \enquote{objects} or \enquote{elements} of the abstract rewriting systems are hypergraphs.} multiway systems have been used, we consider string-based multiway systems instead since they are more fundamental because of their simpler structure. Most likely, our results generalise easily to hypergraph-based multiway systems. In any case, they yield significant new insights into the general principles underlying these systems.\par 
Although our investigations are rather theoretical and aim at laying a mathematical foundation for understanding the structure of multiway systems in themselves, we comment on several potential applications in the Wolfram Physics Project in section \ref{section:conclusion} and demonstrate our general result by computational simulations of specific examples. Our visualisations have been made using Mathematica and all code for simulating multiway systems is available in the Wolfram Functions Repository \cite{funcRepo}. Readers interested in running simulations and visualisations themselves may find the computational quick-start guide\footnote{A computational essay can be found at \url{https://www.wolframcloud.com/obj/wolframphysics/Tools/hands-on-introduction-to-the-wolfram-physics-project.nb}.} and the documentation of the \texttt{MultiwaySystem} resource function \cite{multiwayFunc} to be useful references.\par 
The subsequent subsections start by formally defining what we mean by multiway system growth functions, rates and classes. Next, we investigate the boundaries of possible growth rates and find that multiway systems are, simply put, bounded in the speed but unbounded in the slowness of their growth rate (theorem \ref{theorem:slowness}). After that, we show that the growth classes of multiway systems we defined cover the entire set of multiway systems and apart from one trivially empty class, all of them contain infinitely many multiway systems (theorem \ref{theorem:classesPartition}). To do this, we define arithmetic-like operations equipping the set of multiway systems with a semiring structure. Combing the two theorems, we conclude various interesting properties of the \enquote{computational diversity} and \enquote{-complexity} of multiway systems, showing that their growth functions constitute an interesting domain of further research. 

\subsection{Multiway Growth Functions}

Consider a string-based multiway system $M$ (definition 10 in \cite{gor1}), represented as a triplet $(R,s_{\text{init}},\Sigma)$ where $\Sigma$ is a finite alphabet, $R=\{r_1\rightarrow t_1, \dots, r_n\rightarrow t_n\}$ is a set of string replacement rules over $\Sigma$ and $s_{\text{init}} \in \Sigma^*$ is the initial string\footnote{In the following, $\Sigma^*$ denotes the set of all words over the alphabet $\Sigma$.}. We define the \enquote{state-set of generation $n$} as the set of all new (previously nonexistent) states added to the multiway system in its $n$-th generation. These states are precisely the nodes of the states graph (c.\,f. section 5.3 in \cite{wol1}) to which the shortest path from the initial state has length $n$. In \cite{wol1} they are called \enquote{merged states}. Now, the \enquote{growth function\footnote{We use the terms \enquote{sequence} and \enquote{function} interchangeably for $f:\N_+\rightarrow \N_+$.}} $g_M(n)$ is simply the cardinality of the state-set of generation $n$.\par 

\begin{figure}[ht]
\centering
\begin{subfigure}{.5\textwidth}
  \centering
  \includegraphics[width=.9\linewidth]{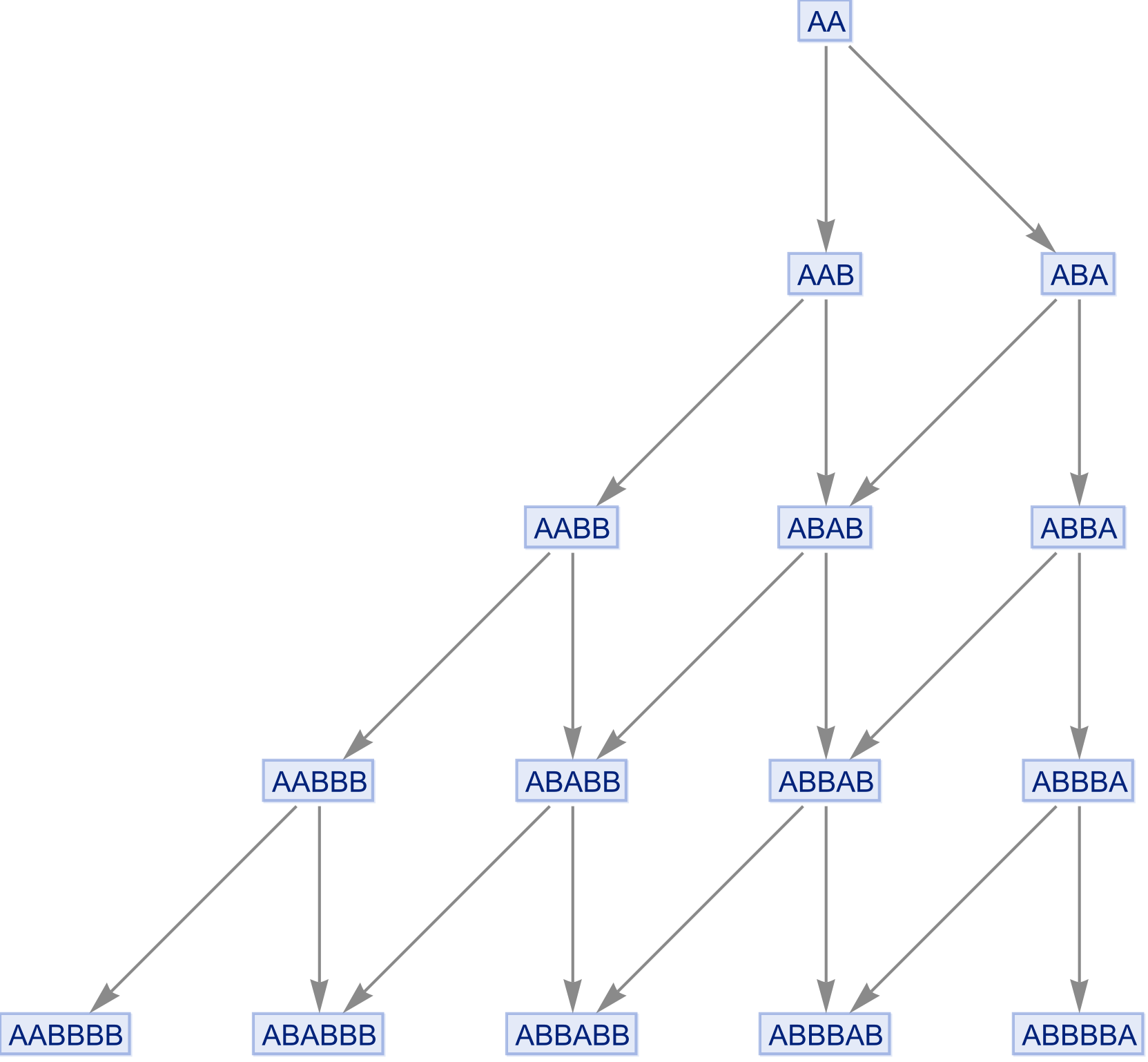}
  %\caption{A subfigure}
  %\label{fig:sub1}
\end{subfigure}%
\begin{subfigure}{.5\textwidth}
  \centering
  \includegraphics[width=.9\linewidth]{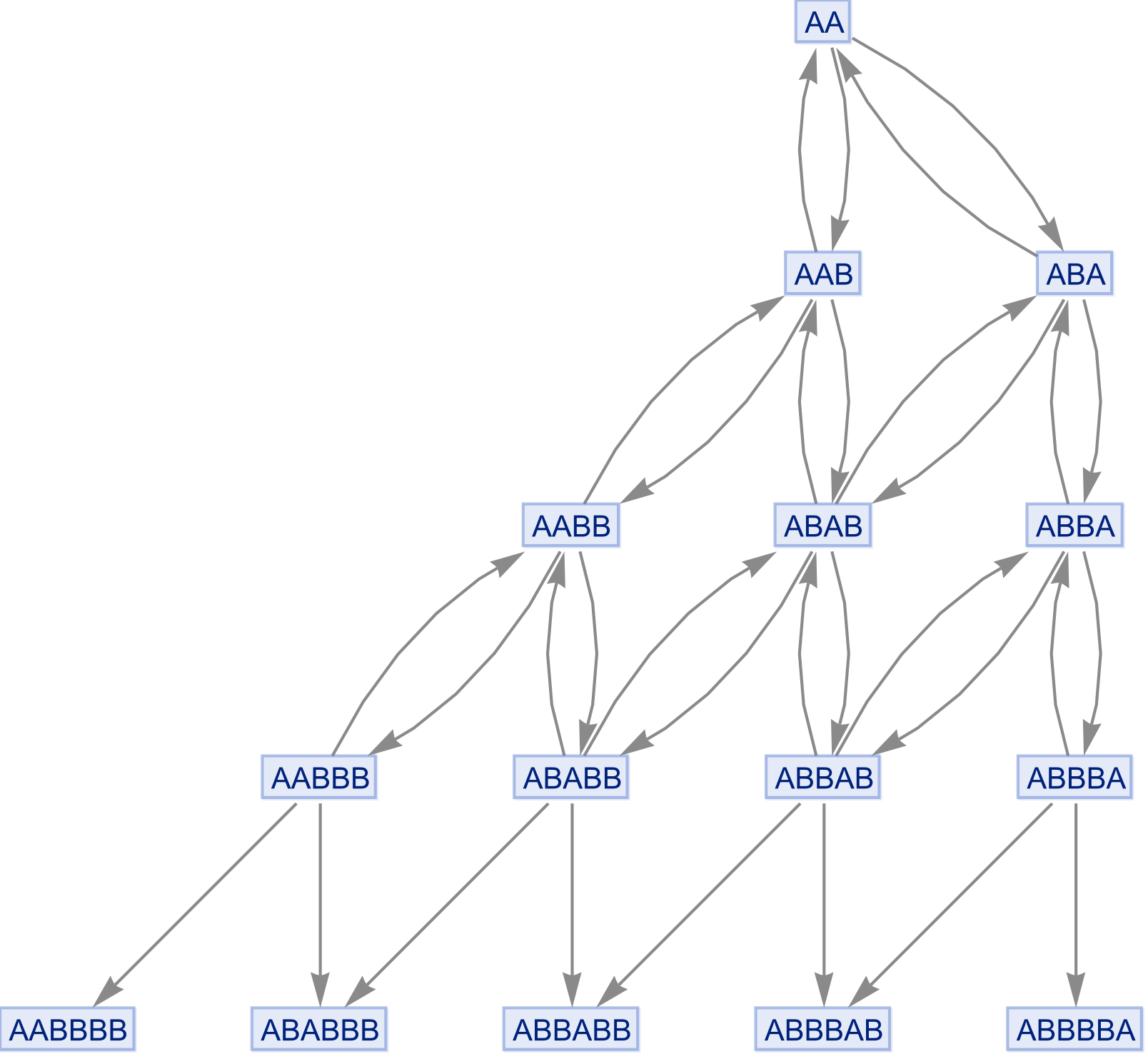}
  %\caption{A subfigure}
  %\label{fig:sub2}
\end{subfigure}
\caption{Both $M_1=(\{``A"\rightarrow ``AB"\}, ``AA", \{A,B\})$ and $M_2=(\{``A"\rightarrow ``AB",``AB"\rightarrow ``A"\}, ``AA",\{A,B\})$ have the same growth function $g(n)=n$ because cycles in the states graph do not lead to new states.}
\label{fig:stateSet}
\end{figure}

In general, it is very hard or even undecidable (see section \ref{section:consequencesComputability}) to prove that some multiway system has a certain growth function. It is also not obvious that the growth functions of multiway systems should be elementary functions or \enquote{simple} by any other definition. Examples of systems where the growth function is hard to describe were already given in \enquote{A New Kind of Science} (\cite{wol2}\,pp.\,204\,ff.). Therefore, we will approximate the growth functions of multiway systems by continuous, strictly monotonically increasing, unbounded (and hence bijective on $\R_{\geq 0}$) functions which can be analysed more easily. This way, many similar growth functions will be considered members of the same equivalence class. We will say that the corresponding multiway systems have the same \enquote{growth rate}.

\subsection{Multiway Growth Rates}

To formalize the notion of approximating functions, we use the asymptotic growth classes from complexity theory, defined in the following way: For a function $f:A\rightarrow B$ where $A=B=\R_{\geq 0}$ or $A=B=\R$, $\mathcal{O}(f)$ is defined as the set of functions $g:A\rightarrow B$ for which $\limsup_{x\rightarrow \infty} |\frac{g(x)}{f(x)}|$ exists and is a real number\footnote{An equivalent definition is $g\in \mathcal{O}(f)\iff \exists C\in \R_+: \exists x_0\in A: \forall x>x_0: C|f(x)|\geq |g(x)|$.}. The subset of $\mathcal{O}(f)$ for which the limit superior is zero is denoted $o(f)$. Similarly, $\Omega(f):=\{g:A\rightarrow B \mid f\in \mathcal{O}(g) \}$ and $\omega(f):=\{g:A\rightarrow B \mid f\in o(g) \}$. Finally, $\Theta(f):=\Omega(f)\cap \mathcal{O}(f)$. It is straightforward to show that $f\sim_{\Theta} g \iff f\in \Theta(g)$ is an equivalence relation. Thus, we may speak of functions that are \enquote{asymptotically equal}.\par 
As mentioned above, we want to approximate growth functions by bijective functions for the subsequent mathematical analysis. For some multiway growth function $a$, we will define the sequences $\overline{a}$ and $\underline{a}$ as its tightest upper and lower bounds which are monotonically increasing, even if $a$ itself is not monotonic at all. From these sequences, we will then construct two equivalence classes of continuous functions which are all asymptotically equal to and hence \enquote{close approximations} of $\overline{a}$ or $\underline{a}$ respectively. Two representatives of these classes will be called \enquote{tight bounds} and, since we are generally concerned with unbounded growth functions\footnote{Bounded growth functions will be discussed shortly.}, both bijective on $\R_{\geq 0}$.\par 
Notice that we have only defined asymptotic growth classes for functions on $\R$ or $\R_{\geq 0}$. However, since the multiway growth function is always a function on $\N_+$, we consider its linear interpolation, a continuous function from $\R_{\geq 0}$ to $\R_{\geq 0}$ which is equal to the sequence for natural arguments and always bounded by consecutive values of the sequence (see definition \ref{def:linearInterpolation}), instead.

\begin{definition}\label{def:finiteMultiway}
Let $M$ be a multiway system and $g_M$ its growth function. We call $M$ \enquote{finite} if $\exists n\in \N_+: g_M(n)=0$ (as this implies that at a certain point, no further states will be added). We call $M$ \enquote{bounded} if $\exists b\in \N_+: \forall n\in \N_+: g_M(n) \leq b$ and $M$ is not finite. Systems which are neither finite nor bounded are called \enquote{unbounded}.
\end{definition}

\begin{figure}[ht]
\centering
\begin{subfigure}{.5\textwidth}
  \centering
  \includegraphics[width=.9\linewidth]{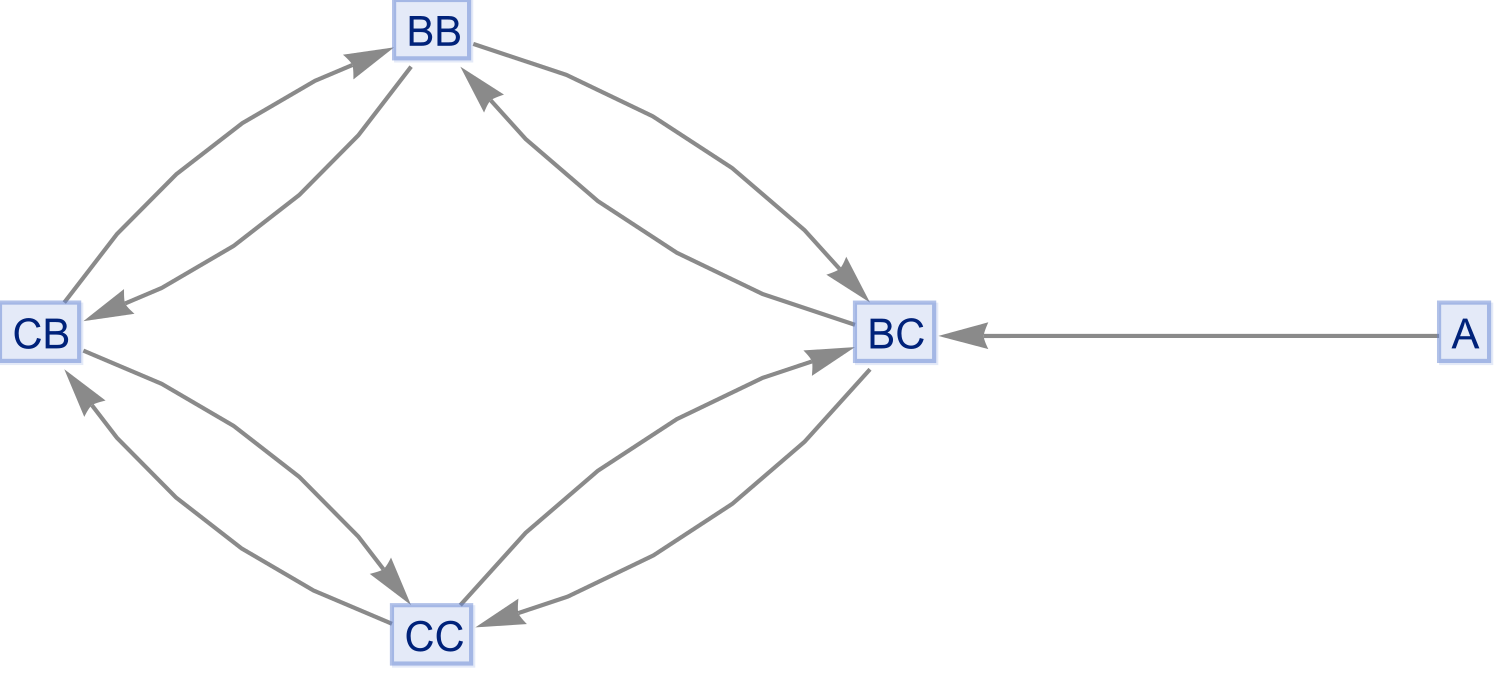}
  %\caption{A subfigure}
  %\label{fig:sub1}
\end{subfigure}%
\begin{subfigure}{.5\textwidth}
  \centering
  \includegraphics[width=.9\linewidth]{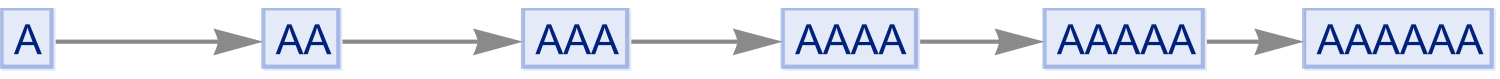}
  %\caption{A subfigure}
  %\label{fig:sub2}
\end{subfigure}
\caption{States graphs of the finite system $M_1=(\{ ``A"\rightarrow ``BC", ``B"\rightarrow ``C", ``C"\rightarrow ``B" \},``A",\{A,B\})$ and first six steps of the bounded system $M_2=(\{``A"\rightarrow ``AA"\}, ``A", \{A\})$. Notice that $\forall n\in \N_+: g_{M_2}(n)=1$, despite the fact that the rule can be applied in many different positions, because we are only considering merged states.}
\label{fig:finiteAndBounded}
\end{figure}

\begin{definition}\label{def:tightBounds}
Let $a:\N_+\rightarrow \N_+$ be the growth function of an unbounded multiway system and let $\overline{a}_n:=\max( \{a_k \mid k\leq n\})$ and $\underline{a}_n:=\max( \{ a_k \mid  k \leq n\,\land\,(\forall l\geq k: a_l\geq a_k ) \} \cup \{ 1 \})$.
We call two continuous functions $f,g:\R_{\geq 0}\rightarrow \R_{\geq 0}$ \enquote{tight bounds of $a$} if $f\in \Theta(L_{\N_+}(\overline{a}))\;\land\;g\in \Theta(L_{\N_+}(\underline{a}))$ where $L_{\N_+}$ denotes the linear interpolation over $\N_+$ (according to definition \ref{def:linearInterpolation}).
\end{definition}

\begin{figure}[ht]
\centering
\begin{subfigure}{.5\textwidth}
  \centering
  \includegraphics[width=.9\linewidth]{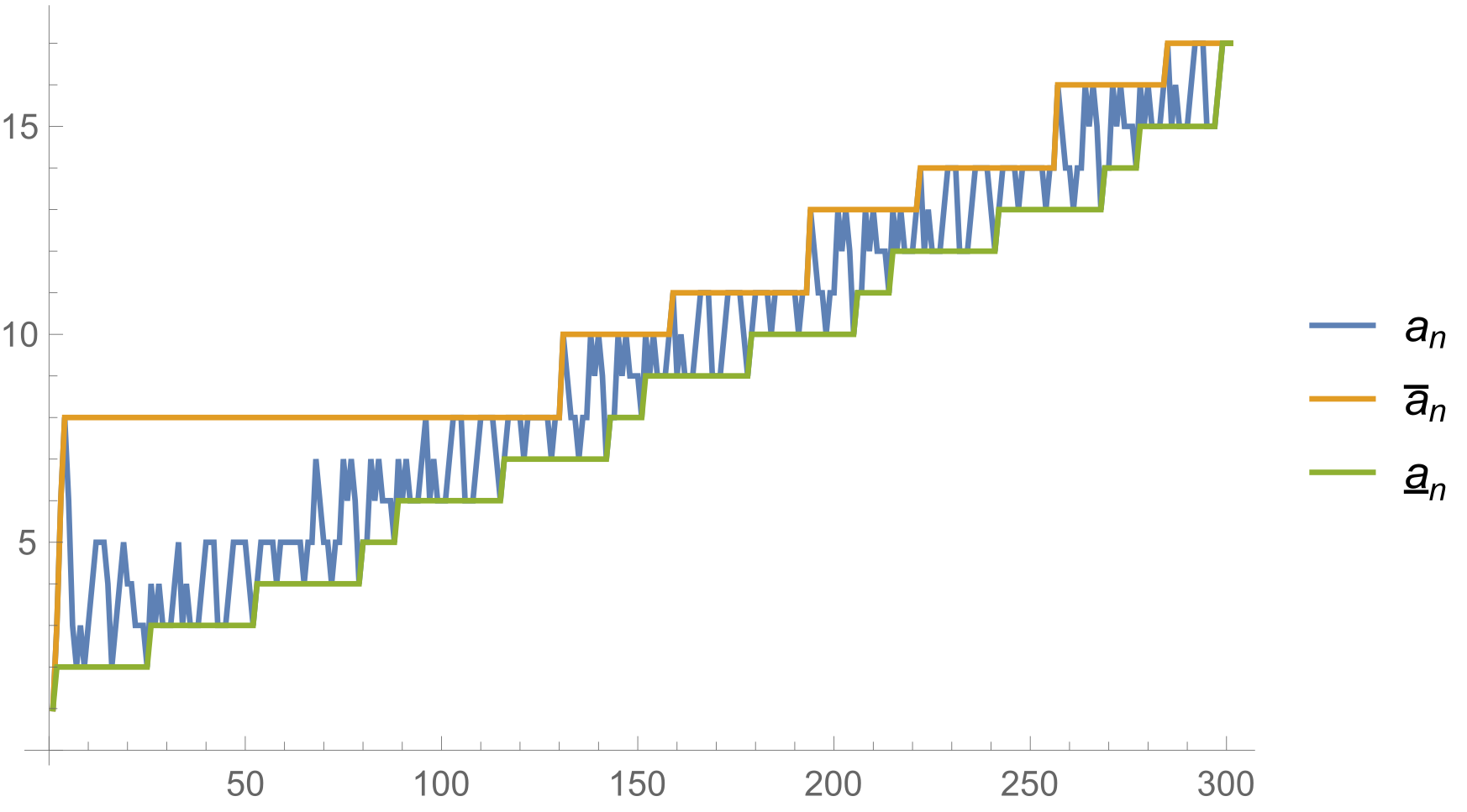}
  %\caption{A subfigure}
  %\label{fig:sub1}
\end{subfigure}%
\begin{subfigure}{.5\textwidth}
  \centering
  \includegraphics[width=.9\linewidth]{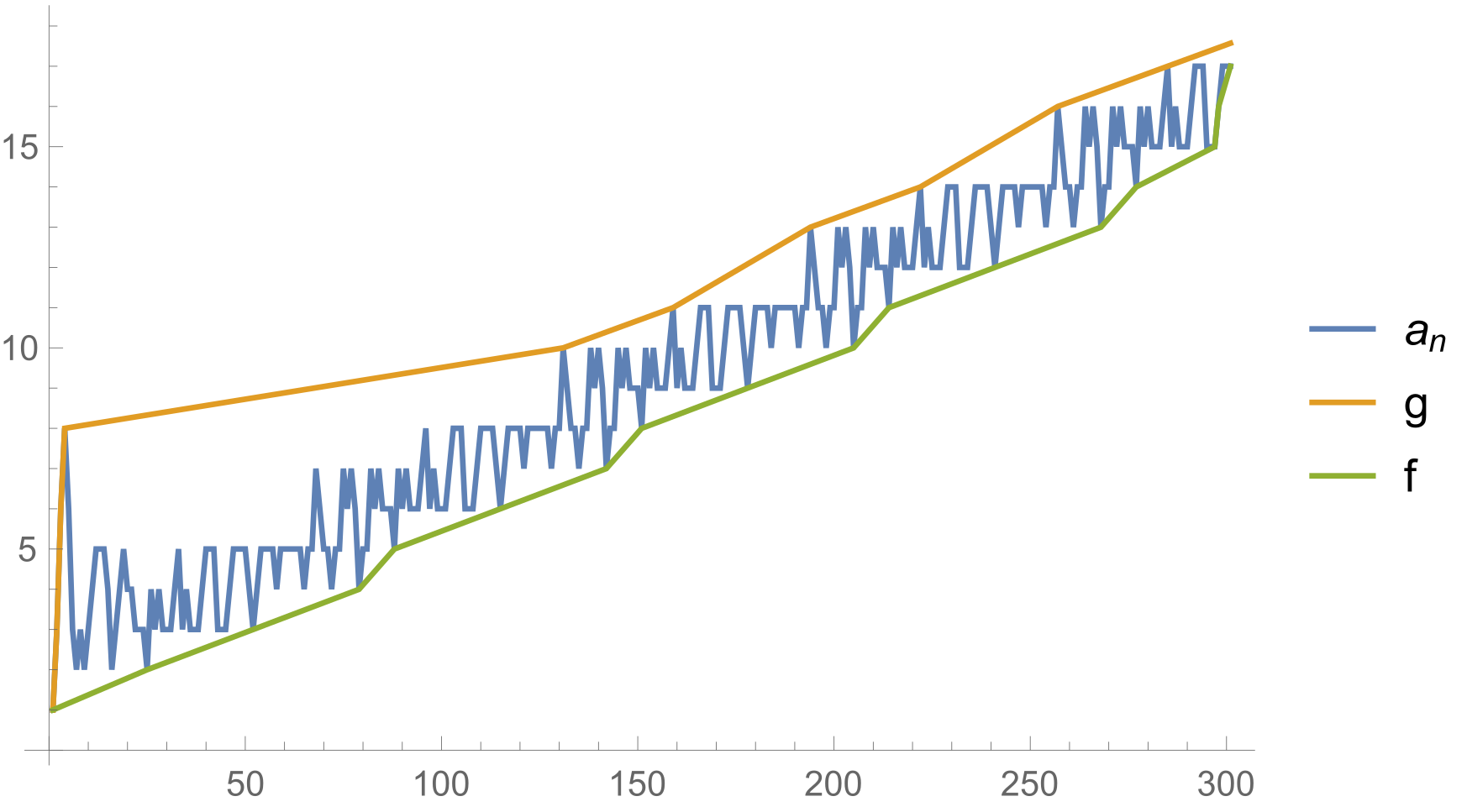}
  %\caption{A subfigure}
  %\label{fig:sub2}
\end{subfigure}
\caption{The growth function $a_n$ of $M=(\{``AB" \rightarrow ``",``ABA"\rightarrow ``ABBAB",``ABABBB"\rightarrow ``AAAAABA"\}, ``ABABAB", \{A,B\})$ together with $\overline{a}_n$, $\underline{a}_n$ and a pair $(f,g)$ of tight bounds. Note: Only $f$ and $g$ are continuous, the other lines are drawn for visual appearance.} 
\label{fig:tightBounds}
\end{figure}

One might ask why we introduce an upper and a lower bound instead of approximating the growth function with a single function. Doing so would however be a poor approximation as there are multiway systems for which even the tightest upper and lower bounds are never in the same asymptotic growth class (compare figure \ref{fig:notSameThetaBounds}). We will call these multiway systems \enquote{strongly oscillating} and all others (i.\,e. systems where all tight bounds are asymptotically equal) \enquote{regular}. Notice that for every regular multiway system, a pair of bijective tight bounds exists because its tight bounds will be in the asymptotic equivalence class of two unbounded strictly monotonically increasing functions and tight bounds are continuous on $\R_+$. Strongly oscillating systems on the other hand are much more difficult to analyse since one cannot easily come up with criteria for measuring the rate of oscillation and it is not clear at all whether there has to be any periodicity or regularity in the way in which they oscillate. Thus, for our basic investigations about the fundamental structure of multiway systems, we shall focus on regular systems.

\begin{figure}[ht]
\centering
\begin{subfigure}{.45\textwidth}
  \centering
  \includegraphics[width=.9\linewidth]{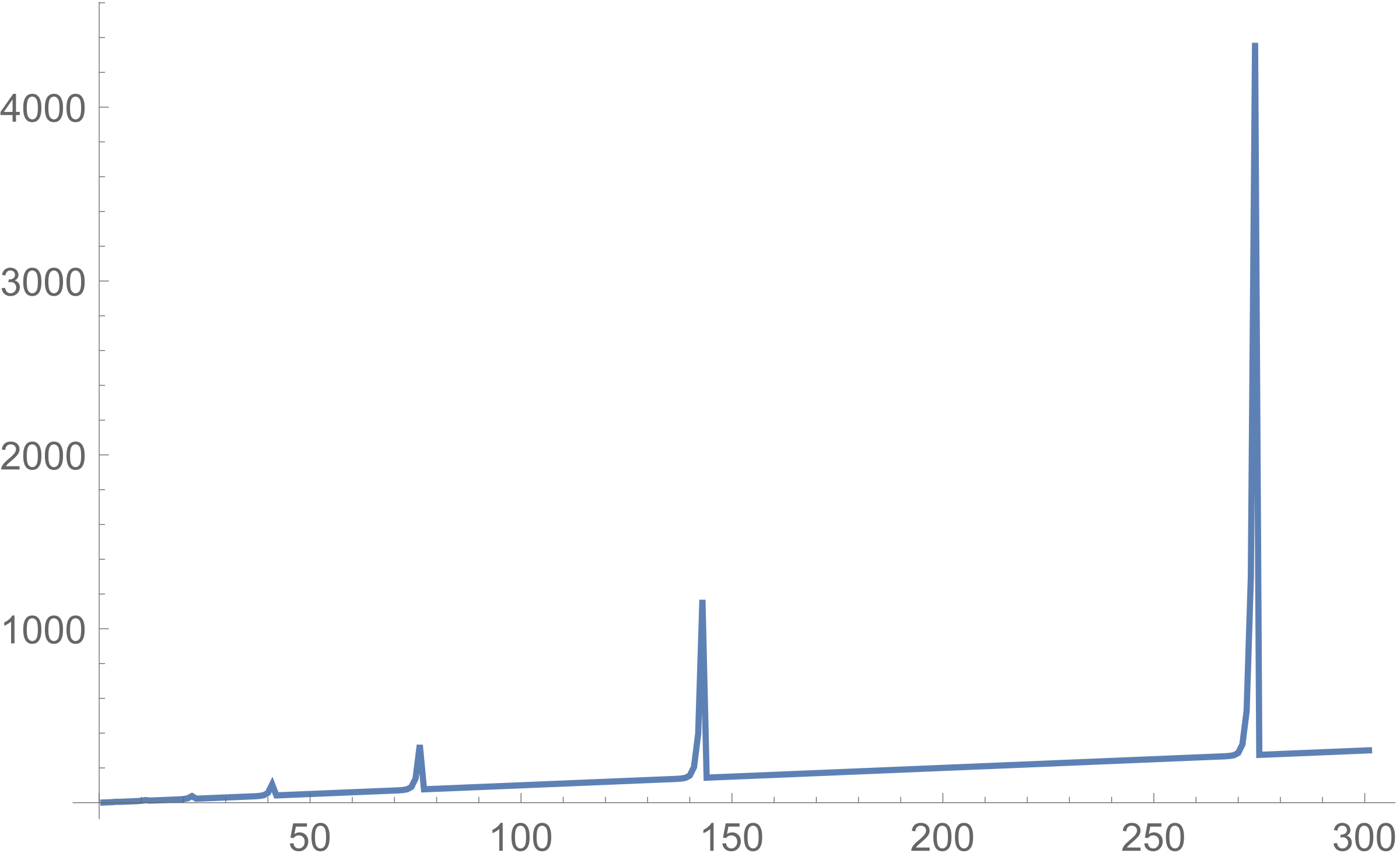}
  %\caption{A subfigure}
  %\label{fig:sub1}
\end{subfigure}%
\begin{subfigure}{.55\textwidth}
  \centering
  \includegraphics[width=.9\linewidth]{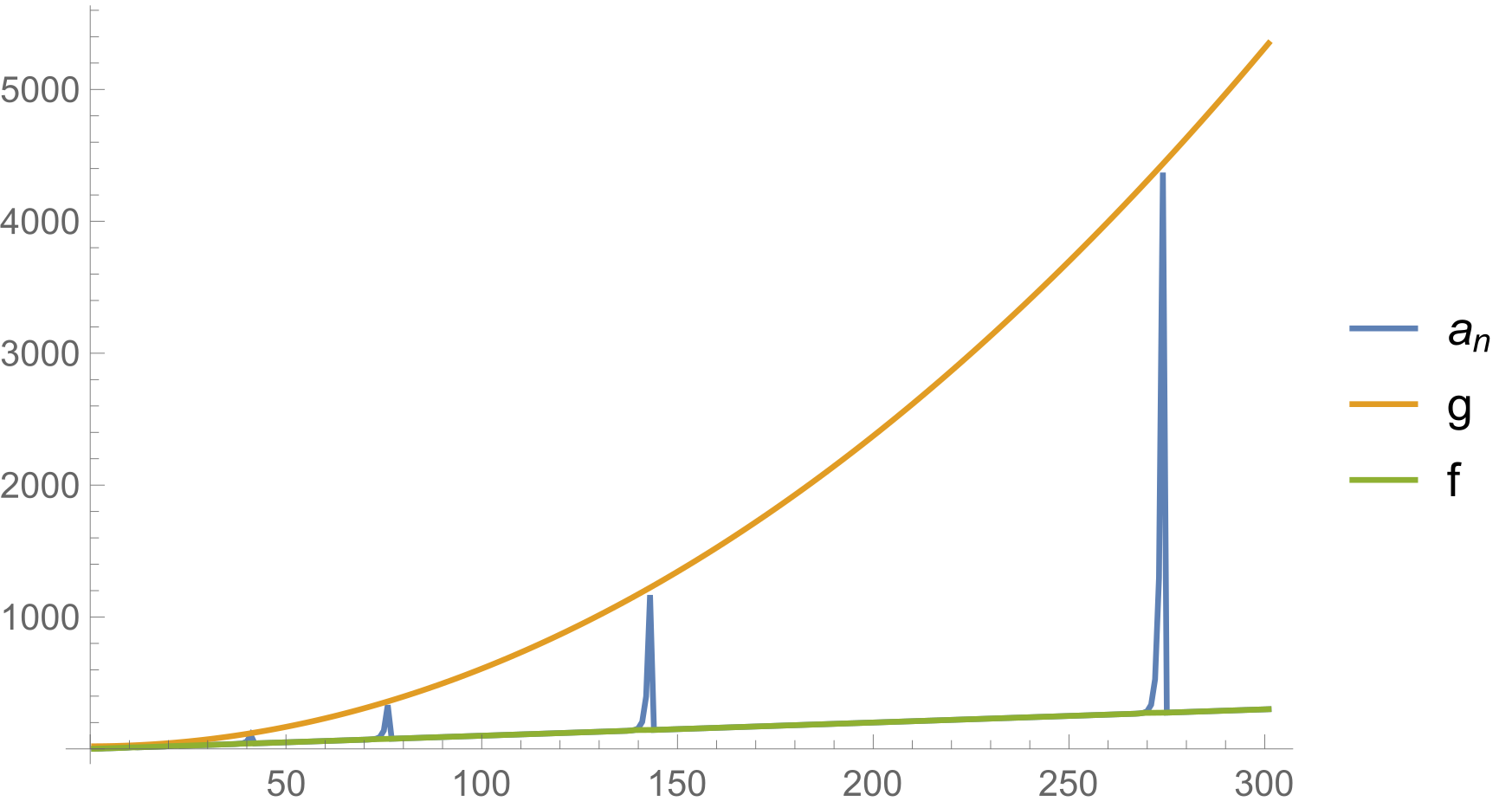}
  %\caption{A subfigure}
  %\label{fig:sub2}
\end{subfigure}
\caption{For this special system, one can prove (see section \ref{section:consequencesComputability} for an explanation) the tight bounds $f,g$ to be in different asymptotic growth classes: $f\in \Theta(x)$ and $g\in \Theta(x^2)$. Note again that only $f$ and $g$ are continuous while $a_n$ is drawn as a line for visual appearance.}%The growth function of $M=(\{``fCB"\rightarrow``AAfC", ``ACf"\rightarrow ``CfB", ``fCx"\rightarrow ``CfxE", ``xCf"\rightarrow ``xfC", ``fCx"\rightarrow ``xG", ``GE"\rightarrow ``EG", ``GE"\rightarrow ``FG", ``GE"\rightarrow ``HG", ``GE"\rightarrow ``IG" , ``D"\rightarrow ``xfCBx", ``D"\rightarrow ``JJ", ``J"\rightarrow ``JA"\},\,``D",\,\{A,B,C,D,E,F,G,H,I,f,x\} )$
\label{fig:notSameThetaBounds} 
\end{figure}

Definition \ref{def:tightBounds} suggests a natural way to define classes of multiway systems with \enquote{similar} growth functions by considering growth functions with approximately equal tight bounds as equivalent. Let $f,g:\N_+\rightarrow \N_+$ be functions and $(a_1,b_1),(a_2,b_2)$ be tight bounds of $f$ and $g$ respectively. We define $\sim_R$ by $f \sim_R g \iff (a_1\sim_{\Theta} a_2 \land b_1 \sim_{\Theta} b_2)$. Since tight bounds always exist, $\sim_R$ is an equivalence relation because $\sim_{\Theta}$ is one. For some multiway system $M$ with growth function $g_M$, we call the equivalence class of $\sim_R$ that $g_M$ falls into the \enquote{growth rate} of $g_M$ (or sometimes the growth rate of just $M$).\par 
It is obvious that every multiway system has exactly one growth function and exactly one growth rate. The converse, i.\,e. that every function $f:\N_+\rightarrow \N_+$ is the growth function of some multiway system or, respectively, that every pair of bijective functions on $\R_{\geq 0}$ is a pair of tight bounds of some multiway growth function, is clearly not true as emphasized in lemma \ref{lemma:expBound}. However, if we define much more general classes of growth functions which we will call \enquote{multiway growth classes}, we will see (in theorem \ref{theorem:classesPartition}) that they indeed partition the set of all multiway systems into a finite set of infinite subsets.

\subsection{Multiway Growth Classes}\label{section:growthClassesDef}

To further distinguish between types of multiway systems on a more abstract level and demonstrate which kinds of growth functions can be achieved, we want to define very broad classes of multiway systems whose growth functions show similar behavior on a large scale. We have already distinguished between finite, bounded and unbounded systems, as well as dividing the latter into regular and strongly oscillating systems. As outlined above, we will focus on regular systems. To group these into sets of systems of similar behavior, we use commonly known classes of functions such as polynomial or exponential functions\footnote{Actually, we are talking about polynomially or, respectively, exponentially bounded functions which need not be polynomials or simple exponential functions.}, intermediately (faster than polynomial and slower than exponential) growing functions and some others.\par  
More precisely: Let $G_{\text{pol}}$ be defined as the set of all continuous bijections $f:\R_{\geq 0}\rightarrow \R_{\geq 0}$ which satisfy $f\in \Omega(x^n) \cap \mathcal{O}(x^{n+1})$ for some $n\in \N_+$ and define $G_{\text{exp}}$ as $\{ f:\R_{\geq 0}\rightarrow \R_{\geq 0} \mid f\in \Omega(a^x)\cap \mathcal{O}((a+1)^x)\}$ for some  $a\in \N_{> 1}$. Similarly, let $G_{\text{supexp}}$ be the set $\{ f:\R_{\geq 0}\rightarrow \R_{\geq 0} \mid \forall g\in G_{\text{exp}}: f \in \omega(g) \}$ where $f$ must be continuous and bijective. Additionally, denote by $G_{\text{int}}$ the set of all continuous bijections $f:\R_{\geq 0}\rightarrow \R_{\geq 0}$ fulfilling $\forall g\in G_{\text{pol}}, h\in G_{\text{exp}}: f\in \omega(g) \cap o(h)$. Now, it is easy to define $G_{\text{invpol}}:=\{f \mid f^{-1} \in G_{\text{pol}}\}$, $G_{\text{invexp}}:=\{f\mid f^{-1} \in G_{\text{exp}}\}$, $G_{\text{invsupexp}}:=\{f\mid f^{-1}\in G_{\text{supexp}}\}$ and $G_{\text{invint}}:=\{f\mid f^{-1}\in G_{\text{int}}\}$.\par 
These eight sets give a partition of the set of all continuous bijections on $\R_{\geq 0}$ because a function grows either slower than $f:x\mapsto x$ in which case its inverse grows faster than $f$, or it grows faster (or equal to) $f$ in which case it is contained in one of the first four classes. We call a multiway system a member of the growth class $C_{i}$ if its growth function has tight bounds $f,g$ belonging\footnote{This definition applies only to regular multiway systems} to $G_{i}$. However, as definition \ref{def:tightBounds} is not applicable for finite or bounded multiway systems, we handle them separately.\par 
Let $C_{\text{fin}}$ and $C_{\text{bnd}}$ the sets of all finite and bounded multiway systems respectively. For all multiway systems in $C_{\text{fin}} \cup C_{\text{bnd}}$, the growth rate is defined to be $(1,1)$. While finite and bounded systems can have quite an intricate structure\footnote{In fact, what seems to be \enquote{complex behavior} in a \enquote{New Kind of Science}-fashion (compare \cite{wol2}), occurred much more frequently in our empirical investigations of finite systems but this might just indicate a lack of understanding.}, their growth functions are not very interesting for our purposes. They might be useful for applications not directly related to the Wolfram Physics Project, but in this paper, they will not be discussed in great detail. \par
As every continuous bijection on $\R_+$ belongs to exactly one of the $G_i$, every multiway system one can imagine is either strongly oscillating or in one of those classes (including $C_{\text{fin}}$ and $C_{\text{bnd}}$). We will furthermore show in theorem \ref{theorem:classesPartition} that every of these classes (except $C_{\text{supexp}}$ which is empty by lemma \ref{lemma:expBound}) contains infinitely many multiway systems. \par 
Summarizing the previous section, we introduced the three main concepts of multiway growth functions, multiway growth rates and multiway growth classes. We will now present the first important result of this paper, a theorem about the boundaries of possible growth rates, and spend the next section proving and illustrating it.

\section{The Spectrum of Possible Growth Rates}

Having defined multiway growth rates, we might ask ourselves, which growth rates are possible, i.\,e. how the equivalence classes of $\sim_R$ are distributed in the set of all possible pairs of bijective functions on $\R_{\geq 0}$. First of all, it is quite easy to give an upper bound for growth rates that can be achieved. In fact, no multiway system can grow faster than exponentially. 

\begin{lemma}\label{lemma:expBound}
Let $(f,g)$ be the growth rate of some multiway system. There exists some constant $c\in \R$ for which $f, g \in o(e^{cx})$.
\end{lemma}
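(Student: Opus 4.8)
The plan is to bound the growth function $g_M$ itself by a simple exponential in the generation number, and then push that bound through the definitions of $\overline{a}$, $\underline{a}$ and the tight bounds. The whole argument rests on one crude observation: a single rewrite step can only lengthen a string by a bounded amount, and there are only finitely many strings over $\Sigma$ of any bounded length, so generations cannot grow in size faster than the number of admissible words grows with their length.

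Concretely, write $M=(R,s_{\text{init}},\Sigma)$, put $L_0:=\abs{s_{\text{init}}}$, $k:=\abs{\Sigma}$ and $\Delta:=\max_{1\le i\le n}\bigl(\abs{t_i}-\abs{r_i}\bigr)$, the largest length increase produced by any rule. First I would dispose of the degenerate cases: it suffices to show that each component of the growth rate lies in $o(e^{cx})$, so if $M$ is finite or bounded both components are the constant $1$ and we are done; and if $\Delta\le 0$ then every reachable state has length at most $L_0$, so $M$ has only finitely many states and is therefore finite. Hence I may assume $\Delta\ge 1$ and that $M$ is unbounded, so that its tight bounds $f,g$ exist. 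The key step is then an induction on the distance $m$ from $s_{\text{init}}$ in the states graph: since a node at distance $m+1$ has a neighbour at distance $m$, and every edge of the states graph is a single rewrite step, every state of generation $m$ has length at most $L_0+m\Delta$. Therefore the state-set of generation $n$ injects into the set of words over $\Sigma$ of length at most $L_0+n\Delta$, and writing $a:=g_M$,
\[
a(n)\;\le\;\sum_{j=0}^{L_0+n\Delta}k^{\,j}\;\le\;k^{\,L_0}\,(L_0+n\Delta+1)\,b^{\,n},\qquad b:=k^{\Delta}.
\]
Because the right-hand side is non-decreasing in $n$, the same bound holds for $\overline{a}_n$ and hence for $\underline{a}_n\le\overline{a}_n$, and passing to the linear interpolation $L_{\N_+}$ of definition \ref{def:linearInterpolation} costs at most one further factor $b$, since $L_{\N_+}(\overline{a})(x)\le\overline{a}_{\ceil{x}}$.

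To finish I would pick any real $c>\ln b$ (any $c>0$ when $k=1$, in which case $b=1$). Then $b^{\,x}=e^{(\ln b)x}\in o(e^{cx})$, and since multiplying by a polynomial in $x$ or by a constant does not disturb membership in $o(e^{cx})$, both $L_{\N_+}(\overline{a})$ and $L_{\N_+}(\underline{a})$ lie in $o(e^{cx})$; as $f\in\Theta(L_{\N_+}(\overline{a}))\subseteq\mathcal{O}(L_{\N_+}(\overline{a}))$ and likewise for $g$, and $o(e^{cx})$ absorbs $\mathcal{O}$-domination, we get $f,g\in o(e^{cx})$. The only part that needs genuine care rather than routine asymptotics is the middle step: being precise about what a ``generation'' is so that the distance induction is valid, and making sure the various degenerate systems (the case $\Delta\le 0$, a one-letter alphabet, and all finite or bounded systems) are swept into the trivial $(1,1)$ convention instead of breaking the length bound.
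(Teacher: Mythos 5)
Your proposal is correct and follows essentially the same route as the paper's own proof: bound the string length at generation $n$ linearly by $L_0+n\Delta$, count the admissible words over $\Sigma$ of that length, and conclude an exponential upper bound on $g_M$. Your version is in fact somewhat more careful than the paper's two-line sketch --- summing over all lengths up to $L_0+n\Delta$ rather than counting only words of the maximal length, choosing $c>\ln b$ explicitly so as to land in $o(e^{cx})$ rather than merely $\Theta(e^{cx})$, and disposing of the finite, bounded and $\Delta\le 0$ cases --- but these are refinements of the identical argument.
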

\begin{proof}
Denote by $s_{max}(n)$ the maximum string length that states of generation $n$ can have. For every multiway system $M=(R,s_{init},\Sigma)$, the set of rules remains constant during the whole evolution, so $s_{max}$ can at most increase constantly, i.\,e. $s_{max}\in \mathcal{O}(n)$. Since the number of words with length $l$ is given by $|\Sigma|^l$, the growth function $g_M(n)$ will never exceed $|\Sigma|^{s_{max}(n)} = e^{\ln(|\Sigma|) s_{max}(n)} \in \Theta(e^{cn})$ and the claim follows.
\end{proof}

So what about a lower bound for multiway growth rates? Formally, a trivial multiway system with no rules and thus only one state has the lowest possible growth function by point-wise value comparison. In general, \enquote{terminating} or \enquote{constant} asymptotic growth functions of finite or bounded multiway systems (which have the growth rate $(1,1)$) are the slowest by means of asymptotic comparison\footnote{\enquote{Asymptotic comparison} refers to the total ordering $\leq_{\mathcal{O}}$ defined by $f \leq_{\mathcal{O}} g \iff f\in \mathcal{O}(g)$}, but examples like this are not very illuminating. Therefore, we might ask what the slowest growth rate faster than constant is, i.\,e. what the smallest (by asymptotic comparison) functions $f,g \in \omega(1)$ are, for which $(f,g)$ is the growth rate of some multiway system. It turns out however, that no such smallest growth rate exists which means that multiway system can, in a certain sense, grow arbitrarily slowly. \par 
To understand why this is the case and make the even stronger statement that multiway systems can grow slower than all computable functions (see corollary \ref{corollary:multiwaySlowerComputable}), we need to introduce a couple of constructions. First of all, we will show how multiway systems can emulate Turing machines, meaning there are systems such that the successive states of their evolution correspond to steps in the machine's evaluation. Phrased differently, it is possible to construct a multiway system which has exactly one new state for $T(n)$ steps where $T(n)$ is the number of operations that a certain Turing machine $\mathcal{T}$ carries out before halting when provided with the input $n$. Such a machine, with some additional constraints, will be called a \enquote{$T$-halter} and $T$ its \enquote{halting function}. \par 
By adding some specific rules to the multiway system that emulates $\mathcal{T}$, it will be possible to evaluate $\mathcal{T}$ indefinitely for increasing inputs $n=1,2,\dots$. Additionally, the multiway system will be constructed in a way such that every time, the underlying Turing machine is \enquote{started again} on the next input, the number of new states per time step is increased by one. This way, we will obtain a growth function informally described by the sequence \enquote{$n$ occurs $T(n)$ times} (see definition \ref{def:nOccursFtimes}), for example the sequence \enquote{$n$ occurs $n$ times}, which would be given by $1,2,2,3,3,3,4,4,4,4,5,\dots$. We will then show that this growth function is approximated by the inverse of the linear interpolation (see definition \ref{def:linearInterpolation}) over the summatory function of $T$ (see lemma \ref{lemma:inverseSummatory}). From this we conclude the following theorem:

\begin{theorem}\label{theorem:slowness}
Let $T\in\Omega(n)$ be the halting function of some Turing machine. There is a multiway system with growth rate $(a,b)$ such that $a,b\in \mathcal{O}(T_*^{-1})$ where $T_*(x)=L_{\N_+}(\sum_{k=1}^n T(k))$ and $L_{\N_+}$ denotes the linear interpolation over $\N_+$ (see definition \ref{def:linearInterpolation}).
\end{theorem}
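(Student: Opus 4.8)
The plan is to make precise the informal program sketched before the theorem. First I would fix a convenient normal form for the Turing machine, a \enquote{$T$-halter}: on input $n$ it runs for exactly $T(n)$ steps, passes through a sequence of pairwise distinct configurations, and halts in a designated control state with the tape returned to a canonical (say, blank) form. Encoding a full configuration — tape word, head position, control state — as a string over an enlarged alphabet and turning each transition into a string-replacement rule is the classical simulation of a deterministic Turing machine by a semi-Thue system. What needs care is that the encoding must force \emph{exactly one} rule to be applicable, in \emph{exactly one} position, at every generation, and must never allow a configuration word to recur; otherwise, as Figure \ref{fig:stateSet} already warns, cycles or re-merges in the states graph would collapse precisely the new states we are trying to count. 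Determinism of the machine together with a monotone step counter baked into the string handle both points, so while the system simulates input $n$ its states graph is a simple path and contributes one merged state per generation.

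Next I would attach two gadgets. A \emph{reset} gadget recognises the halting word for input $n$, wipes the tape, writes the successor input $n+1$, and restarts the simulator; this can be arranged to take $O(n)$ generations. A \emph{widening} gadget guarantees that throughout the handling of input $n$ — both the $T(n)$ computation steps and the reset phase — each live string is escorted by $n-1$ auxiliary \enquote{decoy} strings, chosen so that no decoy is ever repeated across generations (for instance by tagging the current word with each of the $n$ lexicographically shortest markers, the marker count being bumped by one as each new input begins). With both gadgets in place the growth function $g_M$ is weakly monotone and unbounded, and it is the sequence \enquote{$n$ occurs $T(n)+O(n)$ times}; by Definition \ref{def:nOccursFtimes}, and because $T\in\Omega(n)$ absorbs the $O(n)$ reset overhead into a constant factor, $g_M$ agrees blockwise — up to that factor — with \enquote{$n$ occurs $T(n)$ times}.

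Since $g_M$ is weakly monotone and unbounded, $\overline{g_M}=\underline{g_M}=g_M$, so any tight bounds $(a,b)$ of $g_M$ satisfy $a\sim_{\Theta} b\sim_{\Theta} L_{\N_+}(g_M)$, and it remains only to locate $L_{\N_+}(g_M)$ relative to $T_*^{-1}$. By construction the value $n$ first appears at the generation $\sum_{j<n}T(j)$ plus the accumulated reset overhead of the first $n-1$ inputs, a quantity that is at least $\sum_{j<n}T(j)=T_*(n-1)$; hence $g_M(k)=n$ forces $k>T_*(n-1)$, i.e. $T_*^{-1}(k)>n-1=g_M(k)-1$. This pointwise inequality, together with Lemma \ref{lemma:inverseSummatory} (which identifies the linear interpolation of the inverse of a summatory function), yields $L_{\N_+}(g_M)\in\mathcal{O}(T_*^{-1})$, and therefore $a,b\in\mathcal{O}(T_*^{-1})$. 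Note that $T\in\Omega(n)$ is also what makes $T$ eventually positive, so that $T_*$ is a genuine bijection and $T_*^{-1}$ is well defined.

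I expect the main obstacle to be the explicit rewrite construction together with the bookkeeping needed to certify the \emph{merged}-state count: one must simultaneously keep the states graph a path during simulation, prevent the widening gadget from accidentally identifying two decoy strings at different generations (which would silently drop states), and keep the step counter, the input counter and the reset phase from ever contributing more than $O(n)$ merged states per input — the $\Omega(n)$ hypothesis on $T$ being exactly what renders all of this overhead asymptotically invisible.
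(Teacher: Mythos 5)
Your proposal follows essentially the same route as the paper: encode the Turing machine configurations as strings so the simulation contributes one merged state per generation, chain successive inputs with an $O(n)$ reset phase, pad the state count up to $n$ while input $n$ is being processed, and invert the summatory function of $T$ via Lemma \ref{lemma:inverseSummatory}, with $T\in\Omega(n)$ absorbing the reset overhead. The only cosmetic difference is the widening gadget: the paper spawns, at each restart, an independent divergent branch governed by $``Z"\rightarrow``ZZ"$ whose ever-lengthening string contributes exactly one fresh merged state per generation forever, which is a somewhat more robust implementation than decoy copies that must shadow the main computation, but the idea and the resulting growth function \enquote{$n$ occurs $T(n)+O(n)$ times} are identical.
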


\begin{figure}[ht]
\centering
\begin{subfigure}{\textwidth}
  \centering
  \includegraphics[width=.75\linewidth]{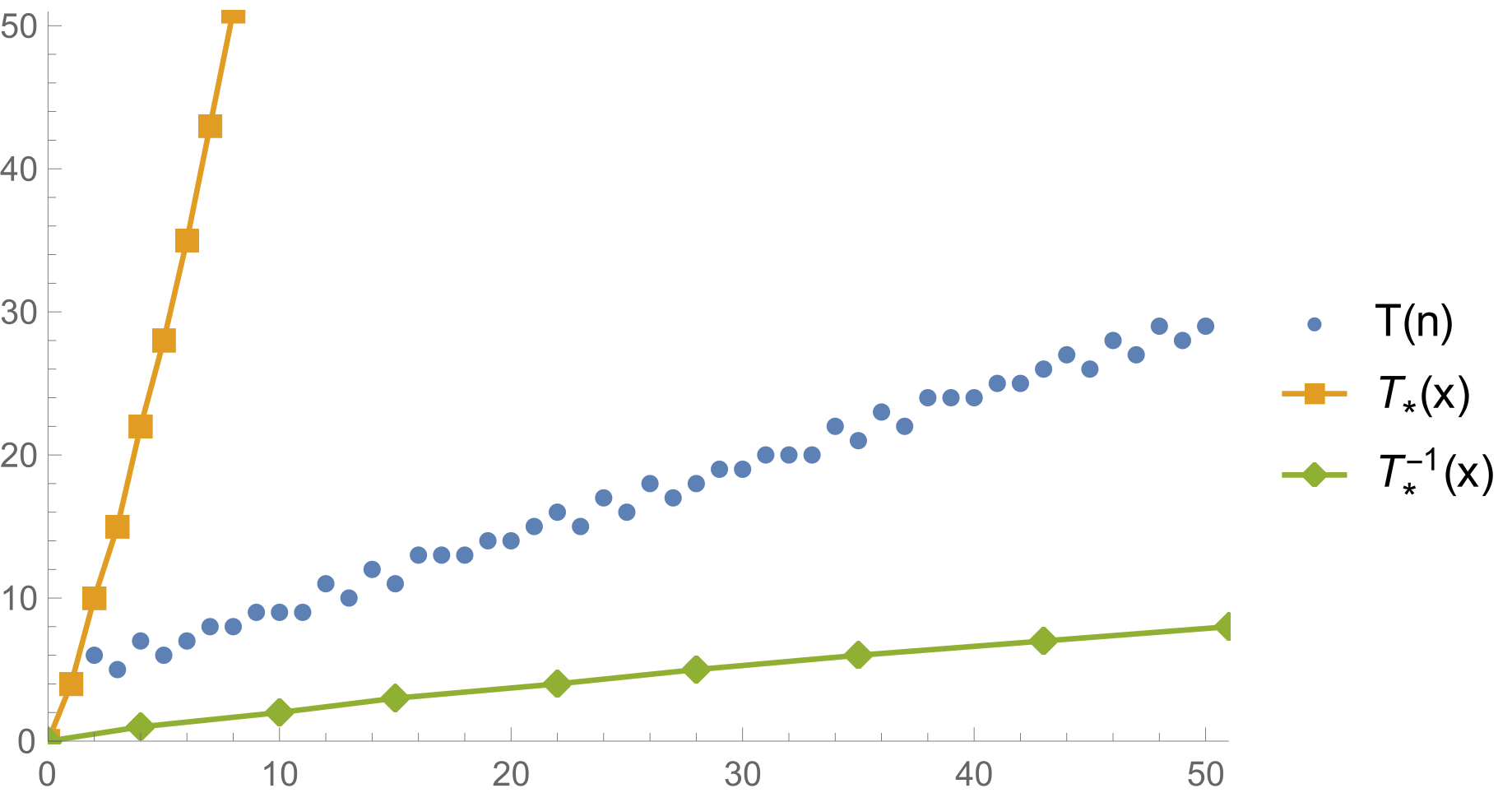}
  %\caption{A subfigure}
  %\label{fig:sub2}
\end{subfigure}
\caption{Graphical illustration of theorem \ref{theorem:slowness}. The theorem asserts that there is a multiway system for which the growth rate $(f,g)$ is asymptotically less than $T_*^{-1}(x)$.}
\label{fig:theorem1}
\end{figure}

\subsection{Proof of Theorem \ref{theorem:slowness}}\label{section:slownessProof}

Now let us formalize the proof outlined above. For some function $T:\N \rightarrow \N$, we define a \enquote{$T$-halter} to be a Turing machine $\mathcal{T}$ such that $\mathcal{T}$ executes precisely $T(n)$ operations when given the input $n$, taking into account the input and output constraints depicted in figure \ref{fig:tHalterTape}. These constraints will later allow us to \enquote{enchain} the multiway systems corresponding to $T$-halters. Of course, neither is there a $T$-halter for arbitrary $T$ nor must there be a unique $T$-halter for a given $T$. However, and this is the part we care about, there is a $T$-halter such that $T\in \Omega(f)$ for any computable function $f$ because we can just take the Turing machine that computes $f$ and add some logic to write $n+1$ after the computation is finished.\par
Having defined $T$-halters, the next step is to show how multiway systems can emulate these (and all other Turing machines). Since we are talking about deterministic Turing machines, no branching shall occur in the corresponding multiway system, i.\,e. the system should have exactly one state in generation $n$ which corresponds to the state of the Turing machine after $n-1$ operations. Because at every state in the machine's evolution only a finite part of the tape contains non-blank symbols, we include only the symbols already \enquote{touched} by the machine (meaning the head was on that symbol at least once) in the states of the multiway system and abbreviate the infinite strings of zeros on both sides of the tape with an underscore. The position and state of the head are indicated by an $H$ right of the symbol the head is currently on, followed by the current state number. Hence, there are four additional symbols (two underscores, an $H$ and a number) used in the multiway system but not written on the machine's tape (see figure \ref{fig:turingMultiwayTapeStates}). \par

\begin{figure}[ht]
\centering
\begin{subfigure}{.45\textwidth}
  \centering
  \includegraphics[height=15em]{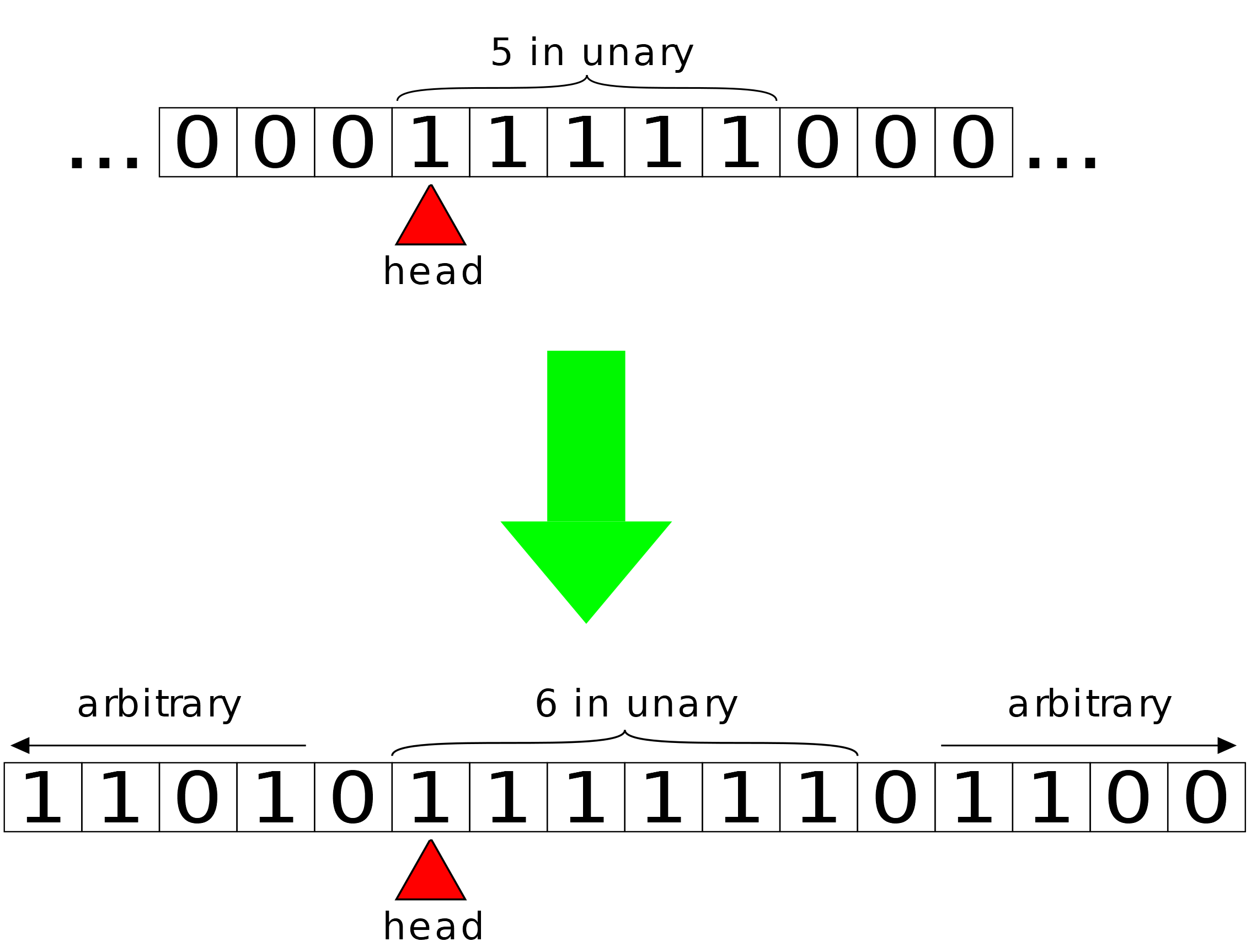}
    \caption{$\mathcal{T}$ is always started on an empty tape containing only $n$ in unary representation. After halting, $\mathcal{T}$ is required to have written $n+1$ in unary on the tape and placed its head onto or left of the first digit. The number $n+1$ must be preceded and followed by at least one empty symbol.}
\label{fig:tHalterTape}
\end{subfigure}
\hfill
\begin{subfigure}{.45\textwidth}
  \centering
  \includegraphics[height=15em]{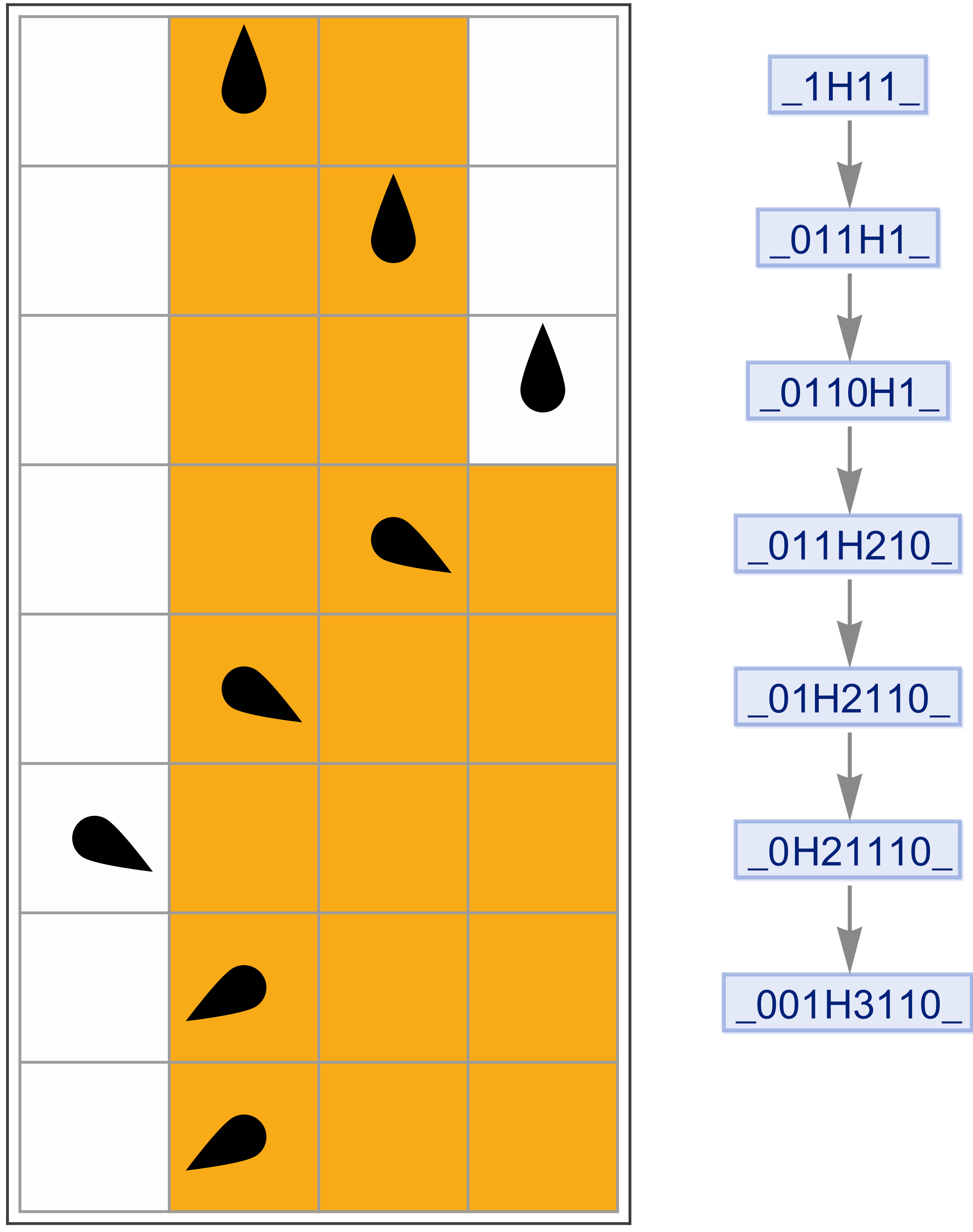}
  \caption{Evolution of $\mathcal{T}_1$ (compare fig.\,\ref{fig:turingExample}) next to the evolution of $M_1$.}
  \label{fig:turingMultiwayTapeStates}
\end{subfigure}%
\caption{}
\end{figure}

Using this representation, plain read/write operations and state changes would be straightforward to implement as replacement rules, as we could just introduce a rule\footnote{Writing symbols next to each other in this contexts simply denotes their concatenation to a string.} \texttt{$``x H n" \rightarrow ``y H m"$} for every combination of currently read symbol $x$ and head state $n$. However, since the head must move left or right after each such operation, those rules are not suitable. What is needed instead to encode the operation \enquote{when $x_1$ is read in state $n$, write $y_1$, change state to $m$ and move the head right}, is a rule of the form \texttt{$``x_1 H n x_2" \rightarrow ``y_1 x_2 H m"$} for every possible value of $x_2$. Similarly, a left move of the head is encoded as \texttt{$``x_2 x_1 H n" \rightarrow ``x_2 H m y_1"$}. If $n$ is not a halting state (in which case we would not need any rules), exactly one of these two rule patterns will be applicable for every possible value of $x_1$ or, respectively, every state transition arrow starting at $H$ in the state transition diagram.\par  
For a Turing machine with $N$ states working on an alphabet of $S$ symbols, that already gives a worst-case (no halting states) of $N\cdot S^2$ rules in the corresponding multiway system. However, $N\cdot S$ (worst-case) more rules have to be added to handle the literal \enquote{edge cases} in which the head is next to one of the underscores bounding the tape. The two rule patterns, of which, as before, exactly one will match for every state transition arrow, are \texttt{$``\_ x H n" \rightarrow ``\_ 0 H m y"$} for a left move and \texttt{$``x H n \_"\rightarrow ``y 0 H m \_"$} for a right move (in both cases $x$ is read and $y$ written). Now, the resulting rule set captures all of the Turing machine's properties and is able to extend the tape to any required length by itself. As an initial state of the multiway system to emulate the machine, any string of characters from the machines alphabet together with an $``H s"$ where $s$ is the starting state and the two bounding underscores can be used. \par 
To illustrate this construction, consider the Turing machine $\mathcal{T}_1$ shown in figure \ref{fig:turingExample}. It is, in some sense, the easiest possible $T$-halter for it does nothing more than increasing the number on the tape by one and placing its head back at the beginning. Figure \ref{fig:turingMultiwayTapeStates} shows the successive states of the machine (and tape) next to a multiway system $M_1$ emulating $\mathcal{T}_1$. The rule set for this specific instance is
\begin{verbatim}
{"00H1" -> "0H21", "10H1" -> "1H21", "1H10" -> "10H1", "1H11" -> "11H1",
 "01H2" -> "0H21", "11H2" -> "1H21", "0H20" -> "00H3", "0H21" -> "01H3", 
 "_0H1" -> "_0H21", "1H1_" -> "10H1_", "_1H2" -> "_0H21", "0H2_" -> "00H3_"}
\end{verbatim} \par

\begin{figure}[ht]
\centering
\begin{subfigure}{.45\textwidth}
  \centering
  \includegraphics[height=16em]{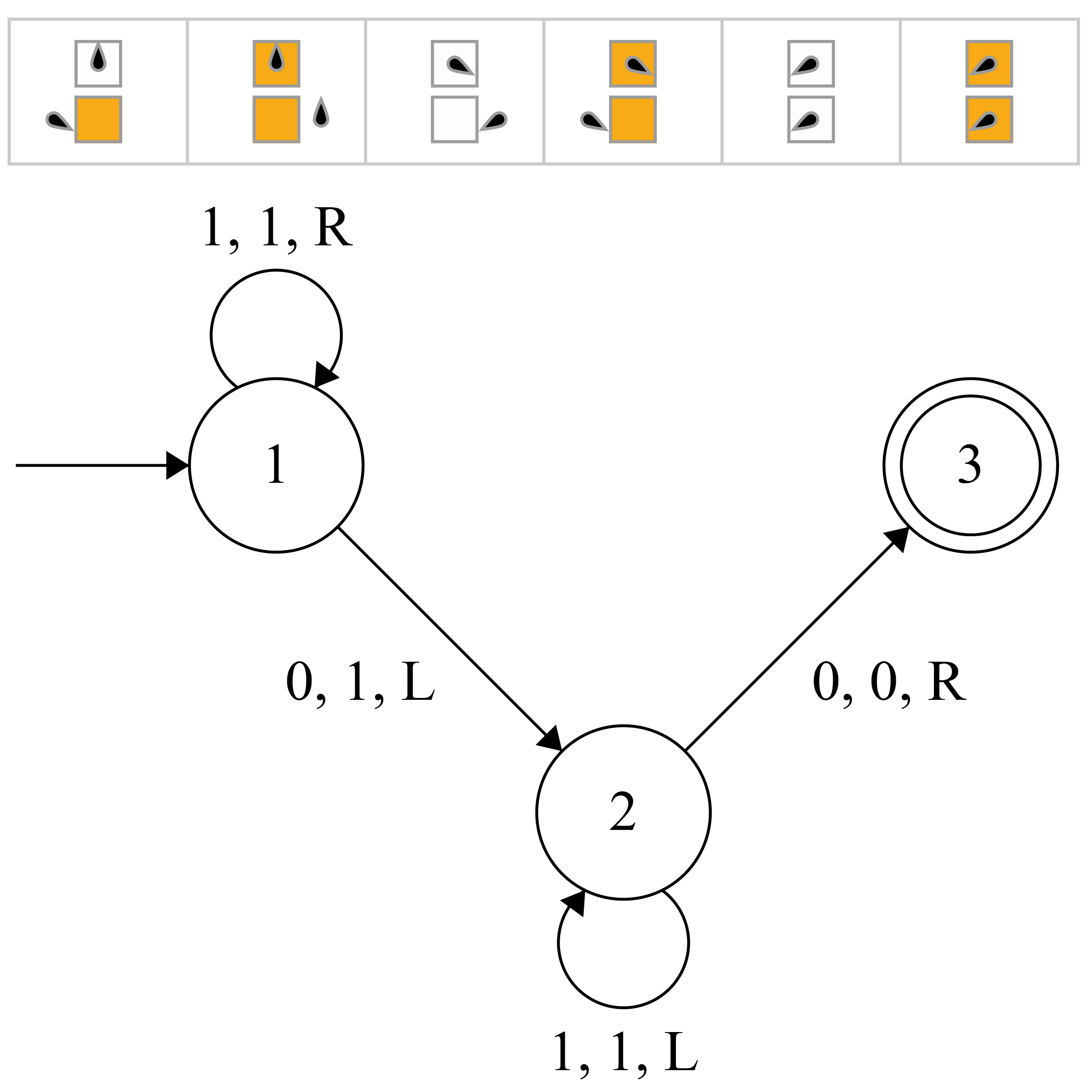}
  \caption{Rule plot and state transition diagram of $\mathcal{T}_1$. The arrow labels indicate \enquote{read, write, move}.}
  \label{fig:turingExample}
\end{subfigure}
\hfill
\begin{subfigure}{.45\textwidth}
  \centering
  \includegraphics[height=16em]{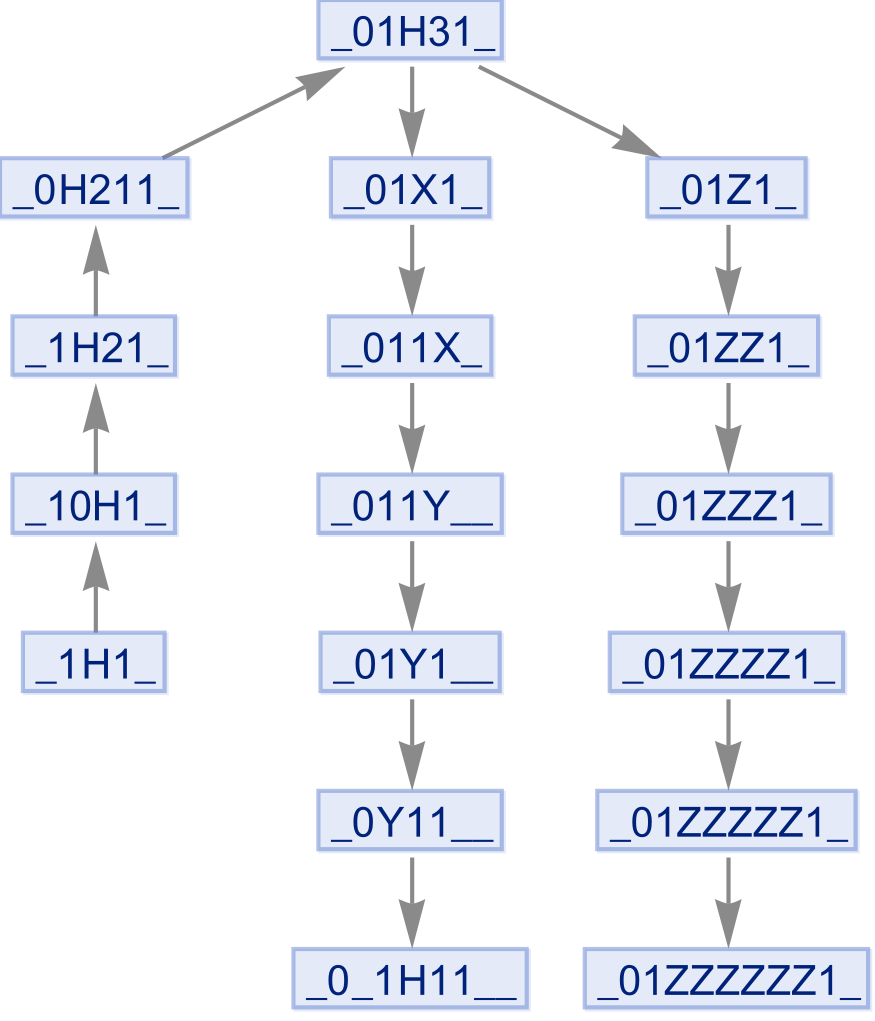}
\caption{States graph of the multiway system constructed from $\mathcal{T}_1$ (see below).}
\label{fig:trivialtHalterMultiwayGraph}
\end{subfigure}%
\caption{Two different representations of the same computational system: a Turing machine and a multiway system.}
\end{figure}

Now it is clear that a multiway system emulating some $T$-halter $\mathcal{T}$ has exactly one state for $T(n)$ generations when started with the initial condition $``\_ 1 H1 1^{n-1}\_"$ ($1^{n-1}$ denotes $n-1$ times the symbol $1$, the first $1$ is left of the head and the head starts in state 1). Now, we need a way to \enquote{enchain} this multiway system with itself. To achieve this, first add the rules \texttt{$``H f"\rightarrow ``X"$} for every halting state $f$ where $X$ is one fixed symbol not contained in $\mathcal{T}$'s alphabet. These additional rules will cause the multiway state in generation $T(n)+1$ to look like \texttt{$``\_ w_1 0 1 X 1^n 0 w_2 \_ "$}\footnote{If the head is left of the first digit, $``\_ w_1 0 X 1^{n+1} 0 w_2 \_"$ works analogously.} where $w_1$ and $w_2$ are arbitrary words that might be created as byproducts in the working of $\mathcal{T}$. This is due to the $T$-halter constraints depicted in figure \ref{fig:tHalterTape} or, rather, the $T$-halter constraints were chosen precisely to cause such a configuration of the tape. \par 
Adding the rules \texttt{$``X 1"\rightarrow ``1 X"$}, \texttt{$``X 0"\rightarrow ``Y \_ 0"$}, \texttt{$``1 Y"\rightarrow ``Y 1"$} and \texttt{$``0 Y 1"\rightarrow ``0 \_ 1 H 1"$} will cause exactly one state where the $X$ has \enquote{moved} one position to the right for $n$ generations (first rule), then add an underscore behind the $n$ ones (second rule), \enquote{move back to the left} using the $Y$ for $n+1$ generations (third rule) and finally add an underscore at the left side, replacing $Y$ by the starting state symbol $``H 1"$ of $\mathcal{T}$ (see figure \ref{fig:trivialtHalterMultiwayGraph}). Now, the whole process can start again because the new underscores ensure a \enquote{fresh} new tape for $\mathcal{T}$ which now contains, by the $T$-halter constraints, $n+1$ as the next input for $\mathcal{T}$ to continue with while everything outside the bounding underscores will be ignored. \par 
The resulting multiway system of this continued re-evaluation of $\mathcal{T}$ will run indefinitely, subsequently running instances of $\mathcal{T}$ with larger and larger values of $n$. Despite that, it still has only one state in all generations. In order to make the number of states increase exactly when $n$ increases, i.\,e. some instance of $\mathcal{T}$ finished working, we add the rules \texttt{$``0Y1"\rightarrow ``Z"$} and \texttt{$``Z"\rightarrow ``ZZ"$}. This way, the multiway states graph branches every time the system starts a new instance of $\mathcal{T}$ into a main branch where the evaluation of $\mathcal{T}$ continues and a diverging branch where the second rule just creates longer and longer strings of $Z$'s forever, constantly adding one new state to every generation. Thus, the number of divergent branches is always equal to $n-1$ and these branches grow constantly forever, causing the desired behavior as shown in figure \ref{fig:trivialtHalterMultiwayGraph}. \par
As it takes a \enquote{preparation time} of $p(n)=2(n+1)+1$ steps (the head moves over $n+1$ symbols\footnote{If the head starts at the left of the first digit instead, the formula is $p(n)=2(n+2)$.}, including the new 1) before the $n+1$-th iteration of $\mathcal{T}$ can start after the $n$-th iteration is done, there will be $n$ states for $T(n) + p(n)$ steps in the multiway system constructed above, before the number of states increases by one. Let us generally investigate the sequences obtained this way:

\begin{definition}\label{def:nOccursFtimes}
Let $f:\N_+\rightarrow \N_+$ be a function. The sequence \enquote{$n$ occurs $f(n)$ times} is defined by $A_f(\sum_{k=1}^n f(k))=A_f(m+ \sum_{k=1}^n f(k))=n$ for all $n,m\in \N$ with $n\geq 1 \land m < f(n+1)$. 
\end{definition}

\begin{definition}\label{def:linearInterpolation}
Let $f:\N_+\rightarrow \N_+$ be a function and $S\subseteq \N_+$ an infinite set. The \enquote{linear interpolation of $f$ over $S$}, denoted $L_S(f)$, is defined as the polygonal chain starting at $(0,0)$ and passing through all points $(n,f(n)), n\in S$ ordered by $n$.   
\end{definition}

\begin{figure}[ht]
\centering
\begin{subfigure}{.45\textwidth}
  \centering
  \includegraphics[width=\linewidth]{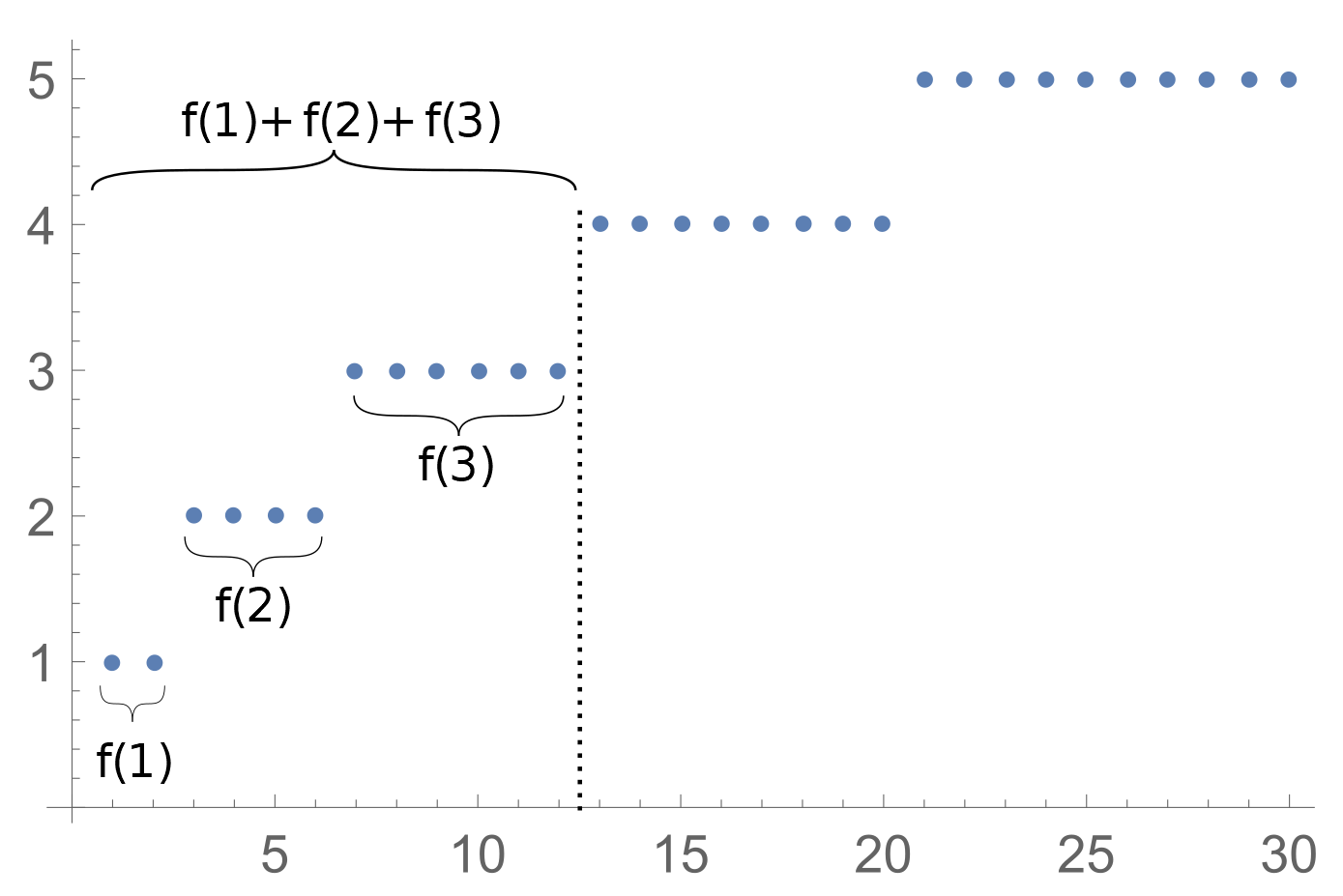}
  \caption{Example for definition \ref{def:nOccursFtimes}: the sequence \enquote{$n$ occurs $f(n)=2n$ times} ($A_{2n}$).}
  \label{fig:nOccurs2nTimes}
\end{subfigure}
\hfill
\begin{subfigure}{.45\textwidth}
  \centering
  \includegraphics[width=\linewidth]{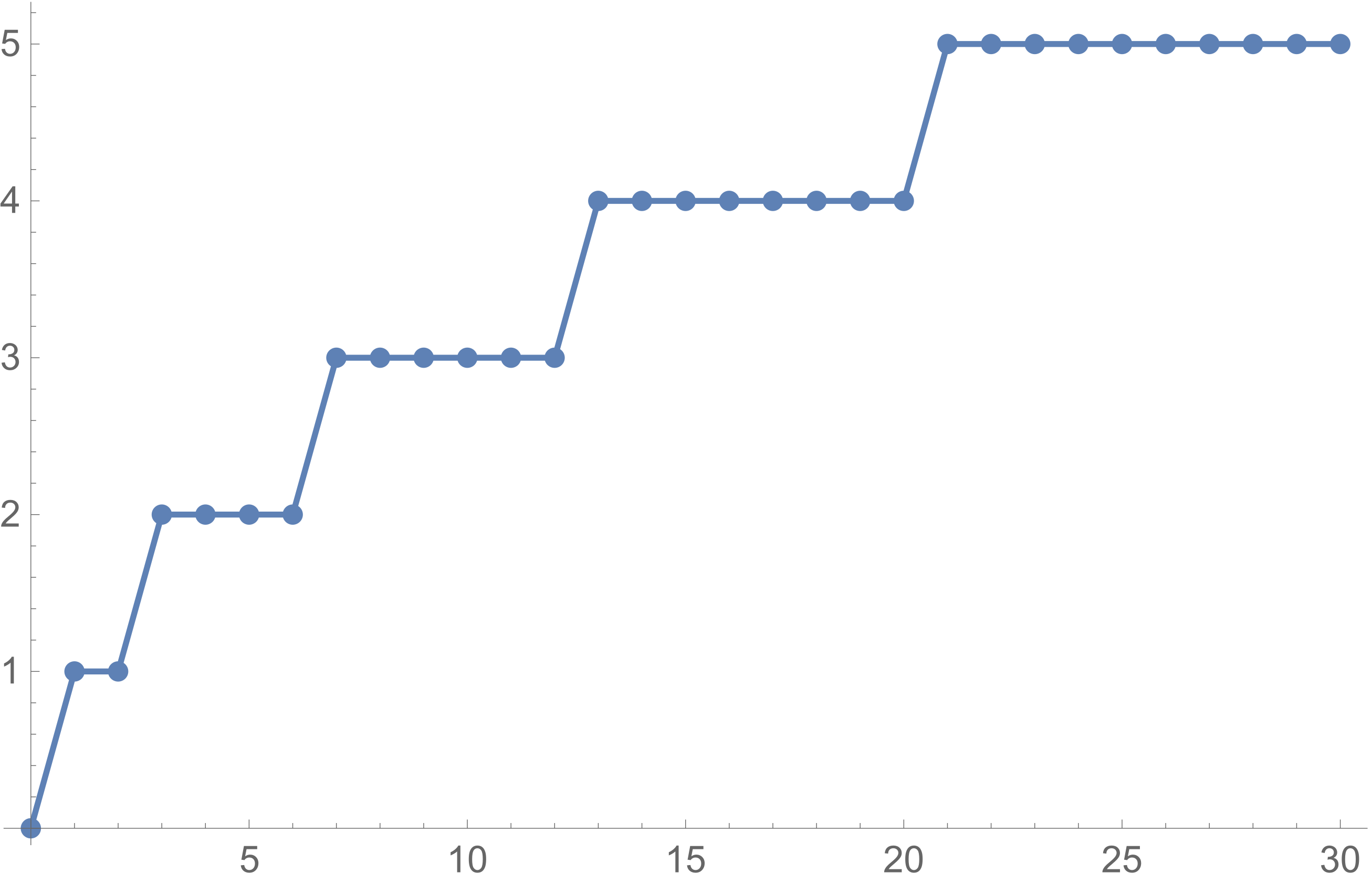}
  \label{fig:nOccurs2nTimesLinearInterpolation}
    \caption{Example for definition \ref{def:linearInterpolation}: the linear interpolation of $A_{2n}$ over $\N_+$, now a continuous function from $\R_{\geq 0}$ to $\R_{\geq 0}$.}
\end{subfigure}%
\caption{Plots for illustrating definitions \ref{def:nOccursFtimes} and \ref{def:linearInterpolation}.}
\end{figure}

Since definition \ref{def:nOccursFtimes} requires $f$ to be always greater than zero, every natural number can be represented as some sum over consecutive values of $f$ plus a remainder and, as figure \ref{fig:nOccurs2nTimes} shows, this definition indeed matches the informal description of \enquote{$n$ occurs $f(n)$ times}. Notice as well that the linear interpolation, despite being defined as a curve in $\R^2$, can be regarded as a continuous function from $\R_{\geq 0}$ to $\R_{\geq 0}$ because for all $n\in S$, the function to be interpolated assigns precisely one $y$-value and since $S$ is an infinite subset of $\N$, the linear interpolation function is defined everywhere on $\R_{\geq 0}$.\par 
Now, to express some sequence $A_f$ explicitly, define the set of increase-indices of $A_f$ as $I(A_f):=\{ n \in \N_+ \mid A_f(n-1) < A_f(n) \}$. It follows that $L_{I(A_f)}(A_f)$ will always be strictly monotonically increasing and unbounded. Therefore, its inverse function $L_{I(A_f)}(A_f)^{-1}$ exists and we can formulate the following lemma:

\begin{lemma}\label{lemma:inverseSummatory}
For some function $f:\N_+\rightarrow \N_+$, we have $L_{I(A_f)}(A_f)(x)=L_{\N_+}(\sum_{k=1}^n f(k))^{-1}(x)$.
\end{lemma}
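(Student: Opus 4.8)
The plan is to show that the two sides of the claimed identity are literally the same polygonal chain, namely the one starting at $(0,0)$ and passing in turn through the points $(F(n),n)$, $n\in\N_+$, where $F(n):=\sum_{k=1}^n f(k)$ denotes the summatory function of $f$. Since $f(k)\geq 1$ for every $k$, the sequence $F$ is strictly increasing and unbounded; hence $L_{\N_+}(F)$ is continuous, strictly increasing and unbounded as a map $\R_{\geq 0}\rightarrow\R_{\geq 0}$, so it is a bijection and the inverse occurring on the right-hand side is well defined.

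For the left-hand side I would read the block structure of $A_f$ off Definition \ref{def:nOccursFtimes}: the value $n$ is attained exactly on the contiguous block of indices $\{F(n),F(n)+1,\dots,F(n+1)-1\}$, so $A_f$ strictly increases precisely at the indices $F(1),F(2),F(3),\dots$. Therefore $I(A_f)=\{F(n)\mid n\in\N_+\}$ with $A_f(F(n))=n$, and substituting this into Definition \ref{def:linearInterpolation} shows that $L_{I(A_f)}(A_f)$ is exactly the polygonal chain from $(0,0)$ through $(F(1),1),(F(2),2),\dots$.

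For the right-hand side I would use that $L_{\N_+}(F)$ is, on each interval $[n,n+1]$, the increasing affine segment joining $(n,F(n))$ and $(n+1,F(n+1))$ (with $F(0):=0$). The inverse of an increasing affine function is again increasing and affine, so $L_{\N_+}(F)^{-1}$ is affine on each image interval $[F(n),F(n+1)]$ and sends $F(n)\mapsto n$; equivalently, reflecting the chain $L_{\N_+}(F)$ across the diagonal turns the segment between $(n,F(n))$ and $(n+1,F(n+1))$ into the segment between $(F(n),n)$ and $(F(n+1),n+1)$, and since the abscissae $0<F(1)<F(2)<\cdots$ stay ordered these glue into the polygonal chain from $(0,0)$ through $(F(1),1),(F(2),2),\dots$. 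This is the same chain as on the left, so the two functions agree on all of $\R_{\geq 0}$.

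I expect the only delicate point to be the bookkeeping at small arguments: Definition \ref{def:nOccursFtimes} constrains $A_f$ only from index $F(1)$ onward, and $I(A_f)$ implicitly refers to $A_f$ at index $0$, so one must fix the natural convention (e.g. $A_f$ constant before its first genuine block and $A_f(0):=0$) under which $F(1)$ is indeed the least increase-index and the initial segment of $L_{I(A_f)}(A_f)$ runs from $(0,0)$ to $(F(1),1)$, matching the initial segment of $L_{\N_+}(F)^{-1}$. The remaining steps — that two polygonal chains with the same ordered vertex list coincide as functions, and that reflecting a strictly increasing piecewise-affine bijection across the diagonal yields its inverse — are routine.
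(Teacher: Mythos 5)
Your proof is correct, and it takes a genuinely different (more structural) route than the paper's. The paper sets $\phi:=L_{I(A_f)}(A_f)$ and $\psi:=L_{\N_+}(\sum_{k=1}^n f(k))$ and verifies $(\phi\circ\psi)(x)=x$ by explicitly composing the affine interpolation formulas on each interval $(n,n+1)$, with the observation that the interpolation nodes of $\phi$ bracketing $\psi(x)$ are exactly $\psi(n)$ and $\psi(n+1)$ appearing as one step inside that computation; it then asserts the right-inverse direction analogously. You instead promote that observation to the whole argument: reading off from Definition \ref{def:nOccursFtimes} that $I(A_f)=\{F(n)\mid n\in\N_+\}$ with $A_f(F(n))=n$, you identify both sides as the single polygonal chain through $(0,0),(F(1),1),(F(2),2),\dots$, using that reflecting a strictly increasing piecewise-affine bijection across the diagonal yields its inverse. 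This is shorter and makes the mechanism more transparent, at the price of leaning on two routine facts the paper's computation avoids stating (a polygonal chain is determined as a function by its ordered vertex list; reflection inverts an increasing bijection) --- both are harmless here. Your bookkeeping caveat is a real feature of the definitions, not a defect of your argument: Definition \ref{def:nOccursFtimes} leaves $A_f$ unconstrained on $\{0,1,\dots,f(1)-1\}$, and $I(A_f)$ queries $A_f(n-1)$, so a convention such as $A_f(0):=0$ is needed for $F(1)$ to be an increase-index; the paper's proof silently assumes the same thing when it evaluates $\phi(\psi(1))=1$, so you are, if anything, more careful than the original.
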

\begin{proof}
For readability, let $T_\sigma$ be the function $\sum_{K=1}^n T(k)$, $\phi(x):=L_{I(A_f)}(A_f)(x)$ and $\psi(x):=L_{\N_+}(\sum_{k=1}^n f(k))(x)$. By definition \ref{def:linearInterpolation}, the linear interpolation of a function equals that function on the interpolation set, so 
\begin{equation}
    \forall n\in \N_+: (\phi \circ \psi)(n) = \phi(\sum_{k=1}^n f(k)) = n
\end{equation}
For values $x\in (n,n+1), n\in \N$, the linear interpolation gives
\begin{align}\label{eq:psiXinvSummator}
\psi(x)&=\frac{\Delta y}{\Delta x}(x-n) + \psi(n) =  \frac{\psi(n+1)-\psi(n)}{n+1-n}(x-n)+\psi(n) \notag \\ &=(\psi(n+1)-\psi(n))(x-n)+\psi(n).
\end{align}
Letting $y=\psi(x)$, we know that 
\begin{equation}
    (\phi\circ\psi)(x)=\phi(y)=\frac{\phi(y_2)-\phi(y_1)}{y_2-y_1}(y-y_1)+\phi(y_1)
\end{equation} 
for some $y_1,y_2\in I(A_f)$ where $y_1<y<y_2$ and $y_1,y_2$ are the values in $I(A_f)$ closest to $y$. Since $\psi$ is strictly monotonically increasing, $y_1$ and $y_2$ must be given by $\psi(n)$ and $\psi(n+1)$ respectively. Thus, 
\begin{align}{c|r}
    (\phi\circ\psi)(x)&=\frac{\phi(\psi(n+1))-\phi(\psi(n))}{\psi(n+1)-\psi(n)}(\psi(x)-\psi(n))+\phi(\psi(n)) & \\
    &=\frac{n+1-n}{\psi(n+1)-\psi(n)}(\psi(x)-\psi(n))+n & \notag \\
    &= \frac{(\psi(n+1)-\psi(n))(x-n)+\psi(n) - \psi(n)}{\psi(n+1)-\psi(n)} + n & \text{by equation \ref{eq:psiXinvSummator}} \notag \\
    &=x-n+n=x.\notag  &
\end{align}
So $\phi$ is a left-inverse of $\psi$ on $\R_{\geq 0}$. Analogously, it can be shown that $\phi$ is also a right-inverse of $\psi$, so indeed, $L_{I(A_f)}(A_f)(x)$ and $L_{\N_+}(\sum_{k=1}^n f(k))(x)$ are inverse functions. 
\end{proof}

Putting it all together, we conclude from the previous Turing machine investigation that for every halting function $T\in \Omega(n)$, there is a multiway system which has the growth function $g_M(n)=A_{T+p}(n)$ for some $p\in \Theta(n)$. Additionally, as $p$ \enquote{delays} the growth function even more\footnote{Since $p\in \Theta(n)$, $g_M$ will even become strictly less than $L_{\N_+}(T_\sigma)^{-1}$ very soon. In most practical cases, $g_M$ is much lower.}, i.\,e. $\forall n\in \N_+: A_{T+p}(n)\leq A_T(n)$, it follows that 
\begin{equation} 
L_{I(g_M)}(g_M)(n) \leq L_{I(A_T)}(A_T)(n) = L_{\N_+}(T_\sigma)^{-1}(x) \Rightarrow g_M(n) \in \mathcal{O}\left(L_{\N_+}(\sum_{k=1}^n T(k))^{-1}\right)
\end{equation}
by lemma \ref{lemma:inverseSummatory}, which proves theorem \ref{theorem:slowness}.

\subsection{Applications of Theorem \ref{theorem:slowness}}\label{section:applicationSlowness}

Computing linear interpolations and their inverse functions seems hard to do analytically because in most cases, there are no elementary closed-form expressions describing them. Therefore, it might seem difficult to actually apply theorem \ref{theorem:slowness}. However, since we are only interested in growth rates, we can use approximations to make calculations much more easy. 

\begin{lemma}\label{lemma:analyticalInvert}
If $f:\N_+\rightarrow \N_+$ is strictly increasing, $g:\R_{\geq 0}\rightarrow \R_{\geq 0}$ is continuous and bijective and $\forall n\in\N_+: g(n)=f(n)$, then $L_{\N_+}(f)^{-1}\in \Theta(g^{-1})$. 
\end{lemma}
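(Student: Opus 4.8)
The plan is to show directly that the two inverse functions differ by at most the additive constant $1$ everywhere on $\R_{\geq 0}$, and then to upgrade this to the required multiplicative ($\Theta$) estimate using unboundedness. Write $\psi:=L_{\N_+}(f)$. First I would record the structural facts. Since $f$ is strictly increasing with values in $\N_+$ (so $0<f(1)<f(2)<\dots\to\infty$), the polygonal chain $\psi$ is a continuous, strictly increasing, unbounded map $\R_{\geq 0}\to\R_{\geq 0}$, hence a bijection whose inverse $\psi^{-1}$ is again a continuous strictly increasing bijection. Likewise, a continuous bijection of the interval $\R_{\geq 0}$ is strictly monotone, and a strictly decreasing one could not be surjective onto $\R_{\geq 0}$; so $g$ is strictly increasing, which forces $g(0)=0=\psi(0)$, and $g^{-1}$ is a continuous strictly increasing bijection as well.

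The key observation is that $\psi$ and $g$ \emph{synchronise at the integers}: for every $n\in\N_+$ we have $\psi(n)=f(n)=g(n)$. Hence, by monotonicity, continuity and the intermediate value theorem, the restriction of each of $\psi$ and $g$ to the interval $[n,n+1]$ is a bijection onto the \emph{same} interval $[f(n),f(n+1)]$, and similarly both map $[0,1]$ onto $[0,f(1)]$. Since the intervals $[0,f(1)]$ and $[f(n),f(n+1)]$, $n\in\N_+$, cover $\R_{\geq 0}$, for every $y\geq 0$ there is an $n\geq 0$ such that both $\psi^{-1}(y)$ and $g^{-1}(y)$ lie in $[n,n+1]$. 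Consequently
\begin{equation*}
\abs{\psi^{-1}(y)-g^{-1}(y)}\leq 1\qquad\text{for all }y\geq 0 .
\end{equation*}

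It remains to convert this into an asymptotic statement. As $g^{-1}$ is increasing and unbounded, choose $y_0$ with $g^{-1}(y)\geq 2$ for all $y\geq y_0$. Then for $y\geq y_0$ the displayed inequality gives $\psi^{-1}(y)\leq g^{-1}(y)+1\leq 2g^{-1}(y)$ and $\psi^{-1}(y)\geq g^{-1}(y)-1\geq \tfrac12 g^{-1}(y)$, so $\tfrac12 g^{-1}(y)\leq\psi^{-1}(y)\leq 2g^{-1}(y)$ eventually, which is precisely $\psi^{-1}\in\Theta(g^{-1})$, i.e.\ $L_{\N_+}(f)^{-1}\in\Theta(g^{-1})$. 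The one point to be careful about — what I would flag as the main obstacle — is the temptation to first prove the stronger-looking $\psi\in\Theta(g)$ and then "invert" it: that claim is false in general, since the ratios $f(n+1)/f(n)$ on a unit interval can be unbounded (e.g.\ for doubly exponential $f$), so the argument must be run on the inverses, where the unit-interval-to-unit-interval matching makes the difference bounded by $1$. The rest is a routine use of monotonicity and the intermediate value theorem.
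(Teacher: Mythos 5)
Your proof is correct and follows essentially the same route as the paper's: both arguments exploit that $g$ and $L_{\N_+}(f)$ agree at the integers, hence carry each unit interval $[n,n+1]$ onto the same target interval $[f(n),f(n+1)]$, so the two inverses always land in a common unit interval and differ by at most $1$. If anything, you are more careful than the paper at two points it glosses over — justifying that a continuous bijection of $\R_{\geq 0}$ agreeing with $f$ at integers must be strictly increasing with $g(0)=0$, and explicitly converting the additive bound $\abs{\psi^{-1}-g^{-1}}\leq 1$ into a $\Theta$-estimate via unboundedness — and your closing caveat matches the paper's own footnote that $f\in\Theta(g)$ does not in general imply $f^{-1}\in\Theta(g^{-1})$.
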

\begin{proof}
Because $g$ is continuous and takes the same values as $f$ for natural arguments, we know that $\forall n\in \N_+: \forall x\in (n,n+1): f(n)\leq g(x)\leq f(n+1)$. Using the fact that the linear interpolation equals the function for natural arguments, this equation becomes $L_{\N_+}(f)(n)\leq g(x)\leq L_{\N_+}(f)(n+1)$. This implies $L_{\N_+}(f)^{-1}(y_1)\leq g^{-1}(y) \leq L_{\N_+}(f)^{-1}(y_2)$ for values $y\in (y_1,y_2)$ where $y_1=f(n)$ and $y_2=f(n+1)$. Expanding out gives
\begin{equation}
    L_{\N_+}(f)^{-1}(f(n))\leq g^{-1}(y) \leq L_{\N_+}(f)^{-1}(f(n+1)) \;\Rightarrow\; n\leq g^{-1}(y)\leq n+1
\end{equation}
which means that the difference of $g^{-1}(y)$ and $L_{\N_+}(f)^{-1}(y)$ is always bounded by 1. Therefore, $g^{-1}\in\Theta(L_{\N_+}(f)^{-1})$.
\end{proof}

Whenever it is possible to express some halting function in a closed form (e.\,g. $T:\N_+\rightarrow \N_+, n\mapsto 2n^2+3n$) which could also describe a bijective function on $\R_{\geq 0}$ (like $f(x)=2x^2+3x$), we can use the lemma above to simplify calculations: Since $f$ is monotonically increasing and the linear interpolation equals the summatory function for natural arguments, we have
\begin{equation}\label{eq:integralInequality}
    \sum_{k=0}^{\floor x} f(k) \leq \int_0^x f(t)\d t \leq \sum_{k=1}^{\ceil x} f(k) \;\Rightarrow\; L_{\N_+}(T_\sigma)(\floor x) \leq F(x) \leq L_{\N_+}(T_\sigma)(\ceil x).
\end{equation}
Therefore, lemma \ref{lemma:analyticalInvert} tells us that we can approximate the inverse of the summatory function used in theorem \ref{theorem:slowness} just by computing the inverse of the integral of $f$. Especially in the case of logarithms or exponential functions, solving integrals is much easier than computing sums, so this lemma can be very useful. \par 
To demonstrate this and assist the proof of theorem \ref{theorem:classesPartition}, let us imagine we wanted to construct a multiway system of logarithmic growth rate. We can approach this problem by designing a $T$-halter for some $T_{\text{exp}}\in \Theta(2^n)$ and implementing the construction described in section \ref{section:slownessProof} to get a multiway system with the inverse growth rate. As an example, take the Turing machine $\mathcal{T}_{\text{exp}}$ shown in figure \ref{fig:expTm}. It is started in state 1 which simply moves the head to the right end of the word on the tape and changes to state 2. In this state, the head moves left again, replacing 2's by 1's until it encounters a 1, which it changes to a 2 and returns to state 1, repeating the process. It is easy to see that this is precisely the process of incrementing a binary number where 1 corresponds to a zero and 2 to a one. The process is repeated until the head moves to the left of the word, which, by then, consists only of 1's since the previous string was the symbol $2$ repeated $n$ times. When the head encounters the first blank symbol on the left, it writes one more 1 to satisfy the $T$-halter constraint of incrementing the unary number, and then halts. The process is shown in figure \ref{fig:expTm} for $n=3$.

\begin{figure}[ht]
\centering
\begin{subfigure}{\textwidth}
  \centering
  \includegraphics[width=.7\linewidth]{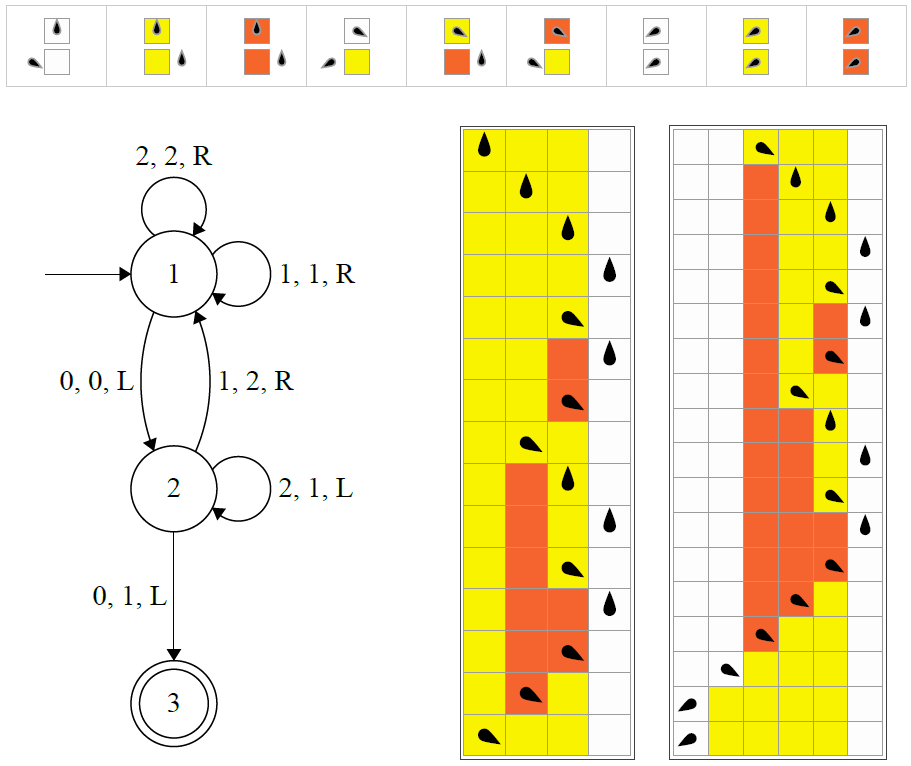}
  %\caption{A subfigure}
  %\label{fig:sub2}
\end{subfigure}
\caption{The rule plot, state transition diagram and one example evolution (starting from three ones) for the Turing machine $\mathcal{T}_{\text{exp}}$.}
\label{fig:expTm}
\end{figure}
\newpage 
Consider the action of the machine when started at the right of some binary word of 1's and 2's in state 1: The head moves $b$ symbols to the left until it encounters the rightmost 1, and then $b$ symbols back after changing it. Since there are $2^{n-b}$ binary words of length $n$ where the rightmost $1$ is at position $b$, the Turing machine takes 
\begin{align}
    \sum_{k=1}^n 2k\cdot 2^{n-k}&=2\sum_{k=1}^n (n-k)2^k=2\left(n\sum_{k=1}^n 2^k - \sum_{k=1}^n k2^k \right)& \notag \\
    &= 2\left( n(2^{n+1}-1) - \frac{2-(n+1)2^{n+1}+n2^{n+2}}{(2-1)^2}\right) & \text{geometric series and \cite{mathWorld2}} \notag \\
    &= 2(n2^{n+1}-n-2+n2^{n+1}+2^{n+1}-n2^{n+2}) & \notag \\
    &= 2(-n-2+2^{n+1}) = 2^{n+2}-2n-4 & 
\end{align} 
steps before the head is at the right of $1^{n}$ and in state 1. Since the machine takes $n+2$ steps to move to the left again, write the new 1 and halt, as well as taking $n$ steps to move the head to the right in the first place, the total number of states, including starting and halting state, simplifies to
\begin{equation}
    T_{\text{exp}}(n)=2n+3+2^{n+2}-2n-4=2^{n+2}-1.
\end{equation}
In combination with lemma \ref{lemma:analyticalInvert}, another strategy for simplifying calculations is to give easily computable bounds for $T$. In this case, we use the fact that $2^{n+1} < 2^{n+2}-1 < 2^{n+2}$ for all $n\in \N$ to obtain $2^{x+1} < L_{\N_+}(T_{\text{exp}})(x) < 2^{x+2}$ for all $x\in \R_{\geq 0}$. Letting $l(x):=L_{\N_+}(T_{\text{exp}})(x)$ for readability, this becomes
\begin{align}
    2^{x+1} < l(x) < 2^{x+2} &\iff \int_0^x 2^{t+1} \d t < \int_0^x l(t) \d t < \int_0^x 2^{t+2} \d t \notag \\
    &\iff \frac{2}{\ln(2)}(2^x - 1) < \int_0^x l(t) \d t < \frac{4}{\ln(2)}(2^x - 1). 
\end{align}
Since the inverse of $x\mapsto \frac{a}{\ln(2)}(2^x - 1)$ is $y\mapsto \log_2(y\frac{\ln(2)}{a} + 1)$, and $f(x)<g(x)\iff f^{-1}(x)>g^{-1}(x)$, the equation is equivalent to 
\begin{equation}
    \log_2\left(y\frac{\ln(2)}{2}+1\right) > \left(\int_0^x l(t) \d t\right)^{-1} > \log_2\left(y\frac{\ln(2)}{4}+1\right).
\end{equation}
Now, notice that
\begin{equation}
    \log_2(y+c)=\log_2(y)+\log_2\left(1+\frac{c}{y}\right)\overset{x\rightarrow \infty}{\longrightarrow}\log_2(y) \quad \text{(since $1+\frac{c}{y} \overset{x\rightarrow \infty}{\longrightarrow} 1$)} 
\end{equation}
and
\begin{equation}
    \log_2(yc)=\log_2(y)+\log_2(c)\quad \land \quad \log_2(y)+\log_2(c) \overset{x\rightarrow \infty}{\longrightarrow}\log_2(y).
\end{equation}
From this, we know $(\int_0^x l(t) \d t)^{-1}\in \Theta(\log_2(x))$ because the upper and lower bound asymptotically equal $\log_2(x)$. By lemma \ref{lemma:analyticalInvert} and equation \ref{eq:integralInequality}, $(\int_0^x l(t) \d t)^{-1}$ is also in $\Theta(L_{I(A_{T_{\text{exp}}})}(A_{T_{\text{exp}}}))$ and we can conclude that $A_{T_{\text{exp}}} \sim_\Theta \log_2(x)$. Simulating the multiway system and measuring the growth function empirically supports this as figure \ref{fig:expSysGrowthRate} shows. In future examples, most steps of the argumentation presented here can be shortened. However, this method of estimation does not work in all cases because the inverse bounds might not accurate enough to be asymptotically equal\footnote{It is not true in general that $f \in \Theta(g)$ implies $f^{-1}\in \Theta(g^{-1})$. As a counter-example, consider $f(x)=\ln(x)$ and $g(x)=2\ln(x)$.}. 

\begin{figure}[ht]
\centering
\begin{subfigure}{\textwidth}
  \centering
  \includegraphics[width=.65\linewidth]{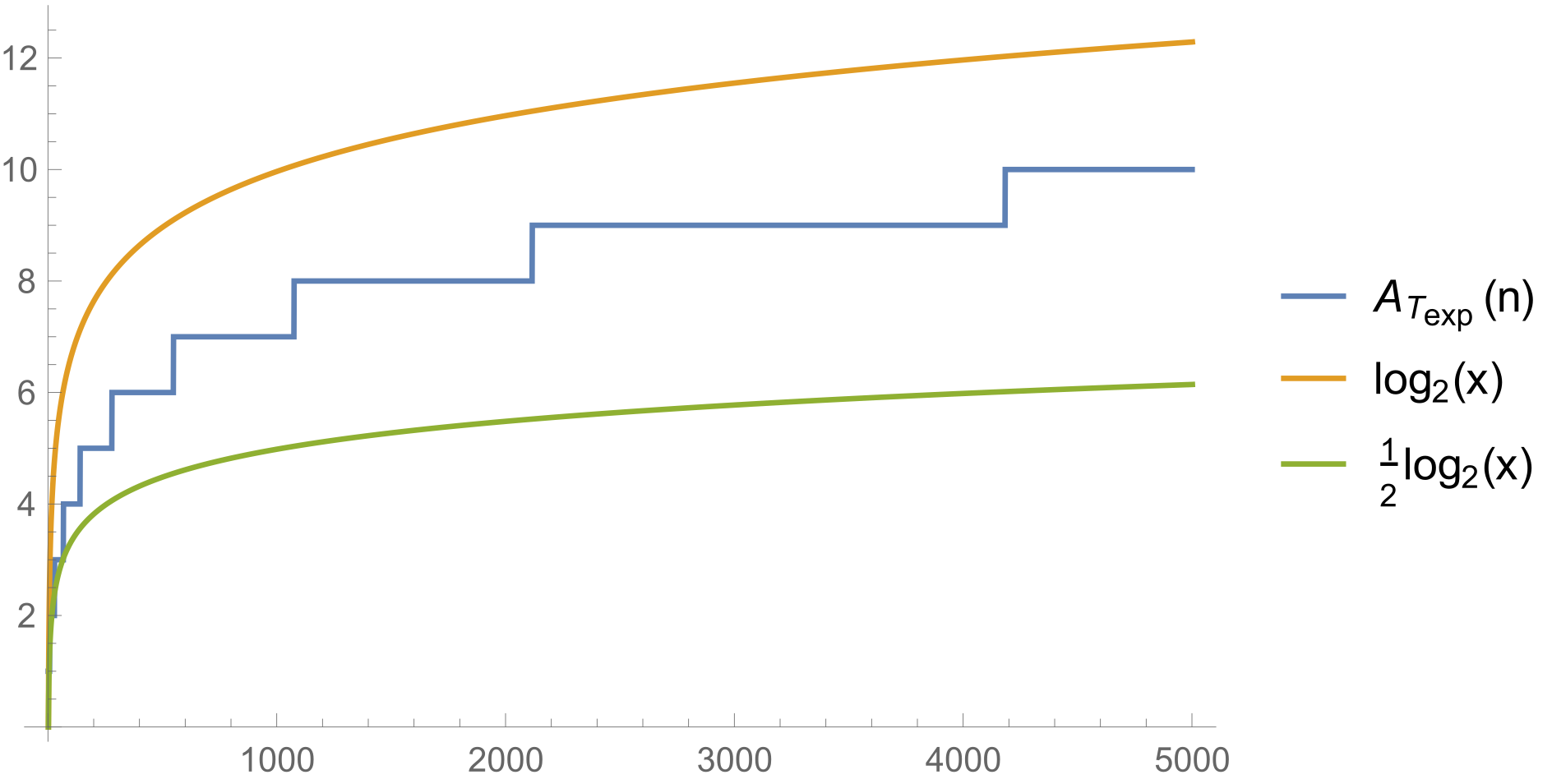}
  %\caption{A subfigure}
  %\label{fig:sub2}
\end{subfigure}
\caption{The growth function of the multiway system emulating $\mathcal{T}_{exp}$ is bounded by $\log_2(x)$ and $\frac{1}{2}\log_2(x)$, demonstrating that it is in $\Theta(\log_2(x))$.}
\label{fig:expSysGrowthRate}
\end{figure}

\subsection{Implications of Theorem \ref{theorem:slowness}}

What we have seen in the above example is just a simple demonstration of the power of theorem \ref{theorem:slowness}. Besides from helping us later to prove theorem \ref{theorem:classesPartition}, it tells us a lot about the abstract structure of multiway growth function, their \enquote{growth spectrum}. By providing the following two corollaries, theorem \ref{theorem:slowness} gives us knowledge about what this spectrum of possible growth rates looks like, i.\,e. which kinds of growth rates are possible and which kinds are not. In addition to that, it establishes connections between multiway growth functions and other classes of functions, namely computable functions and primitive recursive functions.

\begin{corollary}\label{corollary:multiwaySlowerInverseComputable}
For every computable function $f:\N\rightarrow \N,f\in \omega(1)$, there is a multiway system with growth rate $(a,b)$ such that $a,b\in \mathcal{O}(f^{-1})$ for an asymptotic inverse\footnote{An asymptotic inverse of a function $f$ is some function $f^{-1}$ satisfying $(f^{-1})^{-1} \in \Theta(f)$. Such functions are helpful for describing $f$ when it is not invertible in general.} $f^{-1}$.
\end{corollary}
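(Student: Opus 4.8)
The plan is to reduce Corollary~\ref{corollary:multiwaySlowerInverseComputable} to Theorem~\ref{theorem:slowness} by showing that any computable $f\in\omega(1)$ can be ``slowed down'' into a halting function $T\in\Omega(n)$ of a $T$-halter whose summatory inverse $T_*^{-1}$ dominates $f^{-1}$ only from below, i.e.\ $T_*^{-1}\in\mathcal{O}(f^{-1})$. First I would recall the observation already made in Section~\ref{section:slownessProof}: for \emph{any} computable $f$ there is a $T$-halter $\mathcal{T}$ with halting function $T$ satisfying $T\in\Omega(f)$, obtained by taking the Turing machine computing $f$, padding it so that it additionally writes $n+1$ in unary (to meet the $T$-halter tape constraints of Figure~\ref{fig:tHalterTape}), and, if necessary, inserting a linear number of dummy steps so that $T\in\Omega(n)$ as well. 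Thus we may assume $T\in\Omega(n)$ and $T\in\Omega(f)$, which by definition of $\Omega$ means there are constants $c>0$, $n_0$ with $T(n)\ge c\,f(n)$ for $n\ge n_0$.

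The second step is to pass from the pointwise bound $T\ge c f$ to a bound on the summatory functions and then to their inverses. Writing $T_\sigma(n)=\sum_{k=1}^n T(k)$ and $f_\sigma(n)=\sum_{k=1}^n f(k)$, monotonic summation gives $T_\sigma(n)\ge c\,f_\sigma(n) - C_0$ for a constant $C_0$ absorbing the finitely many terms below $n_0$, hence $T_\sigma\in\Omega(f_\sigma)$. Passing to the linear interpolations $T_*=L_{\N_+}(T_\sigma)$ and $f_*=L_{\N_+}(f_\sigma)$ preserves this (each interpolation lies between consecutive values, so the $\Omega$ relation is unaffected up to the Lipschitz slack of at most one sampling interval). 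Now the elementary fact that for strictly increasing unbounded $u,v$ with $u\ge c\,v$ one has $u^{-1}\le v_c^{-1}$ where $v_c(x)=c\,v(x)$, combined with $v_c^{-1}(x)=v^{-1}(x/c)$, yields $T_*^{-1}(x)\le f_*^{-1}(x/c)+O(1)$. Finally $f_*^{-1}$ and $f^{-1}$ are asymptotically comparable: since $f\in\omega(1)$ is increasing (WLOG, else replace $f$ by its running maximum, which changes nothing up to $\Theta$), the summatory function satisfies $f(n)\le f_\sigma(n)\le n f(n)$, so $f_\sigma$ and hence $f_*$ lies between $f$ and $x\cdot f$; taking inverses and using that $f^{-1}$ grows slower than any power (so the factor $x$ only shifts the argument logarithmically, negligibly for an \emph{asymptotic} inverse) gives $f_*^{-1}\in\Theta(f^{-1})$. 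Chaining these: $T_*^{-1}\in\mathcal{O}(f_*^{-1})\subseteq\mathcal{O}(f^{-1})$.

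With $T_*^{-1}\in\mathcal{O}(f^{-1})$ established, Theorem~\ref{theorem:slowness} applied to the $T$-halter $\mathcal{T}$ produces a multiway system with growth rate $(a,b)$ where $a,b\in\mathcal{O}(T_*^{-1})\subseteq\mathcal{O}(f^{-1})$, which is exactly the claim. The one piece of care needed is the notion of \emph{asymptotic inverse} in the corollary's statement: $f$ need not be injective, so $f^{-1}$ is only defined up to $\Theta$ via $(f^{-1})^{-1}\in\Theta(f)$; I would handle this exactly as above, replacing $f$ by its nondecreasing majorant $\tilde f(n)=\max\{f(k):k\le n\}$, which is computable, strictly increasing after collapsing plateaus, still in $\omega(1)$, and has $\tilde f\in\Theta(f)$ in the asymptotic-inverse sense — so any asymptotic inverse of $f$ is $\Theta$-equivalent to the genuine inverse of (the interpolation of) $\tilde f$.

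The main obstacle I expect is not any single step but the bookkeeping at the inverse-of-sum stage: one must be careful that $f\in\Omega(g)$ does \emph{not} in general imply $f^{-1}\in\Omega(g^{-1})$ (the footnote in Section~\ref{section:applicationSlowness} flags precisely this with $\ln x$ vs.\ $2\ln x$), so the direction of the inequalities matters and one cannot be cavalier. Here we are lucky: we only need the \emph{upper} bound $T_*^{-1}\in\mathcal{O}(f^{-1})$, and $u\ge c v \Rightarrow u^{-1}(x)\le v^{-1}(x/c)$ is the ``safe'' direction — a multiplicative constant on $v$ becomes an argument rescaling on $v^{-1}$, which for the slowly-growing inverses in play ($f\in\omega(1)$, so $f^{-1}$ eventually dominates every fixed linear function, but we need the reverse, namely that $f^{-1}(x/c)\in\Theta(f^{-1}(x))$, which holds because $f^{-1}$ is sub-linear: $f^{-1}(x/c)\le f^{-1}(x)$ trivially for $c\ge 1$, and for $c<1$ one uses $f\in\omega(1)$ to bound $f^{-1}(x/c)/f^{-1}(x)$). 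Making these quantifier-order and constant-tracking details airtight is the real work; the Turing-machine construction is entirely inherited from the proof of Theorem~\ref{theorem:slowness} and needs no modification.
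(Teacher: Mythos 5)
Your route is essentially the paper's: build a Turing machine that computes $f$ with unary output so that its halting function satisfies $T(n)\geq f(n)$, pad it into a $T$-halter, and invoke Theorem~\ref{theorem:slowness}; the final chain $a,b\in\mathcal{O}(T_*^{-1})\subseteq\mathcal{O}(f^{-1})$ is the intended argument, and it goes through because every inversion you perform is in the ``safe'' direction (for increasing unbounded $u\geq v$ one gets $u^{-1}\leq v^{-1}$).

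That said, two of the auxiliary claims you assert along the way are false in general, and you should delete rather than repair them. First, $f_*^{-1}\in\Theta(f^{-1})$ does not hold: for $f(n)=\lceil\log_2 n\rceil$ one has $f^{-1}(y)\approx 2^y$ while $f_\sigma(n)\approx n\log_2 n$ gives $f_*^{-1}(y)\approx y/\log_2 y$, and your justification (``$f^{-1}$ grows slower than any power'') already fails for this $f$. Only the inclusion $\mathcal{O}(f_*^{-1})\subseteq\mathcal{O}(f^{-1})$, which follows from $f_\sigma\geq f$, is actually used in your chain, and that half is fine. Second, $f^{-1}(x/c)\in\Theta(f^{-1}(x))$ for $c<1$ is also false in general (take $f(n)=\lceil\log\log n\rceil$, so $f^{-1}(y)\approx e^{e^y}$ and $f^{-1}(2y)$ is not in $\mathcal{O}(f^{-1}(y))$); you do not need it, because the unary-output observation yields the pointwise bound $T(n)\geq f(n)$ with constant $1$, so no argument rescaling ever arises. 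With those two assertions excised the proof is correct and coincides with the paper's, which in fact glosses over the $T$-versus-$T_*$ bookkeeping that you spell out explicitly.
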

\begin{proof}
Since $f$ is computable, there exists some Turing machine computing $f(n)$ when given $n$. If we require the machine to read and write in- and output in unary coding, computing $f(n)$ must take $T(n)\geq f(n)$ steps simply because writing the result takes that long. Now, let $g:\R_{\geq 0}\rightarrow \R_{\geq 0}$ be a bijective tight lower bound of $T$. As $T\in \omega(1)$, $T$ is unbounded so $g$ always exists. From $g(x)\leq T(x)$, it follows that $g^{-1}(x)\geq T^{-1}(x)$ and by theorem \ref{theorem:slowness}, there is a multiway system for which the growth function has tight bounds $a,b\in \mathcal{O}(g^{-1})\Rightarrow a,b\in \mathcal{O}(f^{-1})$ for some asymptotic inverse of $f$.
\end{proof}

\begin{corollary}\label{corollary:multiwaySlowerComputable}
For every computable function $f:\N\rightarrow \N, f\in \omega(1)$, there is a multiway system with growth rate $(a,b)$ such that $a,b\in o(L_{\N_+}(f))$.
\end{corollary}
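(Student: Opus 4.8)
The plan is to reduce the statement to Corollary~\ref{corollary:multiwaySlowerInverseComputable}. That corollary already produces, for \emph{any} computable $\hat f\in\omega(1)$, a multiway system whose growth rate $(a,b)$ satisfies $a,b\in\mathcal{O}(\hat f^{-1})$ for an asymptotic inverse; and since $\mathcal{O}(h)\subseteq o(k)$ whenever $h\in o(k)$ (the quotient $|h/k|$ tends to $0$ while $|g/h|$ stays bounded), it is enough to exhibit a computable $\hat f\in\omega(1)$ whose inverse already lies in $o(L_{\N_+}(f))$. The subtle point is that $L_{\N_+}(f)$ may itself grow extremely slowly, so $\hat f$ must grow \emph{much} faster than $f$: applying Corollary~\ref{corollary:multiwaySlowerInverseComputable} to $f$ itself only yields growth rate $\mathcal{O}(f^{-1})$, which need not lie in $o(L_{\N_+}(f))$ (take $f$ linear).

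First I would take, for instance, $\hat f(n):=\min\{\,m\in\N_+ : f(m)\ge n^2\,\}$ (any superlinear stand-in, e.g.\ $2^n$ instead of $n^2$, works equally well). This $\hat f$ is computable because $f\in\omega(1)$ forces the search to terminate; it is nondecreasing; and it is again in $\omega(1)$, since $\hat f(n)\le C$ for infinitely many $n$ would force $f$ to take arbitrarily large values on the finite set $\{1,\dots,C\}$. Applying Corollary~\ref{corollary:multiwaySlowerInverseComputable} to $\hat f$ (equivalently: re-running the $T$-halter construction from the proof of Theorem~\ref{theorem:slowness} with $T$ the halting function of a Turing machine that computes $\hat f$) yields a multiway system with growth rate $(a,b)$, $a,b\in\mathcal{O}(\hat f^{-1})$, where we may work with the explicit asymptotic inverse $\hat f^{-1}(y)=\max\{\,n:\hat f(n)\le y\,\}$.

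The heart of the argument is then the estimate $\hat f^{-1}\in o(L_{\N_+}(f))$. When $f$ is nondecreasing this is immediate: $\hat f(n)\le y$ holds exactly when $f(\lfloor y\rfloor)\ge n^2$, so $\hat f^{-1}(y)\le\sqrt{f(\lfloor y\rfloor)}$, whereas $L_{\N_+}(f)(y)\ge f(\lfloor y\rfloor)$ straight from the definition of the linear interpolation; hence the quotient is at most $1/\sqrt{f(\lfloor y\rfloor)}$, which tends to $0$ because $f\in\omega(1)$. Combined with $a,b\in\mathcal{O}(\hat f^{-1})$ this gives $a,b\in o(L_{\N_+}(f))$. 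The only technical care needed here is the usual passage between a sequence and its linear interpolation --- harmless because $\hat f^{-1}$ is constant on each $[n,n+1)$ --- exactly as in Lemma~\ref{lemma:analyticalInvert} and the examples of Section~\ref{section:applicationSlowness}.

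The step I expect to be the real obstacle is the non-monotone case. If $f$ oscillates, $L_{\N_+}(f)$ dips arbitrarily often and the estimate above collapses, because one must now keep the growth rate below the ``valley envelope'' $n\mapsto\inf_{m\ge n}f(m)$, which in general is not computable. The way I would get around this is to observe that although the envelope is not computable, the \emph{positions} of the dips are: they are local minima of the computable function $f$, hence form a computable strictly increasing sequence $\nu(1)<\nu(2)<\cdots$. One then chooses $\hat f$ to outgrow $n\mapsto\nu(2^n)$ (and, should the valley \emph{values} $f(\nu(j))$ grow slowly, an even faster computable composition), so that the resulting growth function is logarithmically --- or as slowly as one likes --- small precisely at the $\nu(j)$; the remaining estimate that it is $o(L_{\N_+}(f))$ everywhere is then routine, splitting $\R_{\ge 0}$ into small neighbourhoods of the $\nu(j)$, where $L_{\N_+}(f)\ge f(\nu(j))$, and their complement, where $L_{\N_+}(f)$ is comparatively huge, and using once more that the growth function is monotone with unit increments so that it cannot jump between these regions.
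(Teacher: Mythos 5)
Your reduction and your treatment of the monotone case are essentially the paper's own proof. The paper applies Corollary \ref{corollary:multiwaySlowerInverseComputable} to $g:x\mapsto \overline{f}^{-1}(x^2)$, where $\overline{f}$ is the running maximum from Definition \ref{def:tightBounds}, and harvests the growth rate $\mathcal{O}(g^{-1})=\mathcal{O}(\sqrt{\overline{f}})\subseteq o(\overline{f})$; for nondecreasing $f$ your $\hat f(n)=\min\{m: f(m)\ge n^2\}$ is exactly this function, and the ``compose with $n^2$ to gain a square root of slack'' step is the whole content. Your verification that $\hat f$ is computable, nondecreasing and in $\omega(1)$, and the estimate $\hat f^{-1}(y)\le\sqrt{f(\lfloor y\rfloor)}\,$ versus $L_{\N_+}(f)(y)\ge f(\lfloor y\rfloor)$, are correct and in fact more careful than what is printed.

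The non-monotone case, which you rightly identify as the real obstacle, is where your proposal has a genuine gap, and the repair you sketch cannot be completed. The issue is not locating the dips but bounding, for each value $v$, the \emph{last} index at which $f$ drops below $v$; call it $\lambda(v)=\max\{k: f(k)<v\}$ (finite because $f\in\omega(1)$ forces $f\to\infty$). If an unbounded system has growth rate $(a,b)$ with $a\in o(L_{\N_+}(f))$, then along integers $a(n)\le f(n)$ eventually, so $\lambda(v)$ is bounded by $\max(N, t_v)$ where $t_v=\min\{k: \overline{g_M}(k)\ge v\}$ is computable (the monotone envelope $\overline{g_M}$ of a multiway growth function is a computable unbounded sequence). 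Thus $f$ would acquire a computable modulus of divergence. But there are computable $f\to\infty$ with no computable modulus: set $f(\langle e,s\rangle)=e+1$ if machine $e$ halts at exactly step $s$ and $f(\langle e,s\rangle)=\langle e,s\rangle+1$ otherwise; a computable modulus would decide the halting problem. For such $f$ no unbounded multiway system satisfies the conclusion (only the degenerate bounded systems, whose growth rate is declared to be $(1,1)$, qualify vacuously). Your local-minima sequence $\nu(j)$ does not evade this: the positions of local minima are decidable, but which of them lie below a given level $v$ arbitrarily late is exactly the non-computable information in $\lambda$, and outgrowing $\nu(2^n)$ gives no control over it. For what it is worth, the paper's printed proof has the same gap hidden in the identification $\mathcal{O}(\sqrt{L_{\N_+}(\overline{f})})=\mathcal{O}(\sqrt{L_{\N_+}(f)})$, which fails whenever $\overline{f}\notin\Theta(f)$; what both your argument and the paper's actually establish is the corollary with $L_{\N_+}(\overline{f})$ in place of $L_{\N_+}(f)$, i.e.\ the statement for monotone $f$.
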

\begin{proof}
The function $L_{\N_+}(\overline{f})$ (for the upper bounding sequence $\overline{f}$ from definition \ref{def:tightBounds}) is always greater than or equal to $f$, asymptotically equal to $f$ and computable (because equal to $f$) on the set of increase-indices $I(f)$. Therefore, $\overline{f}^{-1}$ is a computable function on $\N_+$ and so is $g:x\mapsto \overline{f}^{-1}(x^2)$. Using corollary \ref{corollary:multiwaySlowerInverseComputable}, this gives us a way to construct a multiway system with a growth rate in $\mathcal{O}(g^{-1})=\mathcal{O}(\sqrt{L_{\N_+}(f)})$ which is definitely in $o(L_{\N_+}(f))$.
\end{proof}

This quite remarkable fact also shows that for every multiway system growing faster than a bounded function, a more slowly growing multiway system exists because the growth function of every multiway system is obviously computable. We might therefore say that multiway systems can grow arbitrarily slowly, i.\,e. the set of regular multiway systems excluding constant and finite systems is \enquote{open} in some sense. Remember however that they cannot grow arbitrarily quickly as shown in lemma \ref{lemma:expBound}.

\section{Computational Capabilities of Growth Functions}\label{section:computationalCapabilities}

After marking out the boundaries of the space of possible growth rates, we shall investigate its underlying structure. First of all, we will see that it contains no \enquote{holes}, i.\,e. all of the multiway growth classes defined in section \ref{section:growthClassesDef} (except $C_{\text{supexp}}$ which we have already shown to be empty and just defined for completeness) are non-empty and, furthermore, contain infinitely many systems. In addition, we will have some insights into which functions are \enquote{multiway-growth-computable} and \enquote{multiway-growth-approximable}. We say, a function $f:\N_+\rightarrow \N_+$ is multiway-growth-computable if there is a multiway system $M$ such that $\forall n\in \N_+: g_M(n)=f(n)$ and we call a function $f:\R_{\geq 0}\rightarrow \R_{\geq 0}$ multiway-growth-approximable if there is a multiway system $M$ such that $f \sim_\Theta L_{\N}(g)$. \par 
First, we will define two operations, \enquote{multiway addition} and \enquote{multiway multiplication} which will enable us to combine systems into more complex ones of which the growth function is computable immediately from the growth functions of the parts. These two fairly simple operations will be sufficient for demonstrating that multiway growth functions are interesting from an algebraic point of view as well as regarding questions of their computational capabilities (see section \ref{section:consequencesComputability}). Still, some basic multiway systems have to be constructed without using these operations as the building blocks of further systems. Combining the multiway operations and specifically constructed systems will then yield the following theorem and several other interesting results: 

\begin{theorem}\label{theorem:classesPartition}
The classes $C_{\text{pol}}, C_{\text{int}}, C_{\text{exp}}, C_{\text{invpol}}, C_{\text{invint}}, C_{\text{invexp}}, C_{\text{invsupexp}}, C_{\text{fin}}$ and $C_{\text{bnd}}$ partition the set of regular multiway systems into infinite subsets.
\end{theorem}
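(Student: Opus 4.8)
The plan is to treat the \emph{partition} assertion as essentially definitional and to concentrate the work on showing that each of the nine classes is infinite. For the partition, by definition~\ref{def:finiteMultiway} a multiway system is finite, bounded, or unbounded: the finite systems form $C_{\text{fin}}$ and the bounded ones $C_{\text{bnd}}$, which are disjoint from each other by that definition and disjoint from the remaining seven classes, each of which collects only unbounded systems. An unbounded \emph{regular} system $M$ has, by definition~\ref{def:tightBounds} and regularity, bijective tight bounds $(f,g)$ with $f\sim_\Theta g$; since $L_{\N_+}(\overline a)$ is monotone and, by lemma~\ref{lemma:expBound}, at most of exponential order, $f$ lies in exactly one of $G_{\text{pol}}$, $G_{\text{int}}$, $G_{\text{exp}}$, $G_{\text{invpol}}$, $G_{\text{invint}}$, $G_{\text{invexp}}$, $G_{\text{invsupexp}}$ (the class $G_{\text{supexp}}$ being ruled out by lemma~\ref{lemma:expBound}), and this membership does not depend on the chosen bound because each $G_i$ is a union of $\sim_\Theta$-classes. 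Hence the nine classes are pairwise disjoint and exhaust the regular systems, and it remains to populate them.

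The classes $C_{\text{fin}}$ and $C_{\text{bnd}}$ are clearly infinite (there are unboundedly many rule sets producing finite, resp.\ bounded, states graphs, as in figure~\ref{fig:finiteAndBounded}), with every member carrying the rate $(1,1)$. For the other seven I would first set up two semiring operations. \textbf{Multiway addition} runs two systems over disjoint alphabets in parallel from a fresh root that branches in one step to their two initial strings, so that $g_{M_1+M_2}=g_{M_1}+g_{M_2}$ up to a one-step shift. \textbf{Multiway multiplication} encodes ordered pairs of strings (as $u\,\#\,v$, say, with a fresh separator and each rule set tagged to act only on its own side) and yields a growth function which is a product-type combination of $g_{M_1}$ and $g_{M_2}$ — a convolution, or a pointwise product under a suitable fan-out convention. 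With the bounded system with rule $A\to AA$ on initial string $A$ and the linear system with rule $A\to AB$ on initial string $AA$ from figure~\ref{fig:stateSet}, iterated multiway products realise growth functions asymptotic to $x^k$ for every $k\in\N$, so $C_{\text{pol}}$ contains systems of infinitely many distinct growth rates; and the binary appending counter with rules $E\to 0E$ and $E\to 1E$ on initial string $E$ has growth $2^n$, its base-$b$ analogues and its products with polynomial systems staying in $G_{\text{exp}}$, so $C_{\text{exp}}$ is infinite as well.

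The four ``inverse'' classes and the intermediate one I would reach through the $T$-halter machinery of section~\ref{section:slownessProof} together with theorem~\ref{theorem:slowness} and corollaries~\ref{corollary:multiwaySlowerInverseComputable}--\ref{corollary:multiwaySlowerComputable}. A Turing machine performing $\Theta(n^k)$ steps on input $n$ — or a doubly-exponential, or Ackermann-type, number of steps — is routinely turned into a $T$-halter, and the multiway system built from it has growth asymptotic to $L_{\N_+}(\sum_{k=1}^n T(k))^{-1}$, which lies in $G_{\text{invpol}}$, resp.\ $G_{\text{invsupexp}}$; the logarithmic system built from $\mathcal{T}_{\text{exp}}$ in section~\ref{section:applicationSlowness} already witnesses $C_{\text{invexp}}$, and in each case varying the degree, base, or schedule gives infinitely many systems, once one checks that the $\Theta(n)$ preparation overhead $p$ leaves the growth rate in its class (it does, since for $k\ge 1$ the sum $\sum_{k=1}^n (T(k)+p(k))$ stays in the same $\Theta$-class as $\sum_{k=1}^n T(k)$). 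The class $C_{\text{int}}$, which no finite sum or product of polynomial and exponential systems can reach, I would fill with a dedicated gadget: a variant of the appending counter in which each branching is preceded by a ``sweep'' whose length grows linearly in the number of branchings performed so far — a marker running across the ever-lengthening string — so that the $m$-th branching occurs near step $m^2$ and the state count at step $n$ is $\sim 2^{\sqrt n}$, strictly faster than every polynomial and slower than every exponential. Feeding a Turing machine that runs $\sim 2^{\sqrt n}$ steps into the $T$-halter construction then gives a $C_{\text{invint}}$ system of growth $\sim(\log x)^2$, and parameter variation once more yields infinitely many of each.

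The main obstacle is the correctness of \textbf{multiway multiplication}: since two states merge precisely when they are equal as strings, the encoding of pairs has to be chosen so that a rewrite on one component can never be confused with a rewrite on the other or produce a string that also encodes a different pair — ensuring that distinct pairs are never identified — and so that the shortest-path generation number of an encoded pair equals exactly the intended function of the two component generation numbers; settling this is also what decides whether the operation computes a convolution or a genuine pointwise product. A secondary difficulty is verifying the intermediate gadget: one must check that the interplay of doubling steps with linearly growing sweeps leaves the growth function with tight bounds that are $\omega(x^n)$ for every $n$ \emph{and} $o(a^x)$ for every $a$, so that the overhead neither flattens it to polynomial nor leaves it exponential, and that its $T$-halter inverse lands in $G_{\text{invint}}$ rather than $G_{\text{invpol}}$ or $G_{\text{invexp}}$.
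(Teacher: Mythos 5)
Your proposal is correct and follows essentially the same route as the paper: the partition is definitional, $C_{\text{pol}}$ comes from iterated products of a constant-growth system (figurate numbers), $C_{\text{exp}}$ from a branching counter, $C_{\text{int}}$ from a sweeping-generator gadget with state count $\sim N^{\sqrt{n}}$, the inverse classes from $T$-halters via theorem \ref{theorem:slowness} (including $2^{\lfloor\sqrt{n}\rfloor}$ steps for $C_{\text{invint}}$), and infinitude from adding constants or varying parameters. The one obstacle you flag is settled in the paper exactly as your polynomial construction requires: with rule independence enforced by disjoint alphabets, the product system's states graph is the Cartesian product of the factors' states graphs, so its growth function is the discrete convolution $\sum_{k=0}^n g_{M_1}(k)\,g_{M_2}(n-k)$, not a pointwise product.
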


\subsection{Arithmetic-like operations on multiway systems}\label{sec:multiwayArithmetic}

Let $M_1=(R_1,s_1,\Sigma_1),M_2=(R_2,s_2,\Sigma_2)$ and $M_3=(R_3,s_3,\Sigma_3)$ be multiway systems. Additionally, let $X$ be a unique (equal for all multiway systems) symbol not included in any multiway systems alphabet. Now, we define the \enquote{sum system} by $M_1 \oplus M_2 = (R_1\cup R_2\cup \{X\rightarrow s_i\mid s_i\in S_2(M_1)\cup S_2(M_2)\}, ``X", \Sigma_1\cup \Sigma_2)$ where $S_2(M)$ is the state-set of $M$ in generation 2, i.\,e. all nodes with distance 1 to the initial state in the respective states graphs. The \enquote{product system} of $M_1$ and $M_2$ is now defined as $M_1\odot M_2 = (R_1\cup R_2, s_1 s_2, \Sigma_1 \cup \Sigma_2)$ where $s_1 s_2$ denotes the concatenation of $s_1$ and $s_2$.\par 

\begin{figure}[ht]
\centering
\begin{subfigure}{.45\textwidth}
  \centering
  \includegraphics[height=14em]{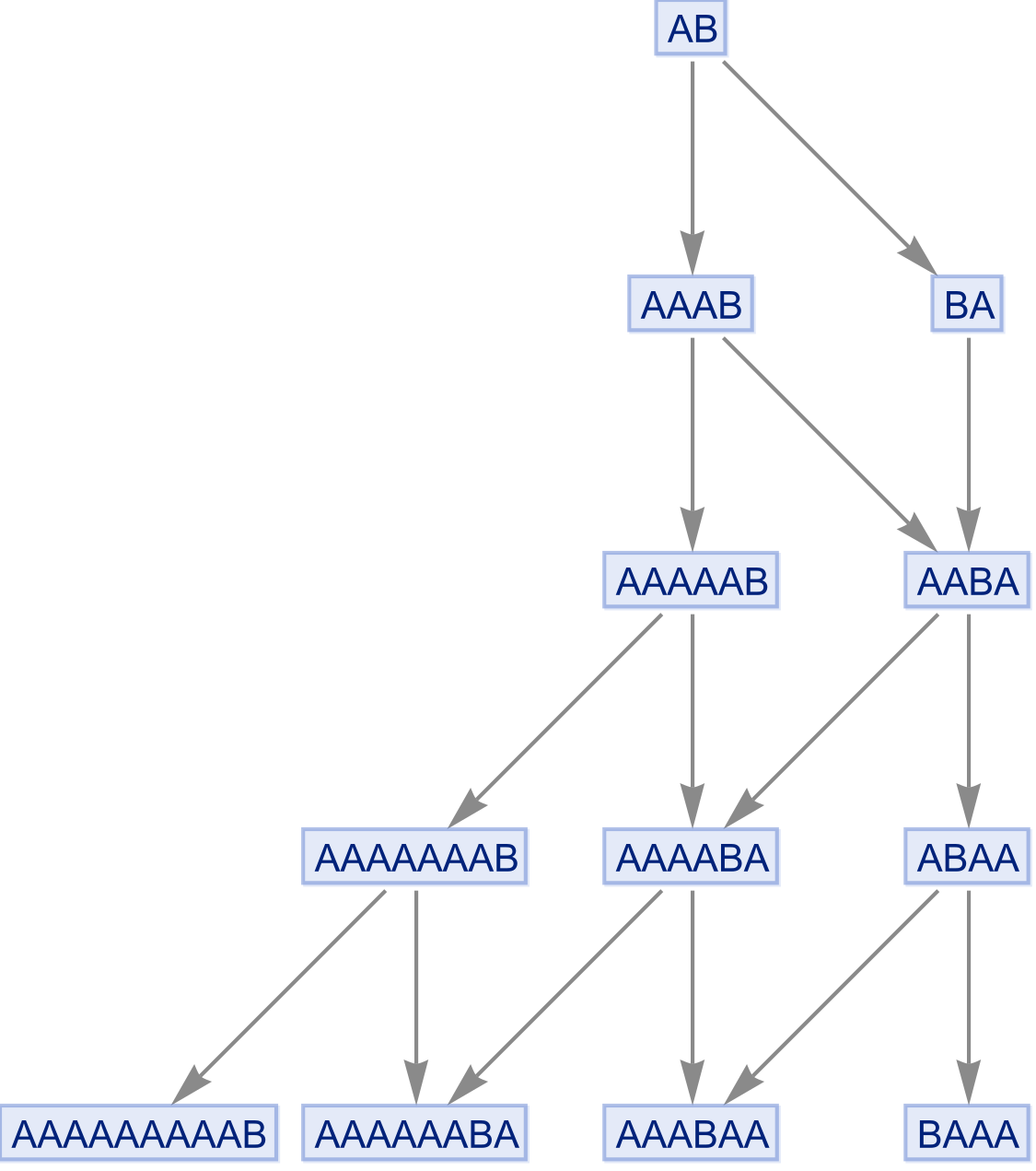}
\end{subfigure}
\begin{subfigure}{.45\textwidth}
  \centering
  \includegraphics[height=14em]{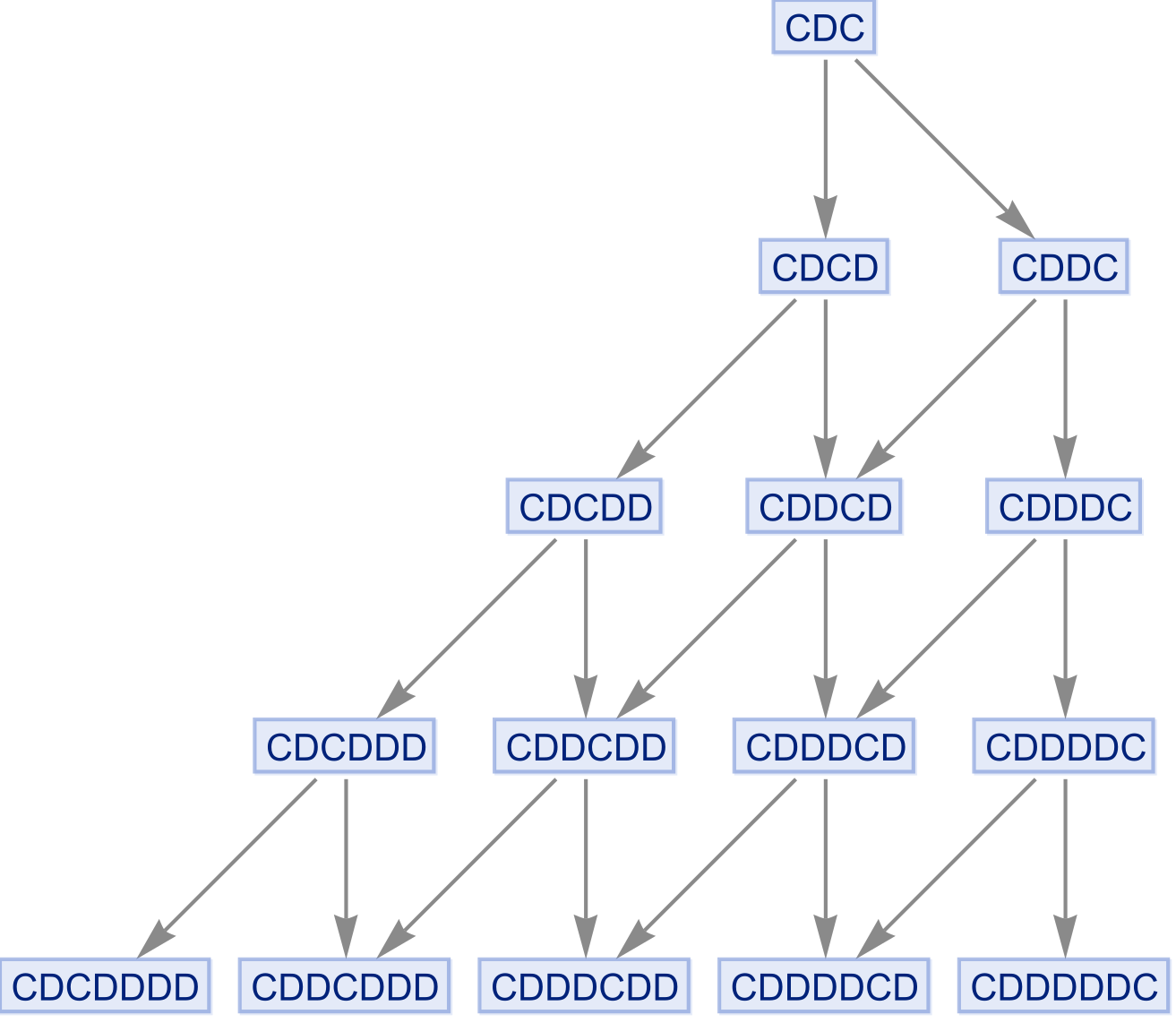}
\end{subfigure}\\
\begin{subfigure}{.75\textwidth}
    \centering
    \includegraphics[height=16em]{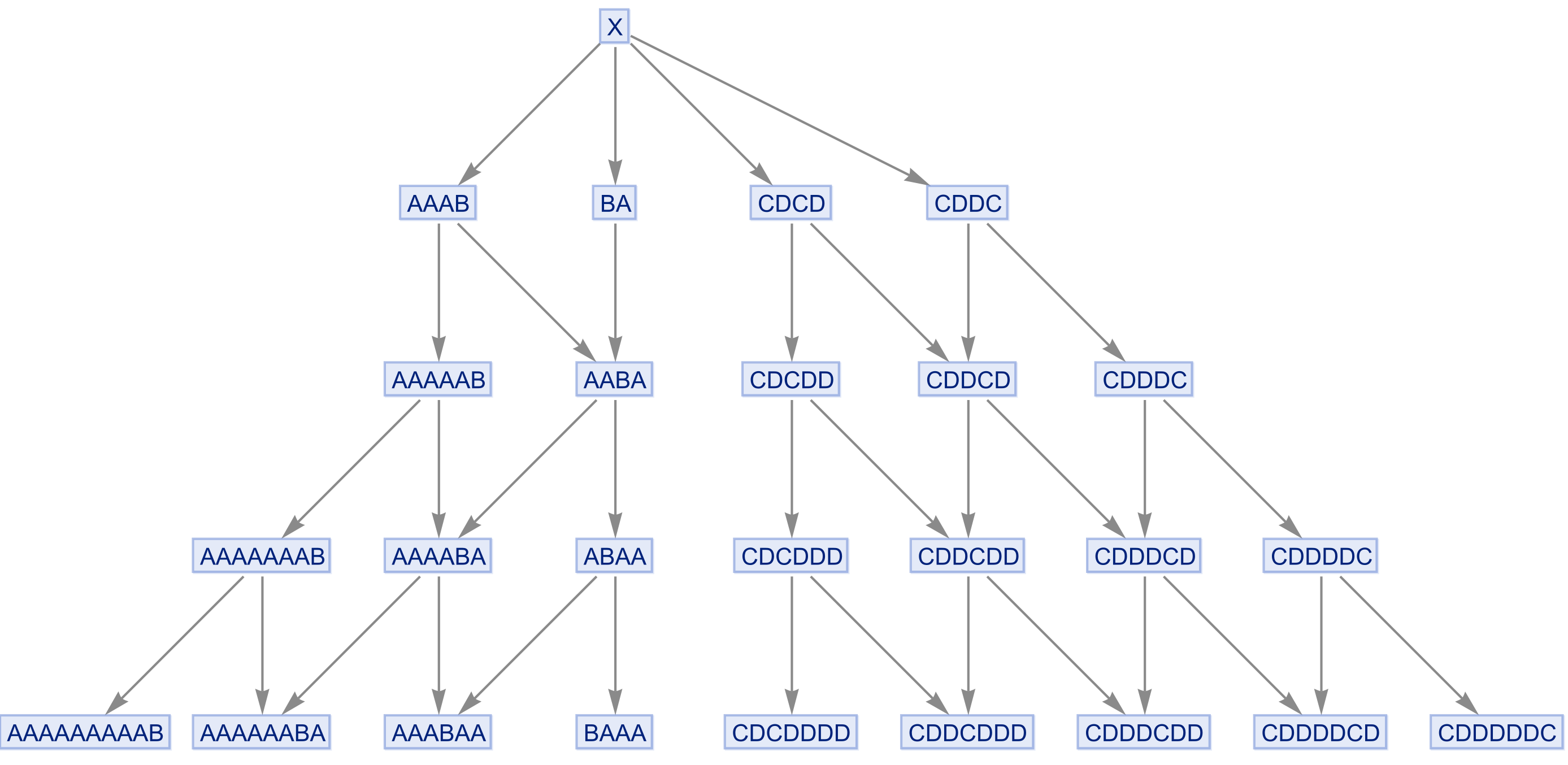}
\end{subfigure}
\caption{States graphs of rule independent multiway systems and their sum system. The systems used are $(\{``AB"\rightarrow``BA",``B"\rightarrow ``AAB"\},``AB",\{A,B\})$ and $(\{``CD"\rightarrow``CDD",``C"\rightarrow``CD"\},``CDC",\{C,D\})$.}
\label{fig:multiwayAddition}
\end{figure}

\begin{figure}[!h]
\centering
\begin{subfigure}{\textwidth}
  \centering
  \includegraphics[width=\linewidth]{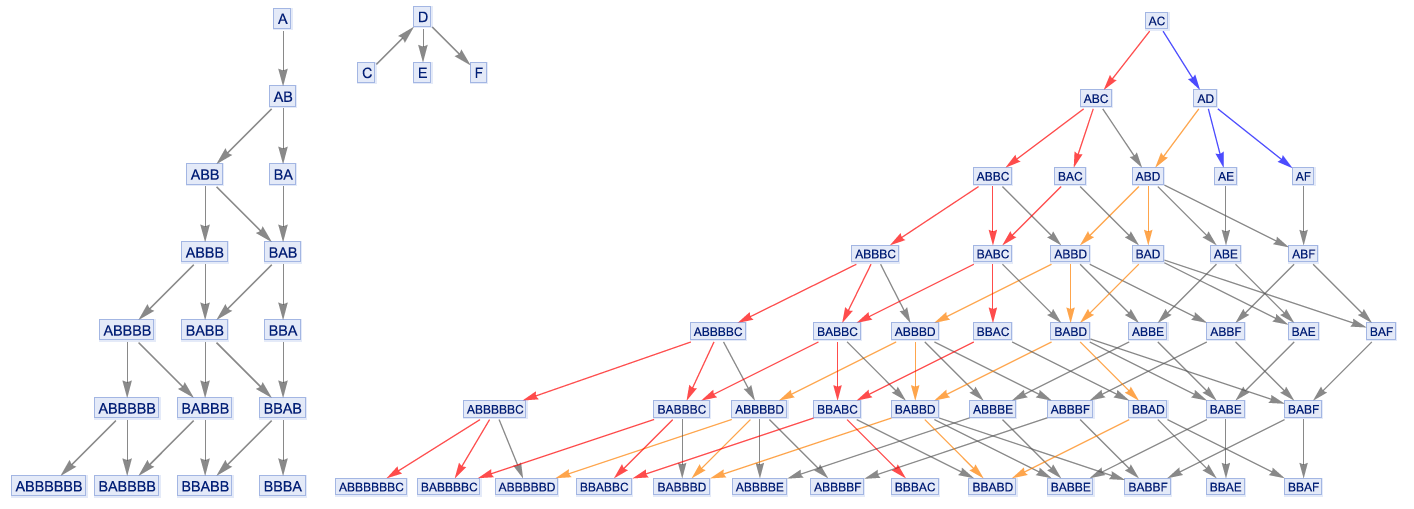}
  %\caption{A subfigure}
  %\label{fig:sub2}
\end{subfigure}
\caption{States graphs of $(\{``A"\rightarrow``AB",``AB"\rightarrow``BA"\},``A",\{A,B\})$, $(\{``C"\rightarrow``D",``D"\rightarrow``E",``D"\rightarrow``F"\},``C",\{C,D,E,F\})$ and their product system. Instances of the first system are highlighted in red and orange in the product system's graph.}
\label{fig:multiwayProduct}
\end{figure}

To calculate the growth functions of systems obtained by these operations, we shall require the parts $M_1$ and $M_2$ to be \enquote{rule independent} meaning that their rules do not interfere with each other. Formally, rule independence can be defined as the property that the states graph of $M_1$ is isomorphic to the states graph of $(R_1\cup R_2, s_1, \Sigma_1\cup \Sigma_2)$, that is $M_1$ with all rules of $M_2$ added, and vice versa\footnote{This works because if adding all the rules of $M_1$ to $M_2$ does not change its behavior, then these rules will not influence $M_2$'s states even if the states of $M_1$ get appended to them.}. This property can always be achieved by requiring the underlying alphabets to be disjoint as in this case it will be impossible that a given string matches rules from $R_1$ and $R_2$ at the same time.\par  
If we recall the definition of growth function $g_M(n)$ as the number of nodes to which the shortest path from the initial state has length $n$, it is easy to see that the growth function of $M_1\oplus M_2$ is one in the first iteration as $``X"$ is the only state. In further iterations, we can imagine a path of length $n$ simply as a path of length 1 entering either the states graph of $M_1$ or $M_2$, followed by a path of length $n-1$ originating at some state in the second layer of the chosen subgraph, as if the other graph was not there. This works because in the entire evolution, the initial state is the only one containing an $X$ so for all other states, the usual rules will apply and the rules replacing $X$ will have no effect. This concludes that the growth function of the sum system is precisely 
\begin{equation}\label{eq:sumSystem}
    g_{M_1\oplus M_2}(n)=g_{M_1}(n)+g_{M_2}(n) \quad\text{with}\quad g_{M_1\oplus M_2}(1):=1.
\end{equation}
For the product system, remember that in the states graph of some system $M_1$, two nodes $u,v$ are connected by an edge if string $u$ gets transformed into $v$ by a rule from $R_1$. The analogous holds for $M_2$. Since the initial node of $M_1\odot M_2$ consists of a concatenation of $s_1$ and $s_2$, any node in the states graph of $M_1\odot M_2$ corresponds to some combination of a node of $M_1$ and one of $M_2$. Hence, the states graph of $M_1\odot M_2$ is the Cartesian product graph \cite{mathWorld4} of the states graphs of $M_1$ and $M_2$. Note how this is only possible because $M_1$ and $M_2$ are rule independent since otherwise more edges could be added due to rule matches overlapping between the $M_1$- and $M_2$-parts of the string or rules of one system getting applied to states of the other one.\par
To obtain the number of nodes reachable in this Cartesian product graph, we might without loss of generality first traverse a path of length $k$ in the \enquote{pure $M_1$-part} (i.\,e. the second half of the string is still $s_2$) and then take $n-k$ steps through the \enquote{pure $M_2$-part}. For the first subpath, we have $g_{M_1}(k)$ options by definition of the growth function. This gets multiplied by the $g_{M_2}(n-k)$ choices for the second subpath. Since we can choose $k$ freely, the resulting total count of nodes in the product graph is given by
\begin{equation}\label{eq:productSystem}
    g_{M_1\odot M_2}(n)=\sum_{k=0}^n g_{M_1}(k)\cdot g_{M_2}(n-k) = f(n)+g(n)+\sum_{k=1}^{n-1} g_{M_1}(k)\cdot g_{M_2}(n-k)
\end{equation}
where we set $g_{M_1}(0)=g_{M_2}(0)=1$ for convenience.\par 
For these two formulae, we have assumed the systems to be rule independent. For systems where this is not the case, they still give lower bounds of the combined systems growth rate since, generally speaking, in any system, the number of edges and nodes of the states graph can only remain constant or be increased when new rules are added. This might sound surprising as one could imagine \enquote{deletion rules} but rules that cause fewer rules being applied in the future do this only in newly added branches of the states graph (or not at all) not affecting the already existent graph. To make systems rule independent, we required the alphabets to be disjoint, however, this is not necessary as any multiway system can be emulated by a system over some binary alphabet so we can always make the alphabets of $M_1$ and $M_2$ equal.

\begin{lemma}\label{lemma:multiway2Symbol}
For any multiway system $M=(R,s,\Sigma)$, there is a multiway system $M'=(R',s',\{a, b\})$ where $a$ and $b$ are two distinct symbols, such that the states graphs of $M$ and $M'$ are isomorphic. 
\end{lemma}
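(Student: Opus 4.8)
The plan is to encode each letter of $\Sigma$ by a fixed-length word over $\{a,b\}$ chosen so that the letter-by-letter translation of a rule can only fire at block boundaries. Write $\Sigma=\{\sigma_1,\dots,\sigma_k\}$, set $m:=k+1$, and define $c(\sigma_i):=a^{i}b^{m-i}$ for $1\le i\le k$, extended to a map $c:\Sigma^*\to\{a,b\}^*$ by concatenation. Every block has the same length $m$ and $c$ is injective on letters (distinct numbers of leading $a$'s), so $c$ is injective on $\Sigma^*$. Put $s':=c(s)$, $R':=\{\,c(r)\to c(t)\mid (r\to t)\in R\,\}$, and $M':=(R',s',\{a,b\})$. (If $R$ contained a rule with empty left-hand side one would have to argue separately; I assume, as is usual for string rewriting, that left-hand sides are non-empty.)

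The key step — and the only non-routine one — is a \emph{synchronization lemma}: for every $w\in\Sigma^*$ and every $v\in\Sigma^+$, each occurrence of $c(v)$ as a factor of $c(w)$ begins at a position that is a multiple of $m$, i.e.\ at a block boundary. To prove it I would first observe that in $c(w)=a^{j_1}b^{m-j_1}a^{j_2}b^{m-j_2}\cdots$ the maximal runs of $a$'s have lengths $j_1,j_2,\dots\in\{1,\dots,k\}$ and each starts exactly at a block boundary, while the maximal runs of $b$'s have lengths $m-j_1,m-j_2,\dots\in\{1,\dots,m-1\}$ (runs never merge across a junction, since every block starts with an $a$ and ends with a $b$). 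Now take an occurrence of $c(v)=a^{i_1}b^{m-i_1}\cdots a^{i_p}b^{m-i_p}$. Its leading run $a^{i_1}$ is immediately followed inside $c(w)$ by a $b$, hence it ends precisely where some maximal $a$-run of $c(w)$ ends. If that run were longer than $i_1$, say of length $j>i_1$ because of an extra $a$ preceding the occurrence, then the subsequent $b$-block of the occurrence would require $m-i_1>m-j$ consecutive $b$'s at a spot where only $m-j$ are available — a contradiction. So $a^{i_1}$ is preceded by a $b$ or by the start of the string, which in $c(w)$ forces it to begin at a block boundary; since $c(v)$ has length $pm$, the occurrence then spans exactly $p$ consecutive blocks, and by injectivity of $c$ these blocks spell $\sigma_{i_1},\dots,\sigma_{i_p}$, identifying a genuine occurrence of $v$ in $w$ at the corresponding letter position.

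With the synchronization lemma the rest is bookkeeping: applying $r\to t$ to $w$ at position $q$ corresponds to applying $c(r)\to c(t)$ to $c(w)$ at position $(q-1)m+1$, and conversely every application of a rule of $R'$ to a string $c(w)$ occurs (by the lemma) at a block boundary and hence mirrors a rule application of $R$ to $w$; in both directions the result is $c$ of the result in $M$. By induction on the generation this shows the reachable states of $M'$ are exactly $\{c(w)\mid w\text{ reachable in }M\}$ and that $w\mapsto c(w)$ is a bijection carrying the edge relation of the states graph of $M$ onto that of $M'$, hence an isomorphism. I expect the main obstacle to be exactly the synchronization lemma: a naïve fixed-length binary code fails, because the translated left-hand side of a rule can occur straddling block boundaries and create spurious edges; the unary-flavoured code $a^{i}b^{m-i}$ is tailored to rule this out, and the delicate point is the run-length counting argument showing no misaligned factor $c(v)$ can occur.
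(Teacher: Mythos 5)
Your proof is correct and follows essentially the same route as the paper's: translate each symbol of $\Sigma$ into a fixed-length block over $\{a,b\}$ using a code that cannot match across block boundaries, translate the rules and the initial string accordingly, and conclude that rule applications (and hence states-graph edges) correspond one-to-one. The paper merely asserts the existence of suitable \enquote{non-overlapping codes}, whereas you construct one explicitly ($c(\sigma_i)=a^i b^{m-i}$) and actually prove the synchronization property it must satisfy, so your version supplies the one detail the paper leaves implicit.
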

\begin{proof}
To show this we will perform a \enquote{translation} from $M$ to $M'$, i.\,e. replace every symbol in $\Sigma$ by a word over $\{a,b\}$ using some bijection $f:\Sigma\rightarrow T \subset \{a,b\}^*$. By altering not only $s$ but also all rules, any word $w\in \Sigma^*$ matched by some rule in $R$ will correspond to the translated word in $w'\in T$ being matched by a rule in $R'$. Additionally, one must ensure that no two words in $T$ can overlap since otherwise, the rules could match in more places than before. Since there exist non-overlapping codes of arbitrary length, we can use these as elements in $T$ so there always exists some $f$ with the required properties. Thus, the actions of the rules on the states will be equal and isomorphic states graphs will be created. 
\end{proof}

Now, let us consider the algebraic properties of our multiway operations. We declare two multiway systems to be isomorphic (written $M_1 \cong M_2$), if and only if their states graphs are isomorphic. Isomorphic multiway systems always have equal growth functions. In the following analysis, we consider only the set of different equivalence classes of $\cong$, i.\,e. the set of all multiway systems up to isomorphism, and denote it by $\mathbb{M}$.\par 
Let $M_1,M_2,M_3\in \mathbb{M}$ be multiway systems and, without loss of generality, rule-independent. It is easy to see that $\oplus$ is commutative and associative since set unions are. More interestingly, the system $0_M:=(\{\},X,\{\})$ is a neutral element of $\oplus$ since 
\begin{align}
    M_1\oplus 0_M &= (R_1\cup \{\} \cup \{X\rightarrow s\mid s \in S_2(M_1)\cup \{\}\}, X, \Sigma_1\cup \{\}) \notag \\
    &= (R_1\cup \{X\rightarrow s\mid s\in S_2(M_1)\},X,\Sigma_1)
\end{align}
which is isomorphic to $(R_1,s_1,\Sigma_1)$ because the $X$ symbol is used only once, acting precisely as $s_1$ would have and therefore keeping the states graph structure unchanged.\par 
The commutativity of $\odot$ is granted because we required $M_1$ and $M_2$ to be rule-independent, so the order in which their states are concatenated does not matter because no overlaps where rules could apply on the intersection of $M_1$- and $M_2$-states can be created. Similarly, $\odot$ is associative, simply because string concatenation is. One can also prove both properties with the commutativity and associativity of the Cartesian graph product. There also is a neutral element of $\odot$, namely $1_M:=(\{\},``",\{\})$ or actually any system with no rules since its initial state will just be appended onto every state of the system one is multiplying with and the states graph will not change. \par 
Also notice that $\odot$ distributes from the left over $\oplus$:
\begin{align}
    M_1 &\odot (M_2\oplus M_3)\notag \\
    &= (R_1,s_1,\Sigma_1) \odot (R_2\cup R_3 \cup \{X\rightarrow S_2(M_2)\cup S_2(M_3)\}, X, \Sigma_2\cup \Sigma_3) \notag \\
    &= (R_1 \cup R_2\cup R_3 \cup \{X\rightarrow S_2(M_2)\cup S_2(M_3)\}, s_1 X, \Sigma_1\cup \Sigma_2\cup \Sigma_3) \notag \\ 
    &= ((R_1\cup R_2 \cup \{X\rightarrow S_2(M_2)\})\cup (R_1\cup R_3 \cup \{X\rightarrow S_2(M_3)\}), s_1 X, (\Sigma_1 \cup \Sigma_2) \cup (\Sigma_1 \cup \Sigma_3))\notag \\
    &= (R_1\cup R_2 \cup \{X\rightarrow S_2(M_2)\}, s_1, \Sigma_1 \cup \Sigma_2) \odot (R_1\cup R_3 \cup \{X\rightarrow S_2(M_3)\}, X, \Sigma_1 \cup \Sigma_3)\notag \\
    &= ((R_1,s_1,\Sigma_1)\oplus (R_2,s_2,\Sigma_2))\odot ((R_1,s_1,\Sigma_1)\oplus (R_3,s_3,\Sigma_3))\notag \\
    &= (M_1\oplus M_2)\odot (M_1\oplus M_3).
\end{align}
Thus, we get right distributivity from commutativity and conclude that $\odot$ distributes over $\oplus$. Therefore, we can conclude that $(\mathbb{M},\oplus,\odot)$ is a semiring, however with a weakened annihilation property which does not hold in general. As a consequence, their growth functions also form a semiring with weakened annihilation under the operations $(g_1 + g_2)(n):=g_1(n)+g_2(n)$ (with $(g_1+g_2)(1)$ defined to be $1$) and $(g_1 * g_2)(n) = \sum_{k=0}^n g_1(k)\cdot g_2(n-k)$ (with $g_1(0)=g_2(0):=1$). This demonstrates the potential of multiway systems to generate quite diverse and intricate growth functions as the two operations can be used to combine systems in various very interesting ways. The next section elaborates on this.

\subsection{Proof of theorem \ref{theorem:classesPartition}}\label{section:proofPartition}

Let us now construct multiway systems in the various growth classes to prove the above theorem \ref{theorem:classesPartition}. First, take the product system of a finite \enquote{chain}-system $M^N_1$ having one new state for $N$ generations until terminating and a constant system $M_2=(\{``A"\rightarrow ``AA"\}, ``A", \{``A"\})$. After the first $N-1$ steps, $M_1^N\odot M_2$ will have a constant growth function of value $N$ as the first $N$ terms in $\sum_{k=0}^n g_{M_1}(k)g_{M_2}(n-k)$ are one and all others zero because $M_1$ is finite. While this system produces (asymptotically) constant growth functions, it is not suited for multiplying arbitrary growth functions by constants. To achieve the latter, adding some system to itself several times resolves the issue.\par 
The next system to consider, $M^N_3$, is again given by the rule set $\{ ``A"\rightarrow ``AB"\}$ and started on a string of $N$ copies of $``A"$ denoted $``A^N"$. For calculating its growth function, we can represent it differently as the product system of $N$ instances of itself started on a single $``A"$ and thus having the growth rate $g^1_{M_3}(n)=1$. From this point of view, we can write the growth function of $M_3$ started on $``A^N"$ recursively as
\begin{equation}
    g^N_{M_3}(n)=\sum_{k=0}^n g^{N-1}_{M_3}(k)\cdot g^1_{M_3}(n-k) = \sum_{k=0}^n g^{N-1}_{M_3}(k)
\end{equation}
One might recognize this as the sequence of $(N-1)$-polytopic numbers \cite{mathWorld3}, also known as \enquote{figurate numbers}, of which the $n$-th element is given by $\binom{N-1+n-1}{N-1}$. Hence, $g^N_{M_3}$ is clearly a polynomial of degree $N-1$ and thus asymptotically equal to $x^{N-1}$.\par 
Now, consider $M^N_4=(\{``Q"\rightarrow ``Q x_i"\mid i=1,\dots,N\}$, $``Q"$, $\{Q, x_1, \dots, x_N\})$ for distinct symbols $x_i$. In the $n$-th step of evolution, it has basically generated all words of length $n$ over the alphabet of all $x_i$. Every node $``Q w"$ (where $w\in \{x_1,\dots,x_N\}^*$) in the states graph has $N$ outgoing edges to the nodes $``Q x_i w"$ for $1\leq i\leq N$. Thus, its growth function is precisely $g^N_{M_4}(n)=N^n$ allowing the possibility of multiway systems growing like all exponential functions. \par 
A more sophisticated example is the system $M^N_5=(\bigcup_{i=1,\dots,N} \{``T L"\rightarrow ``T x_i R", ``R T"\rightarrow ``L x_i T",``R x_i"\rightarrow ``x_i R", ``x_i L"\rightarrow ``L x_i"\},``T L T",\{``L",``R",``T",x_1,\dots,x_N\})$. Similarly to the previous system, $L$ and $R$ work as generators for words over the alphabet $\{x_1,\dots,x_N\}$ however only on the left and right ends of the word respectively. In every word ever produced by the system, there are exactly two $``T"$ symbols, one at the beginning and one at the end. After generating some new symbol between itself and the $``T"$, the generator $``L"$ or $``R"$ moves one step left or right respectively thereby not generating any new symbols as it is not next to a $``T"$. Since a new symbol is created every time a $``L"$ or $``R"$ reaches the respective $``T"$, the length of the word is increased every $n$ steps where $n$ the previous word length. Thus, their word-length is the sequence \enquote{$n$ occurs $n$ times} denoted $A_n$ and asymptotically equal to $(\sum_{k=0}^n k)^{-1} = (\frac{n(n+1)}{2})^{-1} = \frac{\sqrt{1+8 n}-1}{2} \in \Theta(\sqrt{n})$ by lemma \ref{lemma:inverseSummatory}. But the system's growth function is given by the number of possible words, i.\,e. $N^n$ and hence asymptotically equal to $N^{\sqrt{n}}$.\par 
The growth function of the previous system is noteworthy because it grows \enquote{intermediately} i.\,e. faster than every polynomial function and slower than all exponential functions. Formally, one checks this by noticing that $\lim_{x\rightarrow \infty} \frac{\ln(N^{\sqrt{x}})}{x} = 0$ (if the logarithm grows slower than $x$, the function is subexponential) and $\lim_{x\rightarrow \infty } \frac{\ln(N^{\sqrt{x}})}{\ln(x)} =\infty $ (if the logarithm grows faster than $\ln(x)$ times a constant, the function grows faster than $x^n$ for all $n$). In the study of groups and semigroups, which are related to multiway systems \cite{mwsNote2}, it has long been an open problem, finally solved by Grigorchuk \cite{gri1} to find groups of intermediate growth\footnote{Here, \enquote{growth} refers to the notion of \enquote{group growth rate} from group theory.}. For multiway systems, this turns out to be remarkably easy, supporting the claim that multiway growth functions are computationally diverse and powerful.\par 

\begin{figure}[ht]
\centering
\begin{subfigure}{\textwidth}
  \centering
  \includegraphics[width=.75\linewidth]{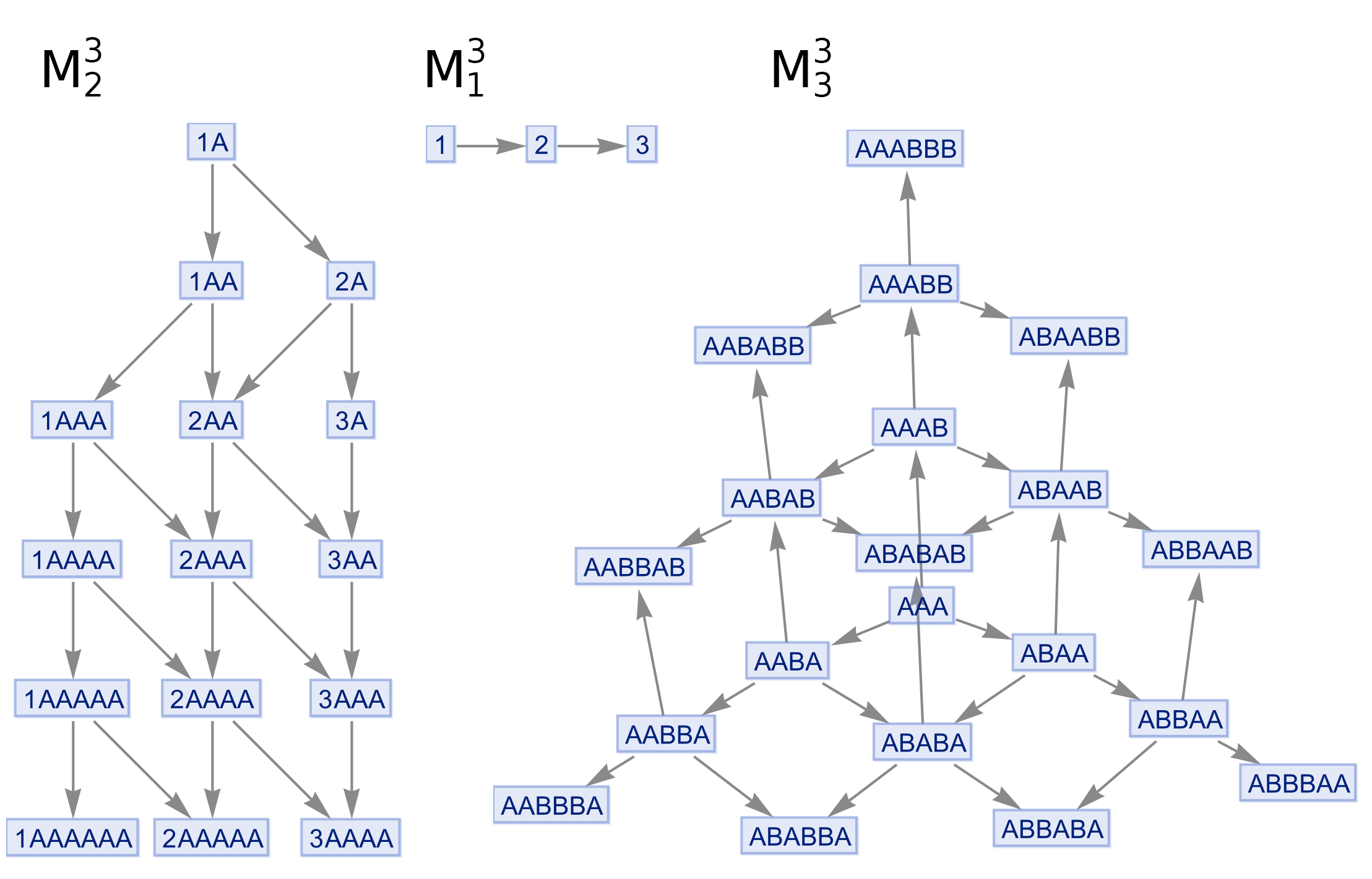}
  %\caption{A subfigure}
  %\label{fig:sub2}
\end{subfigure}\\
\begin{subfigure}{.45\textwidth}
  \centering
  \includegraphics[width=\linewidth]{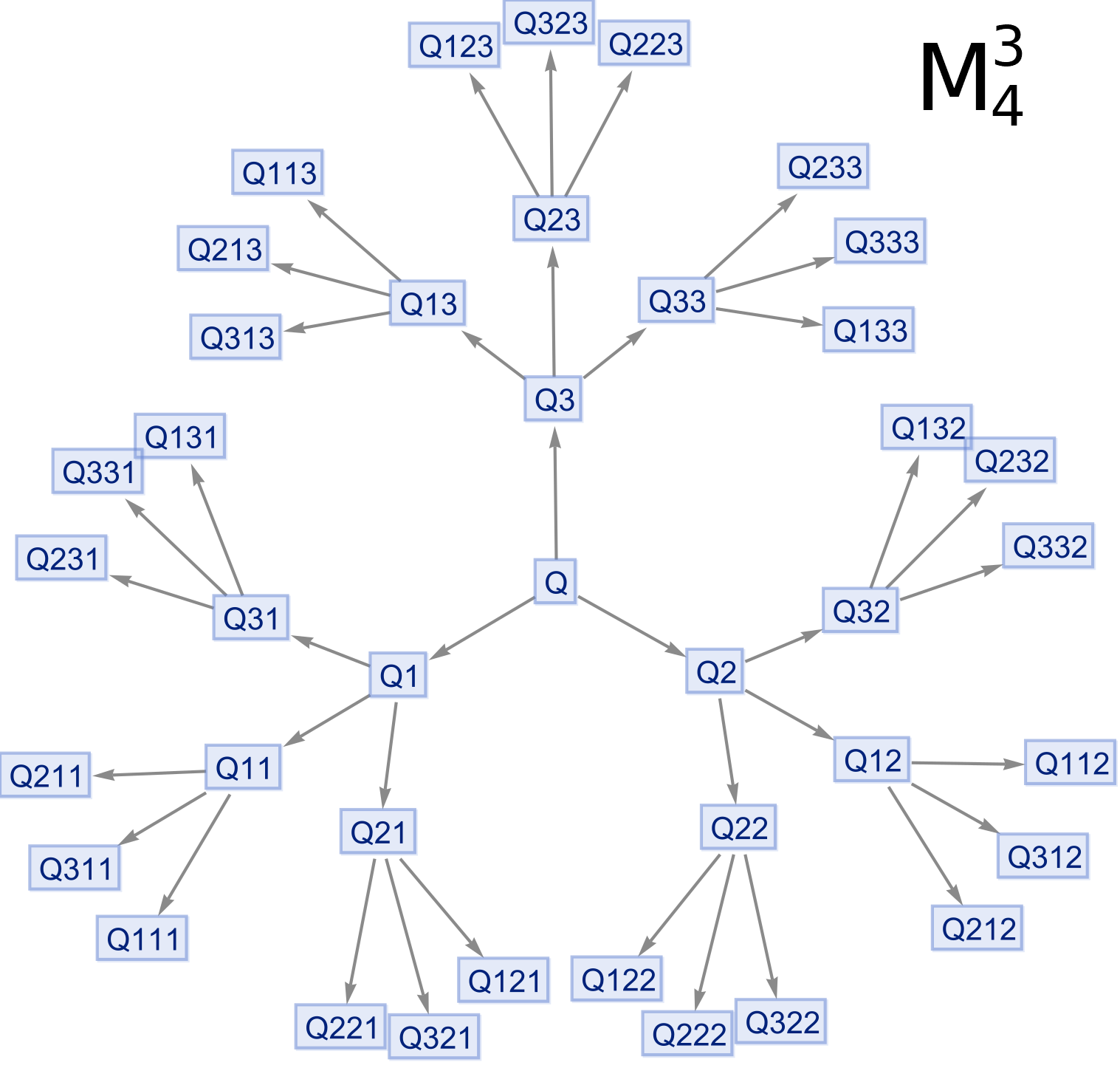}
  %\caption{A subfigure}
  %\label{fig:sub2}
\end{subfigure}
\hfill 
\begin{subfigure}{.45\textwidth}
  \centering
  \includegraphics[width=\linewidth]{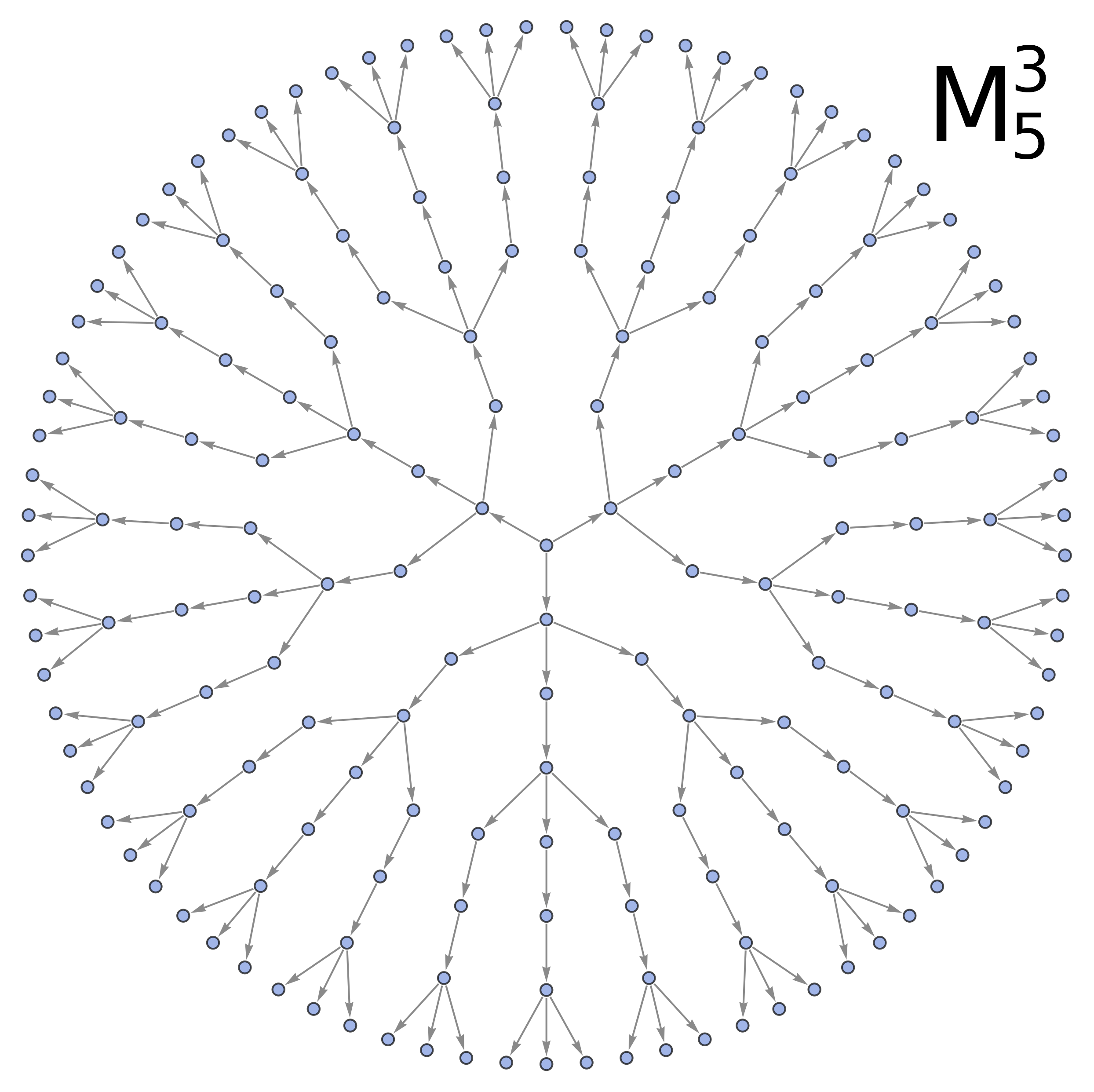}
  %\caption{A subfigure}
  %\label{fig:sub2}
\end{subfigure}
\caption{States graphs of exemplary systems $M_1^3$, $M_2^3$, $M_3^3$, $M_4^3$ and $M^3_5$ with growth functions asymptotically equal to $n\mapsto 0, n\mapsto 3, n\mapsto n^2, n\mapsto 3^n$ and $n\mapsto 3^{\sqrt{n}}$ respectively.}
\label{fig:m123Graphs}
\end{figure}

Now that we have shown the existence of infinitely many systems in $C_{\text{fin}}, C_{\text{bnd}}, C_{\text{pol}}, C_{\text{exp}}$ and $C_{\text{int}}$, only the classes of inverse functions remain. As mentioned above, theorem \ref{theorem:slowness} appears to be very useful for this. In section \ref{section:applicationSlowness}, we have already shown the existence of a system with a growth function asymptotically equal to $\log_2(x)$. The Turing machine from that example can be generalized to perform counting in any number system base $a$ yielding a halting function asymptotically equal to $n\mapsto a^n$ so that the same construction can be used to obtain multiway systems growing like $\log_a x$. It is not essential to go through the details here because all $\log_a x$ are asymptotically equal (since they only differ by constants by the base-change law). It is also clear that infinitely many systems in the class $C_{\text{invexp}}$ can be created because one could for example use multiway addition to add systems with constant growth functions to logarithmic systems. Notice how this shows generally that every class containing at least one system contains infinitely many systems.\par 
By the same style of argument, we conclude that there are infinitely many systems in $C_{\text{invpol}}$. Consider a weaker version $M_6$ of the system of intermediate growth rate $M^1_5$ for $N=1$. This system has one state forever but the length of its strings is still the sequence \enquote{$n$ occurs $n$ times}. Now simply apply the Turing machine construction from section \ref{section:slownessProof} to this system by adding the rules $``L T"\rightarrow ``Z"$, $``TR"\rightarrow ``Z"$ and $``Z"\rightarrow ``ZZ"$ respectively to generate different branches of the states graph. As described in section \ref{section:slownessProof}, this yields the desired growth function of $g_{M_6}(n)=A_n(n) \in \Theta(\sqrt{n})\Rightarrow M_6\in C_{\text{invpol}}$. \par 
Using corollary \ref{corollary:multiwaySlowerComputable}, it is also obvious that there are infinitely many multiway systems in $C_{\text{invsupexp}}$ as one might, for example, just construct a multiway system growing slower than the inverse Ackermann function. It remains to show that the class of inverses of intermediately growing functions $C_{\text{invint}}$ is non-empty. To show this, first note that there is a Turing machine $\mathcal{T}$ which computes $\floor{\sqrt{n}}$ when given $n$ in unary in polynomial time. If one feeds the result of this computation in the $\mathcal{T}_{\text{exp}}$ machine from section \ref{section:applicationSlowness}, one can construct a machine that halts after $T(n)=p(n)+2^{\floor{\sqrt{n}}}$ steps for some polynomial function $p(n)$. Similar to the argumentation from the proof of lemma \ref{lemma:analyticalInvert}, we see that $T_{\sigma}$ is asymptotically equal to the integral of $p(x)+2^{\sqrt{x}}$ given by $F(x)=q(x)+c\sqrt{x}2^{\sqrt{x}}$ where $q$ is some polynomial (because $\int 2^{\sqrt{x}}\d x=\frac{2^{\sqrt{x}+1}(\sqrt{x} \ln(2)-1)}{\ln^2(2)}$). Since $q$ is a polynomial, this function grows still intermediately which one can easily verify by calculating $\lim_{x\rightarrow \infty} \frac{\ln(F(x))}{\ln(x)}$ and $\lim_{x\rightarrow \infty} \frac{\ln(F(x))}{x}$. By theorem \ref{theorem:slowness} and lemma \ref{lemma:analyticalInvert}, there is a multiway system $M_7$ which has a growth function asymptotically equal to $F^{-1}(x)$. Since $F$ grows intermediately, this system is in $C_{\text{invint}}$. Together with the previous paragraphs, this provides the proof for theorem \ref{theorem:classesPartition}.

\subsection{Multiway Growth Approximability and Undecidability}\label{section:consequencesComputability}

As we have shown the existence of systems with growth function $g_M$ in $\Theta(1)$, $\Theta(x^n)$, $\Theta(a^x)$, $\Theta(a^{\sqrt{x}})$, $\Theta(\sqrt{x})$ and $\Theta(\ln(x))$, it is possible to find a multiway growth function asymptotically equal to any combination of these functions using point-wise addition and discrete convolution. This works because asymptotic equivalence is preserved under addition, summation and multiplication (\cite{hil1} section 2.2) so one can take the appropriate multiway systems for the atomic parts of the functions expression and combine them using the multiway sum and product. All of the basic functions are monotonously increasing and addition as well as discrete convolution preserve this property (because all functions have values in $\R_{\geq 0}$). This proves the following corollary to theorem \ref{theorem:classesPartition}:

\begin{corollary}\label{corollary:multiwayArithmetic}
If some function $f:\R_{\geq 0}\rightarrow \R_{\geq 0}$ there is a function $g\in \Theta(f)$ expressible as a finite combination of the functions $x\mapsto c, x\mapsto x^n, x\mapsto a^x, x\mapsto a^{\sqrt{x}}, x\mapsto \sqrt{x}, x\mapsto \ln(x)$ ($x, a, c\in \N_+$) using the operations point-wise addition and discrete convolution, then $f$ is multiway-growth-approximable.
\end{corollary}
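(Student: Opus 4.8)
The plan is to build $g$ from the "atomic" multiway systems exhibited in the proof of theorem \ref{theorem:classesPartition} by induction on the structure of the given finite expression, using $\oplus$ and $\odot$ as the realisations of point-wise addition and discrete convolution. First I would invoke theorem \ref{theorem:classesPartition} (and the explicit constructions in section \ref{section:proofPartition}) to fix, for each of the six atomic function types $x\mapsto c$, $x\mapsto x^n$, $x\mapsto a^x$, $x\mapsto a^{\sqrt x}$, $x\mapsto \sqrt x$, $x\mapsto \ln(x)$, a concrete multiway system whose growth function $g_M$ satisfies $g_M\sim_\Theta$ the corresponding function: $M_2^N$ (or $M_1^N\odot M_2$) for constants, $M_3^N$ for $x^{N-1}$, $M_4^N$ for $N^x$, $M_5^N$ for $N^{\sqrt x}$, $M_6$ for $\sqrt x$, and the $\log_a$-system from section \ref{section:applicationSlowness} for $\ln(x)$. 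So the base cases of the induction are already done.

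Next I would handle the inductive step. Given a subexpression $E = E_1 \,\square\, E_2$ with $\square\in\{+,*\}$, by induction I have multiway systems $M_1,M_2$ with $g_{M_1}\sim_\Theta e_1$, $g_{M_2}\sim_\Theta e_2$ where $e_i$ is the function denoted by $E_i$. I would first replace $M_1$ and $M_2$ by rule-independent copies: by lemma \ref{lemma:multiway2Symbol} each has an isomorphic binary-alphabet model, and by relabelling one alphabet we may assume the alphabets are disjoint, hence rule independent, without changing either growth function. Then equation \ref{eq:sumSystem} gives $g_{M_1\oplus M_2} = g_{M_1}+g_{M_2}$ (off by the single value at $n=1$, which is asymptotically irrelevant) and equation \ref{eq:productSystem} gives $g_{M_1\odot M_2} = \sum_{k=0}^n g_{M_1}(k)g_{M_2}(n-k)$, matching the two allowed operations exactly. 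To push the $\sim_\Theta$ relation through, I would cite that asymptotic equivalence is preserved under point-wise addition, under summation and under (discrete) convolution — the reference given is \cite{hil1} section 2.2 — so from $g_{M_i}\sim_\Theta e_i$ we get $g_{M_1\oplus M_2}\sim_\Theta e_1+e_2$ and $g_{M_1\odot M_2}\sim_\Theta e_1 * e_2$. Iterating over the finite parse tree of the expression yields a single multiway system $M$ with $g_M\sim_\Theta g$.

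Finally I would close the gap between $g_M\sim_\Theta g$ and the definition of multiway-growth-approximable, which asks for $f\sim_\Theta L_{\N}(g_M)$. Since all six atomic functions are continuous, strictly increasing and unbounded on $\R_{\geq 0}$, and since point-wise addition and discrete convolution of non-negative, monotonically increasing sequences again yield a monotonically increasing sequence, the combined function $g$ is monotone; by lemma \ref{lemma:analyticalInvert}-style reasoning (or directly, since a monotone sequence and its linear interpolation differ by at most a consecutive-value gap) we have $g\sim_\Theta L_{\N_+}(g_M)$, and transitivity of $\sim_\Theta$ gives $f\sim_\Theta L_{\N}(g_M)$ as required, so $f$ is multiway-growth-approximable.

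The main obstacle I expect is not any single computation but the bookkeeping around rule independence and the $n=1$ boundary correction: one must make sure that repeatedly applying lemma \ref{lemma:multiway2Symbol} and relabelling does not silently break rule independence of earlier-combined factors, i.e. that the whole finite collection of leaf systems can be made simultaneously pairwise rule independent (e.g. by giving each leaf a disjoint binary-coded alphabet from the start), and that the single-point discrepancies introduced by $g_{M_1\oplus M_2}(1):=1$ at each of the finitely many $\oplus$-nodes genuinely wash out asymptotically — both are routine but need to be stated carefully rather than hand-waved.
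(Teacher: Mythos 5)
Your proposal is correct and follows essentially the same route as the paper: the paragraph preceding the corollary proves it by exactly this argument --- fix the atomic systems already constructed for theorem \ref{theorem:classesPartition}, combine them with $\oplus$ and $\odot$ (which realise point-wise addition and discrete convolution of growth functions), invoke preservation of $\sim_\Theta$ under addition, summation and multiplication, and use monotonicity of the resulting functions to justify the interpolation step. Your extra bookkeeping about simultaneous pairwise rule independence of the leaf systems and the $n=1$ boundary value only makes explicit what the paper leaves implicit.
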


The corollary provides further insights into the multiway-growth-approximability and multiway-growth-computability of functions. Let $\mathcal{M}_C$ and $\mathcal{M}_A$ be the sets of multiway-growth-computable and multiway-growth-approximable functions respectively. Note that $\mathcal{M}_C\subset \mathcal{M}_A$ and $\mathcal{M}_C$ is countably infinite, since the set of multiway systems is countably infinite\footnote{This follows from the fact that every multiway system can be reduced to use only the alphabet $\{A,B\}$ (lemma \ref{lemma:multiway2Symbol}) and then be expressed using the symbols $``A",``B",``\rightarrow",``\{",``\}",``(",``)"$ and $``,"$ by writing down its signature. However, the set of words over this finite 8-symbol alphabet is countably infinite.} whereas $\mathcal{M}_A$ is uncountably infinite because it contains, for example, all constant functions from $\R_{\geq 0}$ to $\R_{\geq 0}$. Using the above corollary, we notice that a variety of classes of functions are multiway-growth-approximable:
\begin{enumerate}
    \item For any function $f\in \mathcal{M}_C$, the function $\lambda \cdot f$ where $\lambda \in \N$ is a constant is also in $\mathcal{M}_C$.
    \item $\mathcal{M}_A$ contains all polynomials with natural coefficients because they can be build by adding powers $x^n$ multiplied by constants. 
    \item For all polynomials $p(x)$ in $\mathcal{M}_A$, $\mathcal{M}_A$ contains functions asymptotically equal to $x\mapsto \ln(x)\cdot p(x)$.
\end{enumerate}
Consequence 3 takes a little longer to prove but will be worth explaining in detail because similar methods may be used to generalise it, for example showing that $\mathcal{M}_A$ contains polylogarithmic functions. 

\begin{proof}
Consider the product system of a polynomial and a logarithmic system. By equation \ref{eq:productSystem} and the fact that asymptotic equivalence is preserved under addition and multiplication, the system's growth function is asymptotically equal to \begin{equation}
     n^a+\ln(n)+\sum_{k=1}^n k^a \ln(n-k) = \mathcal{O}(n^a) + \sum_{k=1}^{n-1} h(k).
\end{equation}
Consider some fixed input $n$ for $g$. Let $B_k$ be the $k$-th Bernoulli number, $B_m(x)$ the periodic continuation of the $m$-th Bernoulli polynomial and choose $m=n+1$. Using the Euler-Maclaurin Formula (\cite{storchWiebe}\,pp.\,501\,ff.), we obtain
\begin{equation}
    \sum_{k=1}^{n-1} h(k) = \int_{1}^{n-1} h(x)\d x  + \frac{h(n)+h(1)}{2} + S_m + R_m.
\end{equation}
First of all, $\frac{h(n)+h(1)}{2}$ simply evaluates to $\frac{\ln(n)}{2}\in \mathcal{O(1)}$. Next, it is easy to show inductively that the $k$-th derivative of $h$ is of the form $h^{(k)}(x)=x^{a-k}(c\ln(n-x)+p(x))$ for a real constant $c$ and a rational function $p(x)\in \mathcal{O}(1)$ as long as $k\geq a$. The $a+1$-th derivative is some rational function in $\mathcal{O}(\frac{1}{x})$. Thus, the remainder sum and integral satisfy
\begin{align}
    S_m &=\sum_{k=2}^{n+1} \frac{(-1)^k B_{k}}{k!}(h^{(k)}(n) - h^{(k)}(1)) \sim_\Theta \sum_{k=2}^{n+1} x^{a-k} \ln(n-x) \in \mathcal{O}(x^{a-2} \ln(n-x)) \\
    R_m &= \frac{(-1)^{n+2}}{(n+1)!}\int_1^{n-1} f^{(n+1)}(x) B_{m+1}(x) \d x \sim_\Theta \int_1^{n-1} \mathcal{O}(\frac{1}{x})\d x \in \mathcal{O}(\ln(x)).
\end{align}
Hence, $\sum_{k=1}^{n-1} h(k) \sim_\Theta \int_1^{n-1} h(x)\d x$. Using the fact that $\diffrac{}{x}(x+(n-x)\ln(n-x)) = -\ln(n-x)$ and applying integration by parts, we obtain
\begin{align}
    I_a &=
    \int x^a \ln(n-x)\d x = -x^a(x+(n-x)\ln(n-x)) 
    + \int (x+(n-x)\ln(n-x))ax^{a-1}\d x = T + J \notag \\
    J &= a\int x^a \d x + na \int x^{a-1}\ln(n-x) \d x - a\int x^{a} \ln(n-x) \d x = \frac{a}{a+1}x^{a+1} + na I_{a-1} - a I_a \notag \\ 
    &\Rightarrow (a+1)I_a = -x^a(x+(n-x)\ln(n-x)) + \frac{a}{a+1}x^{a+1} + na I_{a-1} + C \notag \\
    &\Rightarrow \int_1^{n-1} x^a \ln(n-x)\d x = I_a(n-1)-I_a(1) \\
    &=\frac{1}{a+1}(-(n-1)^a(n-1+1\cdot 0) + \frac{a}{a+1}(n-1)^{a+1} + n a I_{a-1}(n-1) +\notag \\
    &\quad 1^a(1+(n-1)\ln(n-1)) - \frac{a}{a+1}1^{a+1} - na I_{a-1}(1) ) \notag\\
    &=\frac{1}{a+1}\left( -\frac{(n-1)^{a+1}}{a+1} +\frac{1}{a+1} +(n-1)\ln(n-1) + na( I_{a-1}(n-1) -I_{a-1}(1))  \right)\notag\\
    &=\frac{1}{a+1} \left( \frac{1-(n-1)^{a+1}}{a+1} +(n-1)\ln(n-1) + na\int_1^{n-1} x^{a-1}\ln(n-x)\d x \right).
\end{align}
For $a=1$ we know 
\begin{align}
    \int_1^{n-1} x^{a-1}\ln(n-x)\d x &= [-(x+(n-x)\ln(n-x))]_1^{n-1} = -(n-1)+(1+(n-1)\ln(n-1))\notag\\
    &= -n+2+n\ln(n-1) -\ln(n-1) \in \Theta(n\ln(n-1))\notag \\ 
    \Rightarrow \int_1^{n-1} x^a\ln(n-x)\d x &= \frac{1}{2}(\frac{1-(n-1)^2}{2} +(n-1)\ln(n-1) + n\Theta(n\ln(n-1)) \notag \\
    &\Rightarrow \int_1^{n-1} x^1\ln(n-x)\d x \in \Theta(n^2\ln(n-1)).
\end{align}
Now, by assuming $\int_1^{n-1} x^a\ln(n-x)\d x \in \Theta(n^{a+1}\ln(n-1))$
we have \begin{align}
    \int_1^{n-1} x^a\ln(n-x)\d x &= \frac{1}{a+1}\left( \frac{1-(n-1)^{a+1}}{a+1} +(n-1)\ln(n-1) + na\int_1^{n-1} x^{a-1}\ln(n-x)\d x  \right) \notag \\
    & = \Theta(-(n-1)^{a+1}) + \Theta((n-1)\ln(n-1)) + an\Theta(n^{a}\ln(n-1)) \notag\\
    \Rightarrow \int_1^{n-1} x^a\ln(n-x)\d x &\in \Theta(n^{a+1}\ln(n-1))
\end{align}
proving $\int_1^{n-1} x^a\ln(n-x)\d x \in \Theta(n^{a+1}\ln(n-1))$ inductively. Finally, this means the growth function of our multiway system is asymptotically equal to \begin{equation}
     \mathcal{O}(n^a) + \sum_{k=1}^{n-1} h(k) \sim_\Theta n^{a+1}\ln(n-1) \sim_\Theta n^{a+1}\ln(n)
\end{equation}
proving in fact, that for every $x^a$-system, there exists a $x^a\ln(x)$-system.
\end{proof}

From these three properties, we might already conclude that a significant number of functions usually investigated in mathematical analysis can be approximated by multiway growth functions. This demonstrates the computational diversity of multiway growth functions which is neither a trivial nor an expected property. \par 
We know that multiway systems themselves are capable of universal computation as they can emulate Turing machines but it is unknown so far which computations their growth functions are able to perform. Thus, it may be considered remarkable that many common mathematical functions are expressible (i.\,e. approximable) as multiway growth functions. Conversely, this could later allow to make statements about a multiway system's complexity or structure by considering only its growth function. Maybe, multiway systems can even be used to make general statements about the mathematical functions themselves since they give a new way of looking at them. \par 
Notice however that multiway growth functions are strictly less powerful than computable functions in general due to the following lemma:
 
\begin{lemma}
Every multiway growth function is primitively recursive.
\end{lemma}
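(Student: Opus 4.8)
The plan is to describe, for a fixed multiway system $M=(R,s_{\text{init}},\Sigma)$, an explicit algorithm that computes the state-set of generation $n$ together with the full set of states already seen in generations $1,\dots,n$, and to observe that this algorithm can be implemented by primitive recursion on $n$. The key point is that everything relevant about generation $n$ is determined by a finite amount of bounded data: by lemma \ref{lemma:expBound} the maximal string length in generation $n$ is at most linear in $n$, so the set of all states ever reachable within $n$ generations is a subset of $\Sigma^{\leq cn}$ for a constant $c$ depending only on $R$, and this set can be coded as a single natural number of size bounded by a primitive recursive function of $n$.

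First I would set up an encoding: fix a bijection between $\Sigma^*$ and $\N$ (e.g. base-$|\Sigma|$ numeration), and encode finite sets of strings as, say, products of primes indexed by the codes of their elements, or as sorted lists via a standard primitive recursive pairing. Then I would define the one-step successor operation: given (the code of) a finite set $S$ of strings, the set $\mathrm{Succ}(S)$ of all strings obtainable by applying some rule $r_i\to t_i$ at some position in some $w\in S$ is computable, and since substring search, replacement, and concatenation are all primitive recursive operations on string codes, and the iteration over the finitely many rules and the (boundedly many) positions in each $w$ is bounded, $\mathrm{Succ}$ is primitive recursive. Next I would define by primitive recursion the sequence $V_n$ = code of the set of all states appearing in generations $\leq n$: set $V_1=\{s_{\text{init}}\}$ and $V_{n+1}=V_n\cup\mathrm{Succ}(V_n)$. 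The state-set of generation $n+1$ is then $V_{n+1}\setminus V_n$, and $g_M(n+1)$ is its cardinality — cardinality of a coded finite set is primitive recursive — with $g_M(1)=1$.

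The one subtlety is the bound needed to make the recursion legitimately primitive recursive rather than merely $\mu$-recursive: the recursion above builds $V_n$ from $V_{n-1}$ only, so it is already in primitive recursive form provided each $V_n$ is genuinely finite and its code is bounded. Finiteness is clear since $V_n\subseteq\Sigma^{\leq cn}$, and the code bound follows because $|\Sigma^{\leq cn}|$ is primitive recursive in $n$ and our encoding of a subset of a set of that size has code bounded by a primitive recursive function of that size; hence $V_n$ is dominated by a primitive recursive function, which is exactly the side condition allowing bounded primitive recursion. I expect this bounding argument — carefully invoking lemma \ref{lemma:expBound} to cap string lengths, and then checking that "code of an arbitrary subset of $\Sigma^{\leq cn}$" is primitive-recursively bounded in $n$ under the chosen encoding — to be the main (though routine) obstacle; the rest is assembling standard closure properties of the primitive recursive functions under composition, bounded search, and primitive recursion.
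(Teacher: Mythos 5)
Your proposal is correct and follows essentially the same route as the paper: simulate the system generation by generation, maintain the set of all previously seen states, extract the genuinely new states of each generation, and return their count, with primitive recursiveness following from the fact that each step is a bounded manipulation of finitely coded data. The paper phrases this via two lists and an appeal to LOOP/DO-loop programs computing exactly the primitive recursive functions, whereas you make the encoding and bounding explicit, but the underlying argument is the same.
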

\begin{proof}
Given some multiway system, one can compute the growth function $g(n)$ in the following way: One uses two lists $S_n$ and $T_n$ ($S_0=\{\}, T_0=\{s\}$) to store all states the system has had until generation $n$ and all states of generation $n$. In every iteration, the length of $T_n$ is the value of $g(n)$. In the $n+1$-th step, the algorithm iterates through all rules and searches through the characters of all strings in $T_n$ to check if any rule applies. If this happens, the string with some part replaced will be added to $T_{n+1}$ if it is not in $S_n$ or $T_{n+1}$ already (only new states get added). After all possible such operations are done, $T_{n+1}$ contains all new states of the system and we set $S_{n+1}$ to be $S_n \cup T_{n+1}$. When repeated, this process simulates the system's evolution and thus yields the correct $g(n)$. All loops required can be implemented using DO-loops. If strings are treated as lists of characters (numbers), the string replacement and substring matching operations can be implemented using only list insertions, deletions and searches through the list. No further data structures and no comparisons are needed. Hence, the entire program is primitively recursive by \cite{mey1}. 
\end{proof}

As primitive recursive functions still contain some superexponential functions, multiway growth functions are also strictly weaker than those. Still, multiway growth functions can approximate a lot of elementary functions so they might even be stronger than elementary arithmetic (EA) while probably weaker than EA+ (EA and the axiom that the superexponential function is total). It remains an open question to find out how $\mathcal{M}_A$ or $\mathcal{M}_C$ can be characterized elegantly.\par
Besides their use for investigating multiway-growth-computability, the tools obtained in this paper allow us to elegantly prove some statements about undecidability of multiway-growth-related questions. For example, deciding if a given multiway system is finite is undecidable as a system could simulate some arbitrary Turing machine and have zero new states when the machine halts, reducing the question to the halting problem. Additionally, even for an infinite multiway system, deciding whether its growth function is equal to some conjectured function is undecidable in general since the system could be the sum of some usual system and a Turing machine emulator which becomes, for example, exponentially growing after the machine halts but has only one state before that. This observation makes it especially important in the context of the Wolfram Physics Project to not only use empirical (computed) observations about a system's growth function, rate or class but also take into account the system's rule when conjecturing about it.\par
This \enquote{trick} of integrating a system which grows very differently once some Turing machine halts into some larger system was also the strategy used to construct the strongly oscillating system in figure \ref{fig:notSameThetaBounds}. Specifically, we first construct a system similar to $M^N_5$ from section \ref{section:proofPartition}, defined as $M_8^{N,M}=(\bigcup_{i=1,\dots,N} \{``R A"\rightarrow ``x_i R"\},``A^M",\{``R", A,x_1,\dots,x_N\}$. When started on a string of $M$ $A$s, $R$ moves to the right while replacing the $A$ it just moved over by any of the $N$ $x_i$. Hence, the states of the system after $N$ steps are precisely the words of length $N$ over $\{x_1,\dots,x_N\}$ since $R$ moved $N$ steps to the right. When $R$ reaches the right end, the system terminates. Thus, this system has $M^N$ new states after $M$ steps and $0$ after that.\par  
The oscillating system from figure \ref{fig:notSameThetaBounds} is now the sum of a linearly growing system and a version the logarithm system from section \ref{section:applicationSlowness}. However, the customized logarithm system does not increase its number of states after $\mathcal{T}_{\text{exp}}$ halts but just triggers an instance of $M^{4,N}_8$ on the string of ones $\mathcal{T}_{\text{exp}}$ has written. Since after roughly $2^n$ steps, the $n$-th version of $\mathcal{T}_{\text{exp}}$ halts, $n$ ones are written on the tape so about $n$ steps later, the system has $4^n$ states for one generation and then \enquote{collapses} into one state again. The smallest monotonically increasing upper bound for this growth function is one that stays $4^n=(2^n)^2$ for roughly $2^n$ steps and then increases to $4^{n+1}$. Denote this sequence $\overline{a}(n)$. Now $\log_2{\sqrt{\overline{a}(n)}}$ is approximately the sequence \enquote{$n$ occurs $2^n$ times} which is in $\Theta(\log_2(n))$. Thus, $\log_2{\sqrt{\overline{a}(n)}}\in \Theta(\log_2(n))\iff \sqrt{\overline{a}(n)}\in \Theta(n)\iff \overline{a}(n)\in \Theta(n^2)$. Since the whole system consists of this and an added linear system, the total growth rate has lower and upper tight bounds of $\Theta(x)$ and $\Theta(x^2)$ respectively. If one generalizes the methods used in this paper, they might be used perform a kind of \enquote{multiway system engineering} i.\,e. they could help to construct systems for specific purposes.

\section{Concluding Remarks}\label{section:conclusion}

Our main results may be summarised as follows:
\begin{enumerate}
    \item This paper introduced the formalisms of multiway growth functions, rates and classes which have large potential for mathematically investigating multiway systems.
    \item In theorem \ref{theorem:slowness}, we showed that multiway systems can grow slower than all computable functions while never exceeding exponential functions. Not only is this asymmetry very suspicious of a more general underlying principle but the theorem also demonstrates that multiway growth functions cannot be trivial and must have some computational complexity associated with them.
    \item This gets supported by the fact that multiway systems are capable of simulating (approximating) quite an extend of known functions while being contained in the set of primitive recursive functions. It remains entirely unclear, which status the sets $\mathcal{M}_A$ and $\mathcal{M}_C$ have among other well-known sets of functions but our theorem \ref{theorem:classesPartition} starts a characterisation by subdividing them into nontrivial classes. 
    \item Additionally, we have exemplarily demonstrated some basic systems which can be combined to yield systems giving a wanted growth function. This \enquote{multiway engineering} could be generalised and turn out to be useful for getting intuition about multiway systems as well as potentially constructing (counter-)examples to empirically grounded conjectures existing in the Wolfram Physics Project.
    \item Another very interesting foundation for further research are the arithmetic-like operations on multiway systems which we have shown to equip the set of all multiway systems with an almost-semiring structure. 
\end{enumerate}
These results could be applied in various ways:
\begin{enumerate}
    \item The most obvious next step is generalising our theorems to hypergraphs to make them meaningful for the actual Wolfram Model. However, this should not be difficult to do. Generalising the algebraic structure of the set of multiway systems with the operations we defined seems much more interesting, especially for hypergraph-based multiway systems since it becomes relevant in the theoretical physics context of multiway systems as there are various recent findings about a connection of Wolfram Models, category theory and quantum mechanics \cite{zxPaper}. Additionally, more general forms of our results could be obtained in context of the connection between multiway systems and the foundations of homotopy type theory \cite{xerxesSummerSchool}.
    \item More specifically, the changes in structure of branchial space (see glossary of \cite{zxPaper}) over time and thus the growth functions of multiway systems are related to the states of quantum systems and measurements of these \cite{gor1}. Our upper bound on multiway system growth rate may be used to give upper bounds on entanglement speed and maximum possible information entropy in Wolfram Models. Similarly, the fact that slowness of multiway growth rates is \enquote{unbounded} could be used to investigate very slowly developing quantum systems which might be especially stable and hence useful for quantum computation, but that is mere speculation. More straightforward is the application of multiway growth classes to estimating the complexity of quantum computational algorithms or make predictions about quantum supremacy using the Wolfram model (c.\,f. \cite{summerSchoolShor} and \cite{zxPaper}).
    \item Another potential physical application is early-universe cosmology. Since there seems to be an empirical connection \cite{summerSchoolManojna}, formalising which would also be an important project, between the growth rates of physical and branchial space i.\,e. in our context string-length (or hypergraph size) and multiway growth functions, our boundaries on growth functions and especially the \enquote{unboundedness} of its slowness may be related to the physical expansion of the early universe in the view of the Wolfram Model's formalism. 
    \item Building on our classification scheme, for example by allowing combined classes like \enquote{polynomial times inverse intermediate}-functions, trying to quantify and find regularity in the oscillations of a system's growth function (consider the system in figure \ref{fig:tightBounds} as an example) or analysing strongly oscillating systems like the one depicted in figure \ref{fig:notSameThetaBounds}. 
    \item Related to the regularity of multiway systems, one can ask whether determining which growth rate (or growth class) a given multiway system has from its rules is possible in general. Maybe, methods from automated theorem proving could be used for this. Another question is whether there are multiway systems which show no regularity in their growth functions. The latter would be important for \enquote{A New Kind of Science}-related research since a considerable part of Wolfram's work concerns the complexity and irregularity of such computational systems \cite{wol2}.
\end{enumerate}
The preceding points are just a few possibilities showing how much potential the investigation of multiway systems and their growth functions, rates and classes has. This paper marks only the beginning of many further research projects. However, while we lay a very basic foundation, we succeed in doing so as our results are formally proven and computationally applicable.

\section{Acknowledgements}

First and foremost, I would like to thank Stephen Wolfram for suggesting this project and giving important advice concerning the general directions of research and methodology. Speaking of methodology, I have to mention my mentor Xerxes D. Arsiwalla for whose guidance and feedback I am very grateful. Many thanks also to Jonathan Gorard for repeatedly proof-reading this paper and assisting the proof that multiway systems form a semiring with weakened annihilation property (section \ref{sec:multiwayArithmetic}). Additionally, I highly appreciated the encouragement and support of Peter Barendse who helped to kickstart this research project at the Wolfram Summer Camp 2020 and Paul Siewert whose critique and explanations were very useful for formalising the proofs presented in this paper. 
\newpage 
\printbibliography

\typeout{get arXiv to do 4 passes: Label(s) may have changed. Rerun}
\end{document}